\colorlet{RED}{red}
\colorlet{BLUE}{blue}
\definecolor{background-color}{gray}{0.98}
\tikzstyle{startstop} = [rectangle, rounded corners, text centered, draw=black, fill=none]
\tikzstyle{io} = [trapezium, trapezium left angle=70, trapezium right angle=110, text centered, draw=black, fill=blue!30]
\tikzstyle{process} = [rectangle, text centered, draw=black, fill=none]
\tikzstyle{decision} = [diamond, aspect=1.5, text centered, draw=black, fill=none]
\tikzstyle{arrow} = [thick,->,>=stealth]
\newtheorem{theorem}{Theorem}
\newtheorem{lemma}[theorem]{Lemma}
\newtheorem{corollary}[theorem]{Corollary}
\begin{document}

\title{Quantum Simulation of Boson-Related Hamiltonians: Techniques, Effective Hamiltonian Construction, and Error Analysis}

\author{Bo Peng} 
\email{peng398@pnnl.gov}
\affiliation{Physical Sciences and Computational Division, Pacific Northwest National Laboratory, Richland, WA 99354, United States of America}

\author{Yuan Su}
\email{yuan.su@microsoft.com}
\affiliation{Microsoft, Azure Quantum, Redmond, Washington 98052, USA}

\author{Daniel Claudino}
\affiliation{Computational Sciences and Engineering Division, Oak Ridge National\ Laboratory, Oak Ridge, TN, 37831, USA}

\author{Karol Kowalski} 
\affiliation{Physical Sciences and Computational Division, Pacific Northwest National Laboratory, Richland, WA 99354, United States of America}

\author{Guang Hao Low}
\affiliation{Microsoft, Azure Quantum, Redmond, Washington 98052, USA}

\author{Martin Roetteler}
\affiliation{Microsoft, Azure Quantum, Redmond, Washington 98052, USA}

\date{\today}

\begin{abstract}
Elementary quantum mechanics proposes that a closed physical system consistently evolves in a reversible manner. However, control and readout necessitate the coupling of the quantum system to the external environment, subjecting it to relaxation and decoherence. Consequently, system-environment interactions are indispensable for simulating physically significant theories. A broad spectrum of physical systems in condensed-matter and high-energy physics, vibrational spectroscopy, and circuit and cavity QED necessitates the incorporation of bosonic degrees of freedom, such as phonons, photons, and gluons, into optimized fermion algorithms for near-future quantum simulations. In particular, when a quantum system is surrounded by an external environment, its basic physics can usually be simplified to a spin or fermionic system interacting with bosonic modes. Nevertheless, troublesome factors such as the magnitude of the bosonic degrees of freedom typically complicate the direct quantum simulation of these interacting models, necessitating the consideration of a comprehensive plan. This strategy should specifically include a suitable fermion/boson-to-qubit mapping scheme to encode sufficiently large yet manageable bosonic modes, and a method for truncating and/or downfolding the Hamiltonian to the defined subspace for performing an approximate but highly accurate simulation, guided by rigorous error analysis. In this pedagogical tutorial review, we aim to provide such an exhaustive strategy, focusing on encoding and simulating certain bosonic-related model Hamiltonians, inclusive of their static properties and time evolutions. Specifically, we emphasize two aspects: (1) the discussion of recently developed quantum algorithms for these interacting models and the construction of effective Hamiltonians, and (2) a detailed analysis regarding a tightened error bound for truncating the bosonic modes for a class of fermion-boson interacting Hamiltonians.
\end{abstract}

\maketitle


\section{Introduction}



Understanding the physics of open quantum systems, which involve crucial interactions between a quantum system and its environment, is a highly non-trivial and sometimes challenging problem~\cite{Breuer2002}. The past few decades have seen monumental advances in classical computing for simulating quantum systems in various fields, including quantum mechanics, molecular dynamics, quantum chemistry, condensed matter physics, and quantum field theory~\cite{Levine2009,Tuckerman2010,Atkins2010,Kittel2005,Peskin1995}. However, despite this progress, the rapidly increasing computational cost makes it unrealistic to fully treat quantum many-body effects, particularly when associated with strong system-environment interactions, using classical computing~\cite{Nielsen2010,Lloyd1996}.

Specifically, common approaches such as the Lindblad master equation~\cite{gorini1976completely}, Redfield equation~\cite{Breuer2002}, and quantum Monte Carlo methods~\cite{foulkes2001quantum}, which are often employed to treat weak interactions in the open quantum systems, rely on certain approximations that may not be valid in all situations or for all types of systems. For strong system-environment interactions, one would usually resort to methods like non-Markovian quantum master equations~\cite{breuer2009measure}, exact diagonalization methods~\cite{white2004real}, path integral methods~\cite{feynman1963theory}, hierarchical equations of motion~\cite{tanimura2006hierarchical}, tensor network approaches~\cite{schollwock2011density}, and variational approaches~\cite{tomasi2005quantum}. Nevertheless, these methods can be computationally demanding and may require additional approximations or simplifications. Consequently, there is a growing need for novel techniques and algorithms that can efficiently simulate quantum systems with both weak and strong system-environment interactions. 

This dilemma can be solved using computational devices that build upon the laws of quantum mechanics themselves. Indeed, the efficient simulation of quantum systems is one of the main motivations for Feynman and others to propose the idea of quantum computers~\cite{Fey82,Manin80}. Overall the years, many quantum simulation algorithms have been developed based on product formulas~\cite{Lloyd1996,BACS05} as well as more advanced techniques~\cite{berry2015simulating,Low2019hamiltonian} that can be deployed on a scalable quantum computer.
Quantum simulations have also recently emerged as a promising playground for noisy intermediate-scale quantum (NISQ) demonstrations of quantum advantage compared to classical computing~\cite{Preskill2018,Barreiro_2011,Garc_a_P_rez_2020,PhysRevB.102.125112,rost2021demonstrating,PRXQuantum.3.010320}.

Nevertheless, quantum simulations of open quantum systems with strong system-environment interaction are not straightforward and require careful consideration of various aspects (Figure \ref{fig:problem}). For example, the size and complexity of the system, as well as the number of environmental degrees of freedom, can significantly impact these considerations~\cite{Nielsen2010,schollwock2011density,georgescu2014quantum,RevModPhys.93.015008}. Furthermore, the simulations on the NISQ devices also requires robust error mitigation/correction and software infrastructure~\cite{gottesman1997stabilizer,Corcoles_2015}. In the context of these multifaceted considerations, it becomes imperative to direct concerted attention towards model Hamiltonian selection, quantum algorithms, and efficient pre-processing and post-processing steps. These areas form the crucial foundation that directly impacts the efficiency, accuracy, and practical applicability of the simulations. 

\begin{figure}
    \centering
    \includegraphics[width=\linewidth]{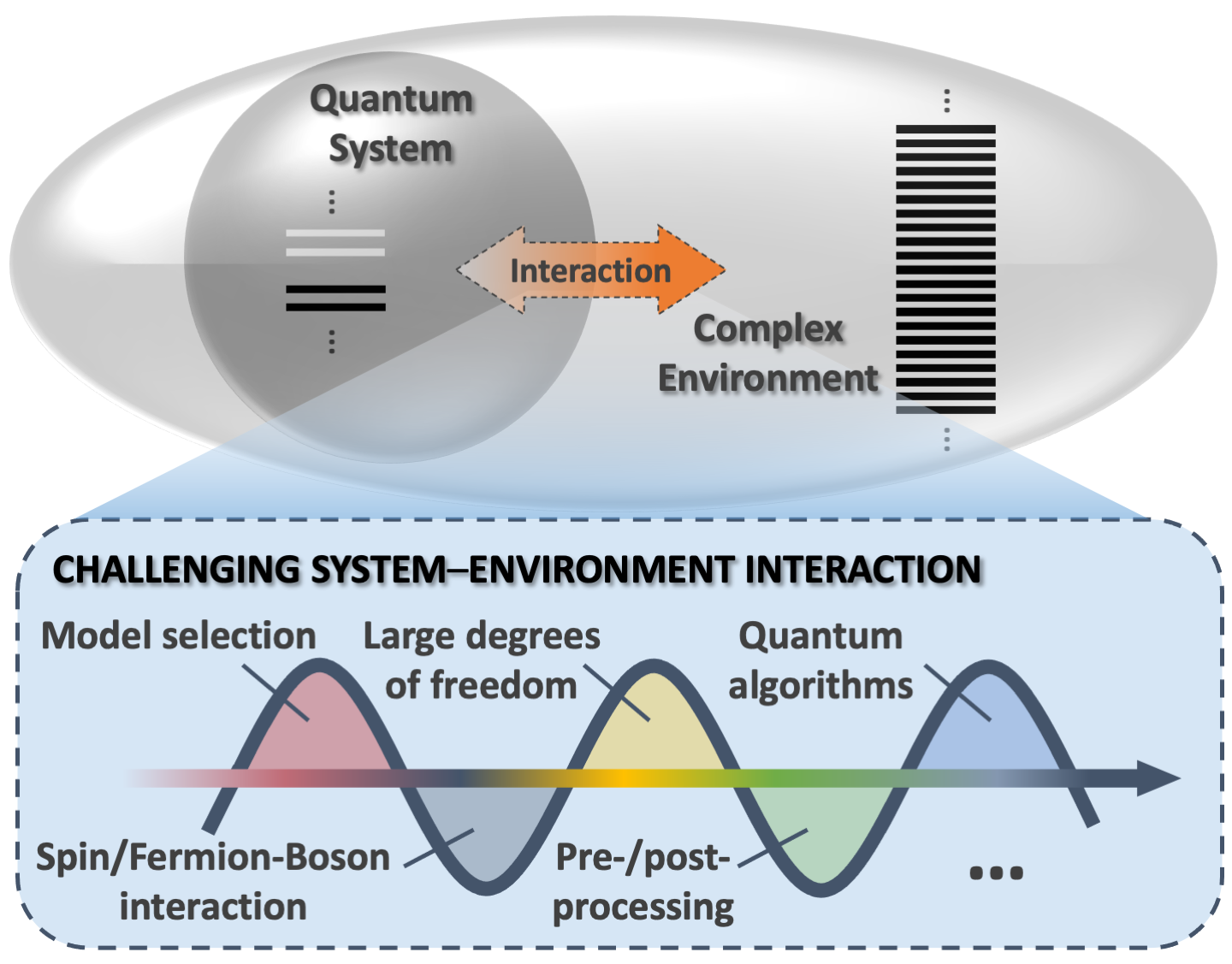}
    \caption{System–environment interactions in an open quantum system comprising a quantum system in a complex environment can pose a challenging problem. }
    \label{fig:problem}
\end{figure}

Regarding the choice of model Hamiltonian, it determines the level of abstraction and the trade-off between computational efficiency and the accuracy of the representation of the physical system~\cite{busemeyer2012quantum,RevModPhys.76.1267}. A suitable model can capture essential physics while still being amenable to efficient simulation on quantum hardware. For example, decoherence or measurement is often said to cause a system to become entangled with its environment~\cite{weiss2021quantum}. Such entanglement can often be studied through the simple spin-boson model~\cite{Leggett1987}, where the system-environment interaction is abstracted through a finite number of qubits interacting with a collection of harmonic oscillators. Here, the model Hamiltonians  are able to describe ultra-strong coupling regime that has been investigated experimentally in many scenarios, such as circuit QED~\cite{PhysRevLett.105.237001,Forn_2017,Niemczyk_2010,Braumuller_2017,Yoshihara_2017,Langford_2017}, trapped ions~\cite{PhysRevX.8.021027}, photonic~\cite{PhysRevLett.108.163601} and semiconductor quantum systems~\cite{PhysRevLett.102.186402,Gunter_2009}.
When dealing with larger quantum systems, such as many-fermion systems coupled with the environment, as often encountered in quantum chemistry and condensed matter physics, the complexity of the systems requires more convoluted model Hamiltonians featuring explicit system-environment interactions for physically important theories~\cite{giuliani2008quantum}. For example, electron-phonon interactions~\cite{Giustino2017,ALLEN19831} in the context of nonrelativistic quantum field theory are usually characterized through a fermion-boson interacting term in the model Hamiltonians. This is because phonons, the most common bosonic excitations in solids, can usually interact with electrons to significantly renormalize the electrical and transport properties of materials or lead to dramatic effects, such as superconductivity or Jahn-Teller distortions~\cite{mahan2013many,fetter2012quantum}. Similar Hamiltonians with explicit system-environment treatment can also be employed to address the interaction of electrons with other bosonic collective excitations in solids, such as spin, orbital, and charge.

In addition to model selection, the choice of quantum algorithms, together with efficient pre-/post-processing techniques, is also crucial to harness the full potential of quantum computing in simulating open quantum systems~\cite{Nielsen2010}. Efficient algorithms should lead to significant speed-ups over classical approaches, especially for problems involving strong system-environment interactions. For close quantum systems, quantum algorithms such as the Quantum Phase Estimation (QPE)~\cite{kitaev1995quantum} and the Variational Quantum Eigensolver (VQE)~\cite{Peruzzo_2014}, as well as their numerous variants, have often been applied to study quantum systems and estimate their properties, including energy levels and ground states. However, the development of algorithms specifically tailored to handle strong system-environment interactions in open quantum systems remains an active area of research 
(see Refs. \citenum{Huh_2014,PhysRevResearch.2.023026,PhysRevLett.125.010501,Yuan2019theoryofvariational,liu2020solving,Haug_2022,Rakita_2020} for some recent developments in NISQ algorithms and Ref. \citenum{RevModPhys.94.015004} for a recent review, and see Refs.~\citenum{cleve_et_al,childs2017efficient} and papers citing and cited by these work for quantum simulation algorithms for scalable quantum computers). 
Generally speaking, the algorithm being developed in this direction should take into account the unique characteristics of open quantum systems, such as their non-Markovian dynamics~\cite{breuer2009measure} and the interplay between decoherence and entanglement~\cite{RevModPhys.76.1267,RevModPhys.81.865}.

As can be seen, when considering explicit system-environment interactions, bosonic modes or harmonic oscillators in different forms are typically employed. However, encoding bosonic modes is quite different from encoding fermions~\cite{PhysRevX.6.031006}. Direct encoding of bosonic modes requires a large number of qubits, especially in intermediate and strong boson-fermion coupling regimes where encoding the Hamiltonian becomes more convoluted and resource-demanding than pure fermionic ones~\cite{PhysRevX.8.041015}. Furthermore, these challenges directly hinder a formal error analysis of quantum simulations of these model Hamiltonians~\cite{RevModPhys.87.307}. Remarkably, despite recent advances in quantum hardware technology ushering in a new era of computing science~\cite{Preskill2018,doi:10.1021/acs.jctc.4c00544,crane2024hybridoscillatorqubitquantumprocessors}, a comprehensive yet efficient methodology is still lacking in this field. This gap serves as the primary motivation for this paper, aiming to provide a relevant recipe in the rapidly evolving field of quantum computing.

This tutorial reivew is organized as follows. In the 'Qubit Mappings and Quantum Simulation Techniques' section, we describe existing methods for mapping boson operators onto qubit systems and discuss techniques customized for the quantum simulation of various types of boson-related Hamiltonians. In the 'Effective Hamiltonian Construction' section, we present a detailed explanation of constructing the effective Hamiltonian through unitary or non-unitary flows, and discuss the design of new hybrid quantum algorithms that approach the ground state and construct the effective Hamiltonian of a bosonic quantum system. In the 'Error Analysis of Truncating Bosonic Mode' section, we focus on the error analysis of truncating bosonic modes and present mathematical derivations and results for a class of fermion-boson interacting Hamiltonians. In the 'Conclusion' section, we summarize the main findings and provide an outlook.


\section{Qubit Mappings and Quantum Simulation Techniques}\label{sec: operators}

Consider an $N$-site open quantum system, where the system-environment interaction is simply represented by each site interacting with a bosonic mode. For each bosonic mode, since there is no limit to the number of bosons, one will need to select a finite upper limit number of bosons, $N_b$, for every site in the practical simulation, i.e., site $i$ can be occupied by $n_i$ ($0\le n_i \le N_b$) bosons.
We use $\hat{b}^\dagger_i$ and $\hat{b}_i$ to denote the bosoinc creation and annihilation operators at site $i$, and $\hat{n}_i = \hat{b}^\dagger_i \hat{b}_i$ denotes the number operator. The bosonic commutation relations (in an infinite-dimentional Hilbert space) are 
\begin{equation}
    [\hat{b}^\dagger_i,\hat{b}^\dagger_j] = 0,~~[\hat{b}_i,\hat{b}_j] = 0,~~ [\hat{b}_i,\hat{b}^\dagger_j] = \delta_{ij}. \label{boson_comm}
\end{equation}
The bosonic state with $n_i$ ($0\le n_i \le N_b$) bosons at the $i$-th site is represented as a natural tensor product structure
\begin{equation}
    | n_1,\cdots,n_i,\cdots,n_N \rangle = | n_1 \rangle \otimes \cdots \otimes | n_i \rangle \otimes \cdots \otimes | n_{N} \rangle. \label{boson_state0} 
\end{equation}
When $\hat{b}^\dagger_i$, $\hat{b}_i$, and $\hat{n}_i$ are acting on the bosonic state, we have
\begin{align}
\hat{b}^\dagger_i | n_1,\cdots,n_i,\cdots,n_N \rangle &= ~~\sqrt{n_i+1} ~~| n_1,\cdots,n_i+1,\cdots,n_N \rangle, \notag \\
\hat{b}_i | n_1,\cdots,n_i,\cdots,n_N \rangle &= ~~\sqrt{n_i} ~~| n_1,\cdots,n_i-1,\cdots,n_N \rangle, \notag \\
\hat{n}_i | n_1,\cdots,n_i,\cdots,n_N \rangle &=  ~~n_i ~~| n_1,\cdots,n_i,\cdots,n_N \rangle. \label{boson_op1}
\end{align}
Similar to (\ref{boson_state0}), we can write the tensor product representations of $\hat{b}^\dagger_i$, $\hat{b}_i$ and $\hat{n}_i$,
\begin{align}
\hat{b}^\dagger_i &= \mathbf{I}_1 \otimes \cdots \otimes \tilde{b}^\dagger_i \otimes \cdots \otimes \mathbf{I}_N, \notag \\
\hat{b}_i &= \mathbf{I}_1 \otimes \cdots \otimes \tilde{b}_i \otimes \cdots \otimes \mathbf{I}_N, \notag \\
\hat{n}_i &= \mathbf{I}_1 \otimes \cdots \otimes \tilde{n}_i \otimes \cdots \otimes \mathbf{I}_N,  \label{boson_op2}
\end{align}
where $\mathbf{I}_i$ represents an identity operation at qubit $i$, and $\tilde{b}^\dagger_i$, $\tilde{b}_i$, and $\tilde{n}_i$ satisfy
\begin{align}
    \tilde{b}^\dagger_i ~ | n_i \rangle &= \sqrt{n_i + 1} ~ | n_i + 1 \rangle, \notag \\
    \tilde{b}_i ~ | n_i \rangle &= ~~~\sqrt{n_i} ~~~| n_i - 1 \rangle, \notag \\
    \tilde{n}_i ~ | n_i \rangle &= ~~~~~~n_i ~~~| n_i \rangle. \label{boson_op3}
\end{align}
Bosonic operators are usually employed to re-express the \textbf{Harmonic Oscillators}. 
\textcolor{black}{The harmonic oscillator is a fundamental concept in physics due to its ubiquitous presence in nature and its mathematical simplicity. Its importance stems from the fact that any system near a stable equilibrium can be approximated as a harmonic oscillator, making it applicable to diverse phenomena ranging from mechanical vibrations and molecular bonds to electromagnetic fields and solid-state physics. Furthermore, the quantum harmonic oscillator has analytical solutions, making it a cornerstone for understanding quantum mechanics and quantum field theory.
When studying environmental effects on quantum systems, the environment is often modeled as a collection of harmonic oscillators. Bosonic creation and annihilation operators provide a powerful and compact way to represent these oscillators. Their simple commutation relations greatly simplify calculations, and their connection to quantum field theory allows for the application of advanced theoretical techniques. Thus, bosonic operators are the standard tool for analyzing the interaction of a quantum system with an environment modeled as a bath of harmonic oscillators.}
Take the Hamiltonian of the linear harmonic oscillator as an example
\begin{align}
    \hat{H} = \frac{\hat{p}^2}{2m} + \frac{1}{2}m\omega^2 \hat{q}^2,
\end{align}
%
where $\hat{p}$ and $\hat{q}$ are the coordinate and momentum Hermitian operators satisfying the canonical commutation relation
\begin{align}
    [\hat{q},\hat{p}] = i\hbar.
\end{align}
The $\hat{q}$ and $\hat{p}$ operators are mapped to bosonic operators through
\begin{align}
    \hat{b}^\dagger &= \left( \begin{array}{cc}
    \sqrt{\frac{m\omega}{2\hbar}} & \frac{-i}{\sqrt{2m\omega \hbar}}  
    \end{array} \right) \left( \begin{array}{c}
    \hat{q}  \\ \hat{p}
    \end{array} \right), \notag \\
    \hat{b} &= \left( \begin{array}{cc}
    \sqrt{\frac{m\omega}{2\hbar}} & \frac{i}{\sqrt{2m\omega \hbar}}  
    \end{array} \right) \left( \begin{array}{c}
    \hat{q}  \\ \hat{p}
    \end{array} \right), \label{eq:harmonic_oscillator_map1}
\end{align}
or equivalently
\begin{align}
    \hat{q} = \sqrt{\frac{\hbar}{2m\omega}} (\hat{b} + \hat{b}^\dagger),~~
    \hat{p} = \sqrt{\frac{m\omega \hbar}{2}} \frac{(\hat{b} - \hat{b}^\dagger)}{i}, \label{eq:harmonic_oscillator_map2}
\end{align}
from which the Hamiltonian takes a simple diagonal form
\begin{align}
    \hat{H} = \hbar \omega (\hat{b}^\dagger \hat{b} + \frac{1}{2}).
\end{align}
One can straightforwardly extend the above mapping to the case of many harmonic oscillators. Consider a system of $N_h$ identical linear harmonic oscillators of mass $M$ and frequency $\omega$ with coordinates $\hat{\textbf{Q}} = \{\hat{Q}_j, j=1,\cdots,N_h\}$ and momenta $\hat{\textbf{P}} = \{\hat{P}_j, j=1,\cdots,N_h\}$ also satisfying the commutation relations
\begin{align}
    \left\{\begin{array}{l}\left[\hat{Q}_j,\hat{Q}_k\right] = \left[\hat{P}_j,\hat{P}_k\right] = 0, \\ \left[\hat{Q}_j,\hat{P}_k\right] = i\hbar \delta_{jk}
    \end{array}\right. ,~~j,k \in [1,\cdots,N_h]. \label{eq:harmonic_oscillator_comm}
\end{align}
The Hamiltonian then reads
\begin{align}
    \hat{H} = \sum_{j=1}^{N_h} \frac{\hat{P}_j^2}{2M_j} + \frac{1}{2} \sum_{j,k=1}^{N_h} V_{jk} \hat{Q}_j \hat{Q}_k
\end{align}
with matrix $\textbf{V}$ a positive semidefinite Hermitian matrix. Construct a matrix $\mathbf{v}$ with elements
\begin{align}
    v_{jk} = V_{jk}/\sqrt{M_jM_k},
\end{align}
and two vectors
\begin{align}
    \hat{\textbf{p}} = \{\hat{p}_j, j =1,\cdots,N_h\},~~
    \hat{\textbf{q}} = \{\hat{q}_j, j =1,\cdots,N_h\} 
\end{align}
with elements
\begin{align}
    \hat{q}_j = \sqrt{M_j}\hat{Q}_j,~~\hat{p}_j = \hat{P}_j/\sqrt{M_j},
    ~~\text{s.t.}~~[\hat{q}_j,\hat{p}_k] = i\hbar \delta_{jk},
\end{align}
the Hamiltonian can then be rewritten as
\begin{align}
    \hat{H} =\frac{1}{2}\hat{\textbf{p}}^T\hat{\textbf{p}} + \frac{1}{2} \hat{\textbf{q}}^T \mathbf{v} \hat{\textbf{q}}.
\end{align}
Since the matrix $\mathbf{V}$ is a Hermitian matrix that is positive semidefinite, so is the matrix $\mathbf{v}$, and we can always find a unitary matrix to diagonalize a Hermitian matrix (via the finite-dimensional spectral theorem) with non-negative eigenvalues
\begin{align}
    \mathbf{v} = \mathbf{U} \Omega \mathbf{U}^\dagger, 
\end{align}
%
where $\Omega$ is a diagonal matrix with the non-negative diagonal elements $\{\omega_j^2\ge 0, j=1 ,\cdots, N_h\}$. The unitary matrix when acting on $\mathbf{q}$ and $\mathbf{p}$ generates the \textbf{normal mode} ($\mathbf{\tilde{q}}$ and $\mathbf{\tilde{p}}$)
\begin{align}
    \mathbf{\tilde{q}} = \mathbf{U} \hat{\mathbf{q}},~~    
    \mathbf{\tilde{p}} = \mathbf{U} \hat{\mathbf{p}}
\end{align}
that (i) preserves the commutation relations (\ref{eq:harmonic_oscillator_comm}), and (ii) re-expresses the Hamiltonian as
\begin{align}
    \hat{H} = \frac{1}{2} \left( \mathbf{\tilde{p}}^T\mathbf{\tilde{p}}+ \mathbf{\tilde{q}}^T\Omega\mathbf{\tilde{q}} \right).
\end{align}
Now we can map from the normal modes to bosonic operators similar to (\ref{eq:harmonic_oscillator_map1},\ref{eq:harmonic_oscillator_map2})
\begin{align}
    \hat{b}_j^\dagger &= \left( \begin{array}{cc}
    \sqrt{\frac{\omega_j}{2\hbar}} & \frac{-i}{\sqrt{2\omega_j \hbar}} 
    \end{array} \right) \left( \begin{array}{c}
    \tilde{q}_j  \\ \tilde{p}_j
    \end{array} \right), \\
    \hat{b}_j &= \left( \begin{array}{cc}
    \sqrt{\frac{\omega_j}{2\hbar}} & \frac{i}{\sqrt{2\omega_j \hbar}}  
    \end{array} \right) \left( \begin{array}{c}
    \tilde{q}_j  \\ \tilde{p}_j
    \end{array} \right), \\
    \tilde{q}_j &= \sqrt{\frac{\hbar}{2\omega_j}} (\hat{b} + \hat{b}^\dagger), \\
    \tilde{p}_j &= \sqrt{\frac{\omega_j \hbar}{2}} \frac{(\hat{b} - \hat{b}^\dagger)}{i}, 
\end{align}
and obtain the Hamiltonian in normal mode
\begin{align}
    \hat{H} = \sum_{j=1}^{N_h} \hbar \omega_j \left( \hat{b}_j^\dagger \hat{b}_j + \frac{1}{2} \right).
\end{align}


\subsection{Boson-to-Qubit mapping}\label{sec: boson_qubit}

A typical boson-to-qubit mapping is the direct \textbf{one-to-one mapping} (also known as the unary mapping) studied by previous work such as Ref.~\citenum{Somma2003Quantum}, which maps a boson number state $| n_i \rangle$ ($0\le n_i \le N_b$) to a tensor representation employing $(N_b+1)$-qubit,
\begin{align}
| n_i \rangle &\leftrightarrow | 0_0 \cdots 0_{n_i-1} 1_{n_i} 0_{n_i+1} \cdots  0_{N_b} \rangle \notag \\
&~~~~= | 0 \rangle_0 \otimes \cdots \otimes | 0 \rangle_{n_i-1} \otimes | 1 \rangle_{n_i} \otimes | 0 \rangle_{n_i+1} \otimes \cdots \otimes | 0 \rangle_{N_b} , \label{occup_rep}
\end{align}
where
\begin{align}
    |0\rangle_j = \left( \begin{array}{c} 1 \\ 0 \end{array}\right),~~
    |1\rangle_j = \left( \begin{array}{c} 0 \\ 1 \end{array}\right),~~
\end{align}
are computation basis states of qubit $j$. From (\ref{boson_op2}), (\ref{boson_op3}), and (\ref{occup_rep}), we can then represent $\mathbf{I}_i$, $\tilde{b}^\dagger_i$, $\tilde{b}_i$, and $\tilde{n}_i$ using a similar tensor structure,
\begin{align}
    \mathbf{I}_i &= I_0 \otimes I_1 \otimes \cdots \otimes I_{N_b}, \notag \\
    \tilde{b}^\dagger_i &= \sum_{n=0}^{N_b-1} \sqrt{n+1} ~I_0 \otimes \cdots \otimes (+)_n \otimes (-)_{n+1} \otimes \cdots \otimes I_{N_b}, \notag \\
    \tilde{b}_i &= \sum_{n=1}^{N_b} \sqrt{n} ~I_0 \otimes \cdots \otimes (-)_{n-1} \otimes (+)_{n} \otimes \cdots \otimes I_{N_b},  \notag \\
    \tilde{n}_i &= \tilde{b}^\dagger_i \tilde{b}_i . 
\end{align}
where $(\pm)_j$ are ladder operator (on qubit $j$) defined from the Pauli matrices
\begin{align}
    (+) = \frac{1}{2} (X + i Y) = \left( \begin{array}{cc} 0 & 1 \\ 0 & 0 \end{array} \right), \notag \\
    (-) = \frac{1}{2} (X - i Y) = \left( \begin{array}{cc} 0 & 0 \\ 1 & 0 \end{array} \right),\label{ladder_op}
\end{align}
with the Pauli matrices (on qubit $j$)
\begin{align}
    X = \left( \begin{array}{cc} 0 & 1 \\ 1 & 0 \end{array} \right), ~~
    Y = \left( \begin{array}{cc} 0 & -i \\ i & 0 \end{array} \right), ~~
    Z = \left( \begin{array}{cc} 1 & 0 \\ 0 & -1 \end{array} \right). \label{Pauli_mat}
\end{align}
The Pauli ladder operators satisfy the following relations 
\begin{align}
    &(+) |1\rangle = |0\rangle,~~ (-) |1\rangle = 0,\notag \\ 
    &(+) |0\rangle = 0,~~ (-) |0\rangle = |1\rangle.
\end{align}

\noindent It's worth mentioning that the above approach uses $N_b$ qubits to represent $N_b$ bosonic particle states in one site. Nevertheless, the computational basis of $N_b$ $(>1)$ qubits explore the subspace of up to $2^{N_b}$ dimension. In other words, to encode $N_b$ boson paricle states and one vacuum state at each site, one would in principle only need $\lceil \log_2 (N_b+1) \rceil$ qubits. To achieve this, we can choose a \textbf{binary mapping} (studied in Ref. \citenum{Veis2016Quantum}, with its bit-swapping-efficient version, e.g. Gray code, discussed in Ref. \citenum{Sawaya2020Resource}) to represent bosonic states (at a given site $i$)
\begin{align}
   | n_i \rangle &\leftrightarrow | \underbrace{011\cdots101}_{\text{binary rep. of $n_i$}} \rangle \notag \\
&~~~~= | 0 \rangle_1 \otimes | 1 \rangle_2 \otimes | 1 \rangle_3 \otimes \cdots  \notag \\
&~~~~~~~~ \otimes| 1 \rangle_{N_q-2} \otimes | 0 \rangle_{N_q-1} \otimes | 1 \rangle_{N_q} . \label{bin_rep} 
\end{align}
Using the above convention we can write
\begin{align}
| 0\rangle &\leftrightarrow |0_1\rangle \otimes \cdots |0_{N_q-2}\rangle \otimes |0_{N_q-1}\rangle \otimes |0_{N_q}\rangle \notag \\ 
| 1\rangle &\leftrightarrow |0_1\rangle \otimes \cdots |0_{N_q-2}\rangle \otimes |0_{N_q-1}\rangle \otimes |1_{N_q}\rangle \notag \\
| 2\rangle &\leftrightarrow |0_1\rangle \otimes \cdots |0_{N_q-2}\rangle \otimes |1_{N_q-1}\rangle \otimes |0_{N_q}\rangle \notag \\
| 3\rangle &\leftrightarrow |0_1\rangle \otimes \cdots |0_{N_q-2}\rangle \otimes |1_{N_q-1}\rangle \otimes |1_{N_q}\rangle \notag \\
& \vdots \notag \\
| 2^{N_q}-1\rangle &\leftrightarrow |1_1\rangle \otimes \cdots |1_{N_q-2}\rangle \otimes |1_{N_q-1}\rangle \otimes |1_{N_q}\rangle. \label{binary_rep}
\end{align}
Accordingly, $\mathbf{I}_i$, $\tilde{b}^\dagger_i$, $\tilde{b}_i$, and $\tilde{n}_i$ can be represented in a $2^{N_q}\times 2^{N_q}$ dimensional matrix form that can also be rewritten in a tensor product form, i.e.,
\begin{align}
    \mathbf{I}_i &= \left( \begin{array}{ccccc}
      1  & 0  & 0  & \cdots & 0  \\
      0  & 1  & 0  & \cdots & 0  \\
      0  & 0  & 1  & \cdots & 0  \\
      \vdots & \vdots & \vdots & \ddots & \vdots \\
      0  & 0  & 0  & \cdots & 1 \\
    \end{array} \right) \notag \\
    &= I_1 \otimes I_2 \otimes \cdots \otimes I_{N_q}, \\
    \tilde{b}^\dagger_i &= \left( \begin{array}{cccccc}
      0  & 0  & 0  & \cdots & 0  & 0 \\
      1  & 0  & 0  & \cdots & 0  & 0 \\
      0  & \sqrt{2}  & 0  & \cdots & 0  & 0 \\
      \vdots & \vdots & \vdots & \ddots & \vdots & \vdots \\
      0  & 0  & 0  & \cdots & \sqrt{2^{N_q}-1}  & 0 \\
    \end{array} \right) \notag \\
    &= \sum_{n=0}^{2^{N_q}-2} \sqrt{n+1}~ | n+1\rangle \langle n |, \\
    \tilde{b}_i &= \left( \begin{array}{ccccc}
      0  & 1  & 0  & \cdots & 0 \\
      0  & 0  & \sqrt{2} & \cdots & 0 \\
      0  & 0  & 0 & \cdots & 0 \\
      \vdots & \vdots & \vdots & \ddots & \vdots \\
      0  & 0  & 0  & \cdots & \sqrt{2^{N_q}-1}\\
      0  & 0  & 0  & \cdots & 0 \\
    \end{array} \right) \notag \\
    &= \sum_{n=1}^{2^{N_q}-1} \sqrt{n}~ | n-1\rangle \langle n |,\\
    \tilde{n}_i &= \left( \begin{array}{ccccc}
      0  & 0  & 0  & \cdots & 0  \\
      0  & 1  & 0  & \cdots & 0  \\
      0  & 0  & 2  & \cdots & 0  \\
      \vdots & \vdots & \vdots & \ddots & \vdots \\
      0  & 0  & 0  & \cdots & 2^{N_q}-1 \\
    \end{array} \right) \notag \\
    &= \sum_{n=0}^{2^{N_q}-1} n~ | n\rangle \langle n |, \label{binary_rep2}
\end{align}
where the outer products $|\cdot \rangle \langle \cdot |$ can be expressed as a tensor product form similar to (\ref{binary_rep}) with the operation on each qubit being one of the following four,
\begin{align}
    |0\rangle \langle 0| &= \frac{1}{2} (I + Z ),~
    |0\rangle \langle 1| = (+) ,\notag \\
    |1\rangle \langle 0| &= (-) ,~
    |1\rangle \langle 1| = \frac{1}{2} (I - Z ). \label{eq:op_one_qubit}    
\end{align}
\\

It's worth mentioning that new encoding schemes have been reported recently. For example, Li et al. reported a variational basis state encoder that further reduces the number of qubits and number of gates for common bosonic operators to $\mathcal{O}(1)$~\cite{PhysRevResearch.5.023046}.
The variational basis sets essentially requires the preparation of a linear combination of many bosonic states. In Section \ref{sec: initial_state} we will discuss how to efficiently prepare the superposition of many bosonic states.

\textbf{Representing Hamiltonians} Now using the boson-to-qubit mappings introduced above, we can map some model Hamiltonians that are commonly used in quantum computation to qubit systems. 

\subsubsection{Boson-Preserving Hamiltonian}

Consider the following general Bose-Hubbard model
\begin{align}
    \hat{H} &= \sum_{i=1}^{N} \bigg( -\mu\hat{n}_i + \frac{U}{2}\hat{n}_i(\hat{n}_i-1) - t \sum_{j>i}^{N} (b^\dagger_i b_j+b_i b^\dagger_j) \notag \\
    &~~~~~~~~~~~~~~ + V \sum_{j>i}^{N} \hat{n}_i\hat{n}_j \bigg), \label{boson_hamiltonian}
\end{align}
where scalars $t$, $U$, $V$ and $\mu$ are the hopping amplitude, on-site interaction, non-local interaction, and chemical potential. $b^\dagger_i$, $b_i$, and $\hat{n}_i$ have been defined above. In the following, we will discuss the representations of the Hamiltonian of a two-site system (with at most one boson per site, i.e. $N=2$ and $N_b=1$) employing the two boson-to-qubit mapping, respectively. 

\begin{center}
    \begin{tikzpicture}
        \filldraw[fill=blue!40!white, draw=none] (0,0) circle (0.5cm);
        \node[draw=none] at (0,-0.8) {\small $|n_1\rangle$};
        \filldraw[fill=blue!40!white, draw=none] (2,0) circle (0.5cm);
        \node[draw=none] at (2,-0.8) {\small $|n_2\rangle$};
        \draw[thick,-] (0.5,0)--(1.5,0);
        \draw[thick,-] (-0.2,0.1)--(0.2,0.1);
        \node[draw=none] at (0,0.25) {\tiny $|1 \rangle$};
        \draw[thick,-] (-0.2,-0.2)--(0.2,-0.2);
        \node[draw=none] at (0,-0.05) {\tiny $|0 \rangle$};
        \draw[thick,-] (1.8,0.1)--(2.2,0.1);
        \node[draw=none] at (2,0.25) {\tiny $|1 \rangle$};
        \draw[thick,-] (1.8,-0.2)--(2.2,-0.2);
        \node[draw=none] at (2,-0.05) {\tiny $|0 \rangle$};
    \end{tikzpicture}
\end{center}

\noindent The \textit{one-to-one mapping} requires $N\times(N_b+1)=2\times(1+1) = 4$ qubits in total. Specifically, we have
\begin{align}
    b^\dagger_1 &= \tilde{b}^\dagger_1\otimes \mathbf{I}_2,~~
    b_1 = \tilde{b}_1\otimes \mathbf{I}_2,\notag \\    
    b^\dagger_2 &= \mathbf{I}_1\otimes \tilde{b}^\dagger_2,~~
    b_2 = \mathbf{I}_1\otimes \tilde{b}_2,  \\
    \hat{n}_1 &= \tilde{n}_1 \otimes \mathbf{I}_2,~~
    \hat{n}_2 = \mathbf{I}_1 \otimes \tilde{n}_2 , \notag
\end{align}
where
\begin{align}
    \tilde{b}^\dagger_{1/2} &= (+) \otimes (-) = | 0 \rangle \langle 1 | \otimes | 1 \rangle \langle 0 |,\notag \\
    \tilde{b}_{1/2} &= (-) \otimes (+) = | 1 \rangle \langle 0 | \otimes | 0 \rangle \langle 1 |,  
\end{align}
and
\begin{align}
    \tilde{n}_{1/2} &= \tilde{b}^\dagger_{1/2} \tilde{b}_{1/2}
    = \frac{1}{4}(I+Z)\otimes(I-Z) \notag \\
    &= | 0 \rangle \langle 0 | \otimes | 1 \rangle \langle 1 |. 
\end{align}
Then the Hamiltonian can be re-written as
\begin{align}
    \hat{H} &= \alpha_{1}~ \tilde{n}_1 \otimes \textbf{I}_2 + \alpha_{2}~\textbf{I}_1 \otimes \tilde{n}_2 
 	+ \beta_{12}~ (\tilde{b}^\dagger_1 \otimes \tilde{b}_2 + \tilde{b}_1 \otimes \tilde{b}^\dagger_2 ) \notag \\
    &~~~~ + \gamma_{12}~\tilde{n}_1 \otimes \tilde{n}_2  \label{Ham_components}
\end{align}
where
\begin{align}
    \tilde{n}_1 \otimes \textbf{I}_2 
    &= \frac{1}{4} \Big( IIII + ZIII - IZII - ZZII \Big)\notag \\
    \textbf{I}_1 \otimes \tilde{n}_2 
    &= \frac{1}{4} \Big( IIII + IIZI - IIIZ - IIZZ \Big)\notag \\
    \tilde{n}_1 \otimes \tilde{n}_2 
    &= \frac{1}{16} \Big( IIII + ZIII - IZII - ZZII  \notag \\
    &~~~~~~  +IIZI + ZIZI - IZZI - ZZZI \notag \\
    &~~~~~~  -IIIZ - ZIIZ + IZIZ + ZZIZ \notag \\
    &~~~~~~  -IIZZ - ZIZZ + IZZZ + ZZZZ \Big) 
\end{align}
and
\begin{align}
      &\tilde{b}^\dagger_1 \otimes \tilde{b}_2 + \tilde{b}_1 \otimes \tilde{b}^\dagger_2 \notag \\
    &~~= \frac{1}{8} \Big(XXXX + XXYY + YYXX + YYYY \notag \\
    &~~~~~~~~~~+ XYXY - XYYX - YXXY + YXYX \Big)
\end{align}
with the symbols denoting the tensor products of Pauli matrices, e.g., $XXXX$ represents $X_0\otimes X_1\otimes X_2\otimes X_3$.\\

\noindent The \textit{binary mapping} instead requires 2 qubits in total. Specifically, according to Eqs. (\ref{binary_rep2}), we can re-define $\tilde{b}^\dagger$, $\tilde{b}$, and $\tilde{n}$ as
\begin{align}
    \tilde{b}^\dagger_{1/2} &= (-) = | 1 \rangle \langle 0 |,~~
    \tilde{b}_{1/2} = (+) = | 0 \rangle \langle 1 |, \notag \\
    \tilde{n}_{1/2} &= \frac{1}{2}(I-Z) = | 1 \rangle \langle 1 |. \label{boson_op_1Q}
\end{align}
Then the Hamiltonian components of (\ref{Ham_components}) can be re-written as
\begin{align}
    \tilde{n}_1 \otimes I_2 
    &= \frac{1}{2} \Big( II - ZI\Big),~~ 
    I_1 \otimes \tilde{n}_2 
    = \frac{1}{2} \Big( II- IZ \Big)\notag \\
    \tilde{n}_1 \otimes \tilde{n}_2 
    &= \frac{1}{4} \Big( II - IZ - ZI + ZZ \Big)\notag \\
    \tilde{b}^\dagger_1 \otimes \tilde{b}_2 + \tilde{b}_1 \otimes \tilde{b}^\dagger_2
    &= \frac{1}{2} \Big(XX + YY \Big)
\end{align}
Using the Hamiltonian encoding, we can start to study the boson dynamics under the Hamiltonian, e.g. the quantum walks of indistinguishable bosons on 1D optical lattice that depends on the (absolute) strength of the on-site interaction $U$ (see Figure \ref{fig:Boson_quantum_walk}).

\begin{figure}
    \centering
    \includegraphics[width=\linewidth]{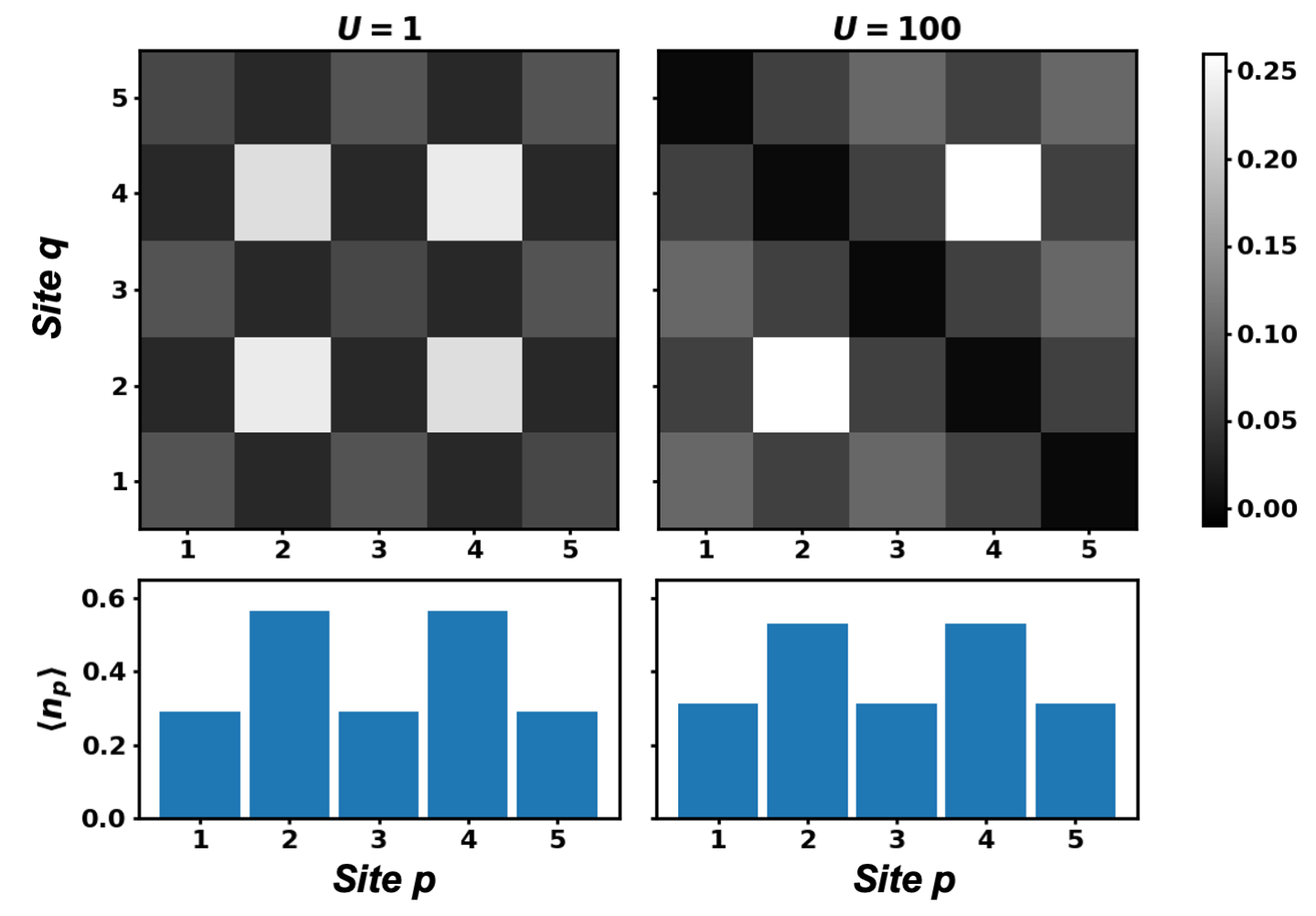}
    \caption{Two-particle correlation of the quantum walkers of two indistinguishable bosons, $\Gamma_{p,q}(t) = \langle \phi_b(t) | b_p^\dagger b_q^\dagger b_q b_p|\phi_b(t)\rangle$, on 1D optical lattice with five sites. The corresponding density distribution, $\langle n_p \rangle = \langle \phi_b(t) | b_p^\dagger b_p|\phi_b(t)\rangle$, is shown in the bottom of each plot of correlation. The Hamiltonian is given in (\ref{boson_hamiltonian}) with $N=5$, $\mu=-0.5U$, $t=1$, and $V=0$. $|\phi_b(t)\rangle$ evolves from the initial state $|\phi_b(0)\rangle = b_2^\dagger b_4^\dagger |\rm{vac}\rangle$ through the propagator $\exp(-i \hat{H} t)$ over the time $t$ = 0.003 a.u. ($\Delta t=1.0\times 10^{-5}$ a.u.), against the positions of two-boson $p$ and $q$ for different on-site interactions $U=1$ and $U=100$. }
    \label{fig:Boson_quantum_walk}
\end{figure}


\subsubsection{Spin-Boson Hamiltonian}\label{sec: SB_interaction}

The simplest spin-boson model is a two-level system (TLS) coupled to a bath of harmonic oscillators referred to as a boson field. The full Hamiltonian is
\begin{align}
\label{eq:spin_boson_def}
    \hat{H} = \hat{H}_0 + \hat{H}_{SB}
\end{align}
where $\hat{H}_0$ is the zero-order Hamiltonian describing the separated subsystems
\begin{align}
    \hat{H}_0 = \Delta X + \frac{1}{2} \epsilon Z + \sum_i \omega_i \hat{n}_i
\end{align}
with $\Delta$ the bare tunneling amplitude between the two levels, $\epsilon$ the bias, and $\omega_i$ the frequencies of the oscillators. 
\begin{align}
    \hat{H}_{SB} = \frac{1}{2} X \sum_{i} g_i \omega_i (b_i + b^\dagger_i)
\end{align} 
is the coupling between the TLS and the bath of harmonic oscillators with $\lambda_i$ the coupling strength. The effect of the oscillator bath is completely determined by the spectral function $J(\omega)$, which in the ohmic case is given by
\begin{align}
    J(\omega) = \frac{\pi}{2} \sum_i c_i^2 \delta(\omega - \omega_i)
\end{align}
with $c_i \sim g_i \omega_i$. For the simplicity of the discussion, we consider simulating the spin dynamics of a spin-boson model with one spin and three bosons, and set $\omega=2$, $\epsilon=2$ and $\Delta=1$. The Hamiltonian is then simplified as%
\begin{align}
    \hat{H} = X + Z + 2 \hat{n} + gX(b^\dagger + b). \label{eq:Ham_SB}
\end{align}
If the binary mapping for the bosons is used on two qubits, then
\begin{align}
    |0\rangle &= | 00 \rangle, |1\rangle = |01\rangle, |2\rangle = |10\rangle, |3\rangle =|11\rangle,  
\end{align}
and
\begin{align}
    b^\dagger &= |1\rangle \langle 0| + \sqrt{2}|2\rangle \langle 1| + \sqrt{3}|3\rangle \langle 2|, \notag \\
    &= |01\rangle \langle 00| + \sqrt{2}|10\rangle \langle 01| + \sqrt{3}|11\rangle \langle 10|, \notag \\
    &= |0\rangle \langle 0|\otimes |1\rangle \langle 0| 
    + \sqrt{2}|1\rangle \langle 0|\otimes |0\rangle \langle 1| \notag \\
    &~~~~+ \sqrt{3}|1\rangle \langle 1|\otimes |1\rangle \langle 0|
    \notag \\
    &= \frac{1}{4} (I+Z)\otimes(X-iY) 
    + \frac{\sqrt{2}}{4}(X-iY)\otimes(X+iY) \notag \\
    &~~~~ + \frac{\sqrt{3}}{4}(I-Z)\otimes(X-iY).
\end{align}
where (\ref{eq:op_one_qubit}) is employed. Similarly,
\begin{align}
    b &= |0\rangle \langle 1| + \sqrt{2}|1\rangle \langle 2| + \sqrt{3}|2\rangle \langle 3|, \notag \\
    &= \frac{1}{4} (I+Z)\otimes(X+iY) 
    + \frac{\sqrt{2}}{4}(X+iY)\otimes(X-iY) \notag \\
    &~~~~ + \frac{\sqrt{3}}{4}(I-Z)\otimes(X+iY),   \\
    \hat{n} &= |1\rangle \langle 1| + 2 |2\rangle \langle 2| + 3 |3\rangle \langle 3| \notag \\
    &= \frac{1}{4} (I+Z)\otimes(I-Z) 
    + \frac{2}{4}(I-Z)\otimes(I+Z) \notag \\
    &~~~~ + \frac{3}{4}(I-Z)\otimes(I-Z) \notag \\
    &= \frac{1}{2} \big( 3II - IZ - 2ZI \big).
\end{align}
Now the Hamiltonian (\ref{eq:Ham_SB}) can be written as the linear combination of Pauli strings, 
\begin{align}
    \hat{H} &= XII + ZII + 3III - IIZ - 2IZI + \frac{g}{2} \bigg( (1+\sqrt{3})XIX \notag \\
    &~~~~ + (1-\sqrt{3})XZX + \sqrt{2}XXX + \sqrt{2}XYY \bigg) \label{eq:Ham_SB_qubit}
\end{align}
which can be mapped to three qubits (the first one is for the spin and the other two are for the bosons). Employing this Hamiltonian representation, we can then simulate the boson and spin dynamics of this model system. As shown in Figure \ref{fig:SB_dynamics} and Appendix \ref{app_a}, by employing a Lindblad formula, the simulation can be done in different coupling regimes with and without external conditions. 

\begin{figure*}
    \centering
    \includegraphics[width=\linewidth]{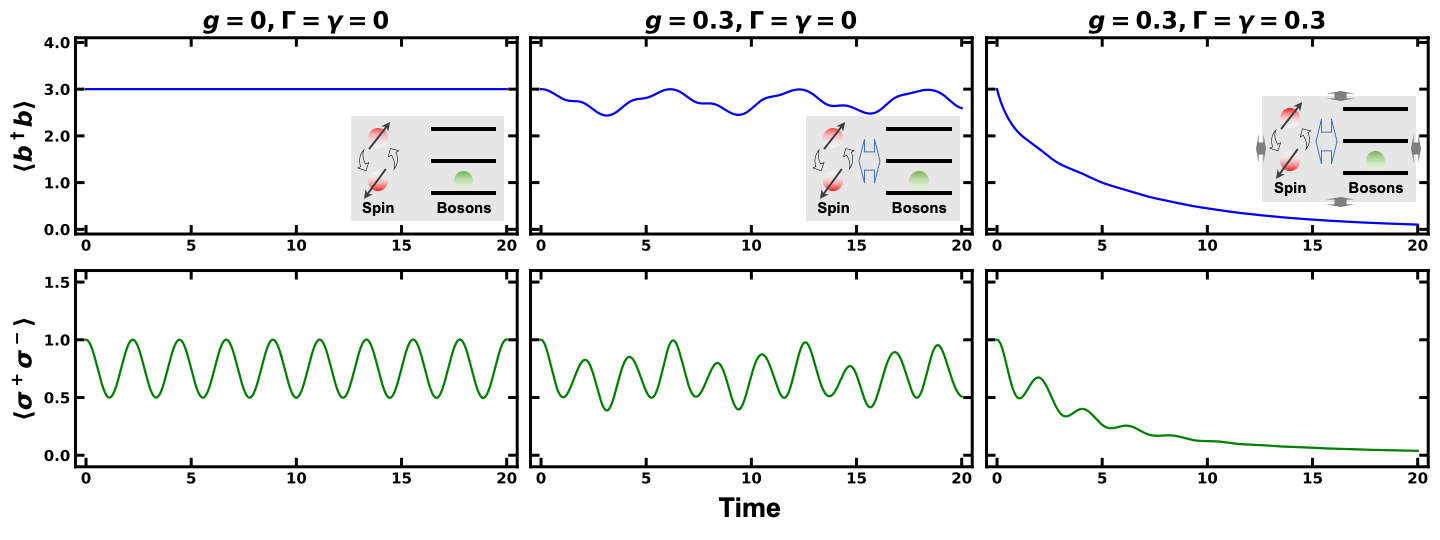}
    \caption{Boson (upper panels) and spin (lower panels) dynamics of a spin-boson model for different coupling regimes. The Hamiltonian is described in (\ref{eq:Ham_SB}) with the corresponding qubit representation in (\ref{eq:Ham_SB_qubit}). The decoherence is simulated through a Lindblad master equation as described in Appendix \ref{app_a} where the parameters $\Gamma$ and $\gamma$ account for the experimental imperfections.}
    \label{fig:SB_dynamics}
\end{figure*}


\subsubsection{Boson-Fermion Hamiltonian}\label{sec: BF_interaction}


In this section, we focus on one of the simplest models that feature the boson-fermion interaction,  the Holstein model. Nevertheless, same procedure can also apply to more general Hubbard-Holstein model by including on-site interactions, or to molecules by replacing the quadratic fermionic parts with one-electron and two-electron molecular Hamiltonians. The one-dimensional version of the Holstein model reads
\begin{align}
    \hat{H} = - \sum_{\langle i,j \rangle} v f^\dagger_i f_j + \sum_i \omega b^\dagger_i b_i + \sum_i g \omega f^\dagger_i f_i (b^\dagger_i + b_i), \label{Holstein}
\end{align}
with $V$ the hopping coefficient between the nearest neighbour pair $\langle i,j \rangle$, $\omega$ the vibration frequency, and $g$ the coupling constant. For the simplicity of demonstration, we choose a three-site Holstein model with periodic boundary condition and $N_b = v = \omega = 1$ and only treat $g$ as the variable. For the binary encoding, we need $N\times \lceil \log_2 (N_b+1)\rceil = 6$ qubits. The Hamiltonian can then be re-written as
\begin{align}
    \hat{H} =& - f^\dagger_1 f_2 - f^\dagger_2 f_1 - f^\dagger_2 f_3 - f^\dagger_3 f_2 - f^\dagger_3 f_1 - f^\dagger_1 f_3 \notag \\
    & + (b^\dagger_1 b_1 + b^\dagger_2 b_2 + b^\dagger_3 b_3 ) \label{eq:3site_ham} \\
    & + g \left[f^\dagger_1 f_1 (b^\dagger_1 + b_1) + f^\dagger_2 f_2 (b^\dagger_2 + b_2) + f^\dagger_3 f_3 (b^\dagger_3 + b_3) \right] \notag
\end{align}
Assume bosons and fermions are elementary particles, then the bosonic and fermionic operators commute
\begin{align}
    [b,f] = [b,f^\dagger] = [b^\dagger, f] = [b^\dagger, f^\dagger] = 0.
\end{align}
Therefore, we can use the first three qubits for fermion encoding and the remaining three qubits for boson encoding. The fermion operators can be mapped to qubit systems through the Jordan-Wigner transformation, i.e.
\begin{align}
    f^\dagger_1 &=  (-) \otimes I \otimes I \otimes I \otimes I\otimes I , \notag \\
    f_1 &= (+) \otimes I \otimes I \otimes I \otimes I\otimes I , \notag \\
    f^\dagger_2 &= Z \otimes (-) \otimes I \otimes I \otimes I\otimes I , \notag \\
    f_2 &= Z \otimes (+) \otimes I \otimes I \otimes I\otimes I , \notag \\
    f^\dagger_3 &= Z \otimes Z \otimes (-) \otimes I \otimes I\otimes I , \notag \\
    f_3 &= Z \otimes Z \otimes (+) \otimes I \otimes I\otimes I , \notag \\
    f^\dagger_1 f_1 &= \frac{I-Z}{2} \otimes I \otimes I \otimes I \otimes I\otimes I , \notag \\
    f^\dagger_1 f_2 &= (-) \otimes (+) \otimes I \otimes I \otimes I\otimes I , \notag \\
    f^\dagger_1 f_3 &= (-) \otimes Z \otimes (+) \otimes I \otimes I\otimes I , \notag \\
    f^\dagger_2 f_2 &= I \otimes \frac{I-Z}{2} \otimes I \otimes I \otimes I\otimes I , \notag \\
    f^\dagger_2 f_1 &= (+) \otimes (-) \otimes I \otimes I \otimes I\otimes I , \notag \\
    f^\dagger_2 f_3 &= I \otimes (-) \otimes (+) \otimes I \otimes I\otimes I , \notag \\
    f^\dagger_3 f_3 &= I \otimes I \otimes \frac{I-Z}{2} \otimes I \otimes I\otimes I , \notag \\
    f^\dagger_3 f_2 &= I \otimes (+) \otimes (-) \otimes I \otimes I\otimes I , \notag \\
    f^\dagger_3 f_1 &= (+) \otimes Z \otimes (-) \otimes I \otimes I\otimes I .\label{Fermion_JW}
\end{align}
Here note the following simple Pauli relations 
\begin{align}
    Z(+) &= (+), ~~ (+) Z = - (+),\notag \\ 
    (-) Z &= (-), ~~ Z(-) = -(-), 
\end{align}
and
\begin{align}
    XZ &= -iY, ~~ ZX = iY, ~~ YZ = iX, ~~ ZY = -iX. 
\end{align}

For the boson operators, using (\ref{boson_op_1Q}), we have
\begin{align}
    b^\dagger_1 &=  I \otimes I \otimes I \otimes (-) \otimes I\otimes I ,\notag \\ 
    b_1 &= I \otimes I \otimes I \otimes (+) \otimes I\otimes I , \notag \\
    b^\dagger_2 &= I \otimes I \otimes I \otimes I \otimes (-) \otimes I ,\notag \\
    b_2 &= I \otimes I \otimes I \otimes I \otimes (+) \otimes I , \notag \\
    b^\dagger_3 &= I \otimes I \otimes I \otimes I\otimes I \otimes (-) , \notag \\
    b_3 &= I \otimes I \otimes I \otimes I\otimes I \otimes (+) , \notag \\
    b^\dagger_1 b_1 &= I \otimes I \otimes I \otimes \frac{I-Z}{2} \otimes I\otimes I , \notag \\
    b^\dagger_2 b_2 &= I \otimes I \otimes I \otimes I \otimes \frac{I-Z}{2} \otimes I , \notag \\
    b^\dagger_3 b_3 &= I \otimes I \otimes I \otimes I \otimes I\otimes \frac{I-Z}{2} , \notag \\
    b^\dagger_1 + b_1 &= I \otimes I \otimes I \otimes X \otimes I\otimes I , \notag \\
    b^\dagger_2 + b_2 &= I \otimes I \otimes I \otimes I \otimes X \otimes I , \notag \\
    b^\dagger_3 + b_3 &= I \otimes I \otimes I \otimes I \otimes I\otimes X . \label{Boson_binary}
\end{align}

\subsubsection{More Complex Hamiltonians}

\textcolor{black}{More complex Hamiltonians can usually be viewed as extensions or generalizations of the simple models discussed in the preceding sections. For instance, introducing anharmonic terms to the harmonic oscillator moves beyond the simple picture of independent oscillations, capturing more realistic dynamics where energy levels are no longer equally spaced. Similarly, the Fr\"{o}hlich Hamiltonian and more general electron-phonon or light-matter polariton Hamiltonians can be seen as generalizations of the spin-boson model, incorporating multiple electronic and bosonic modes, momentum dependence, and more intricate interaction terms. This progression from simpler to more complex models reflects the increasing realism and richness of the physical phenomena being described but also introduces significant challenges for quantum simulation. In the following, we briefly discuss the specific challenges and strategies for encoding these more complex Hamiltonians, focusing on qubit mapping strategies, truncation and approximations, and resource requirements, while leaving the discussion of quantum algorithms to later sections.}

\textcolor{black}{Encoding \textbf{anharmonicity} presents a significant challenge due to the presence of higher-order bosonic terms. Beside truncating the bosonic Hilbert space, limiting the maximum number of bosons per mode ($n_{max}$), another approach is to employ more sophisticated mappings, such as the mapping based on the discrete variable representation~\cite{DVR92}, that can be more efficient for specific anharmonic potentials. However, these mappings often lead to more complex qubit interactions. The resource requirements scale with the number of modes and the chosen $n_{max}$. The complexity of the resulting Hamiltonian also increases with the order of the anharmonic terms, leading to deeper quantum circuits for simulation.}

\textcolor{black}{The \textbf{Fr\"{o}hlich Hamiltonian}, describing electron-phonon interactions, involves both fermionic and bosonic degrees of freedom. For the bosonic part, similar strategies as in the anharmonic case are used, involving truncation of the phonon Hilbert space. The fermionic part is typically mapped using the Jordan-Wigner or Bravyi-Kitaev transform~\cite{jordan1928paulische,bravyi2002fermionic}. The major challenge comes from the momentum-dependent interaction term, which leads to all-to-all interactions in real space. This can be mitigated by working in momentum space, but it still leads to complex multi-qubit interactions. The qubit requirements scale with the number of electronic and phononic modes. If there are $N$ electronic modes and $M$ phononic modes, and each phononic mode is truncated to $n_{max}$, the total qubit requirement is $N + M\lceil\log(n_{max} + 1)\rceil$. The interaction terms also introduce significant overhead in terms of gate counts during the simulation. Techniques like interaction picture and perturbative gadgets can be used to reduce the simulation cost~\cite{PerturbationGaget}.}

\textcolor{black}{The more general \textbf{electron-phonon and light-matter polariton Hamiltonians} inherit the complexities of the Fr\"{o}hlich Hamiltonian but often include additional challenges. Long-range interactions require more careful consideration of qubit mappings to minimize non-local interactions. Transformations like the Bravyi-Kitaev transform or more advanced techniques like mapping to tree tensor networks can be beneficial~\cite{Huggins_2019}. Multiple phonon branches or photonic modes increase the number of bosonic degrees of freedom, leading to a linear increase in qubit requirements. The light-matter interactions, especially in the dipole gauge, can lead to quadratic bosonic terms, which must be treated with techniques similar to those used for anharmonic oscillators. Truncation of the bosonic Hilbert space remains a common approximation, with the same trade-off between accuracy and qubit requirements. In light-matter systems, approximations based on rotating wave approximation are also often employed, which can significantly simplify the Hamiltonian by neglecting fast oscillating terms. The resource requirements for these general models scale with the number of electronic, phononic, and photonic modes, as well as the complexity of the interactions and the level of truncation applied.}

\subsection{Prepare Initial Bosonic States}\label{sec: initial_state}

For an $N$-site quantum system with $n_i$ ($0\le n_i \le N_b$) bosons at each site, its bosonic state can be written as
\begin{align}
    | \phi_b \rangle = \mathcal{N} (b^\dagger_1)^{n_1} (b^\dagger_2)^{n_2} \cdots (b^\dagger_N)^{n_N} | \text{vac} \rangle, \label{boson_state1}
\end{align}
where $\mathcal{N}$ is a normalization constant, and 
\begin{align}
    |\text{vac}\rangle = \underbrace{|0\rangle \otimes \cdots \otimes |0 \rangle \otimes |0\rangle}_{\text{$N$-tensor}}
\end{align}
with $|0\rangle$ at site $i$ being mapped to a $n_i$-tensor product according to the one-to-one mapping (\ref{occup_rep}) or the binary mapping (\ref{bin_rep}). Using the mapping (\ref{occup_rep}), only $N$ qubits need to be flipped to prepare $| \phi_b\rangle$. For the mapping (\ref{bin_rep}), at most $N\times \lceil \log_2 ( N_b + 1 ) \rceil $ qubits need to be flipped instead.

If the initial bosonic state assumes the superposition form 
\begin{align}
    |\psi \rangle = \sum_{k=1}^K c_k |\phi_{b,k}\rangle, \label{boson_state2}
\end{align}
where
\begin{align}
    \sum_{k=1}^K |c_k|^2 = 1,~~ \langle \phi_{b,k} | \phi_{b,l} \rangle = \delta_{k,l},~~ |\phi_{b,k}\rangle = U_k |\text{vac}\rangle, \label{cond}
\end{align}
with $U_k$ representing a unitary operation, similar to (\ref{boson_state1}),
then we can use $K$ ancillas to do the following state preparation procedure~\cite{Somma2003Quantum}

\begin{itemize}
\item[\#]1. Initialize the $K$ ancillas in state $|0\rangle$;
\item[\#]2. Generate $\sum_{k=1}^K c_k |k\rangle$. Here the state $| k \rangle$ is an ancilla state with only the $k$-th qubit being $|1\rangle$;
\item[\#]3. Loop over the $K$ ancillas, perform the controlled unitary operations, i.e., conditional on the $k$-th ancilla being $|1\rangle$, apply $U_k$ on the $|\text{vac}\rangle$. This will generate $\sum_{k=1}^K c_k |k\rangle \otimes |\phi_{b,k}\rangle$;
\item[\#]4. Generate $\frac{1}{\sqrt{K}}\sum_{k=1}^K c_k |0\rangle \otimes |\phi_{b,k}\rangle$.
\end{itemize}

The corresponding circuit structure is given in Figure \ref{fig: state_prep}. As can be seen, step \#2 and \#4 in the above procedure are essentially identical, and only differing by the single qubit rotations and the selection of the subspace to project out. The single qubit rotations can be figured out from $K$ and $c_k$'s. To see this, take step \#2 as an example, the state over $|b_1\rangle$ and the $K$ ancillas evolves as follows
\begin{widetext}
\begin{align}
&|0\rangle \otimes |\underbrace{0\cdots0}_{K}\rangle 
\left[ \xrightarrow{|0\rangle \langle 0 | \otimes R_k + |1\rangle \langle 1 | \otimes X}\right]_{k=1\cdots K} 
 x_1 \cdots x_K |0\rangle \otimes |\underbrace{0\cdots0}_{K}\rangle + \sum_{k=1}^K x_1 \cdots x_{k-1} y_k |1\rangle \otimes |\underbrace{0\cdots0}_{k-1}1_k \underbrace{0\cdots0}_{K-k}\rangle \notag \\
& \xrightarrow[\text{on $K$ ancillas}]{\text{if $|b_1\rangle = |1 \rangle$}}
~~\sum_{k=1}^K x_1 \cdots x_{k-1} y_k |\underbrace{0\cdots0}_{k-1}1_k \underbrace{0\cdots0}_{K-k}\rangle ,
\end{align}
\end{widetext}
where $x_1 \cdots x_{k-1} y_k = c_k$ for $k=1\cdots K$.
The probability of successfully preparing the state in step \#4 is $1/K$. We can boost the success probability to close to $1$ by classically repeating this preparation $\mathcal{O}(K)$ times, or doing $\mathcal{O}(\sqrt{K})$ steps of amplitude amplification.

\begin{figure}
    \centering
    \includegraphics[width=\linewidth]{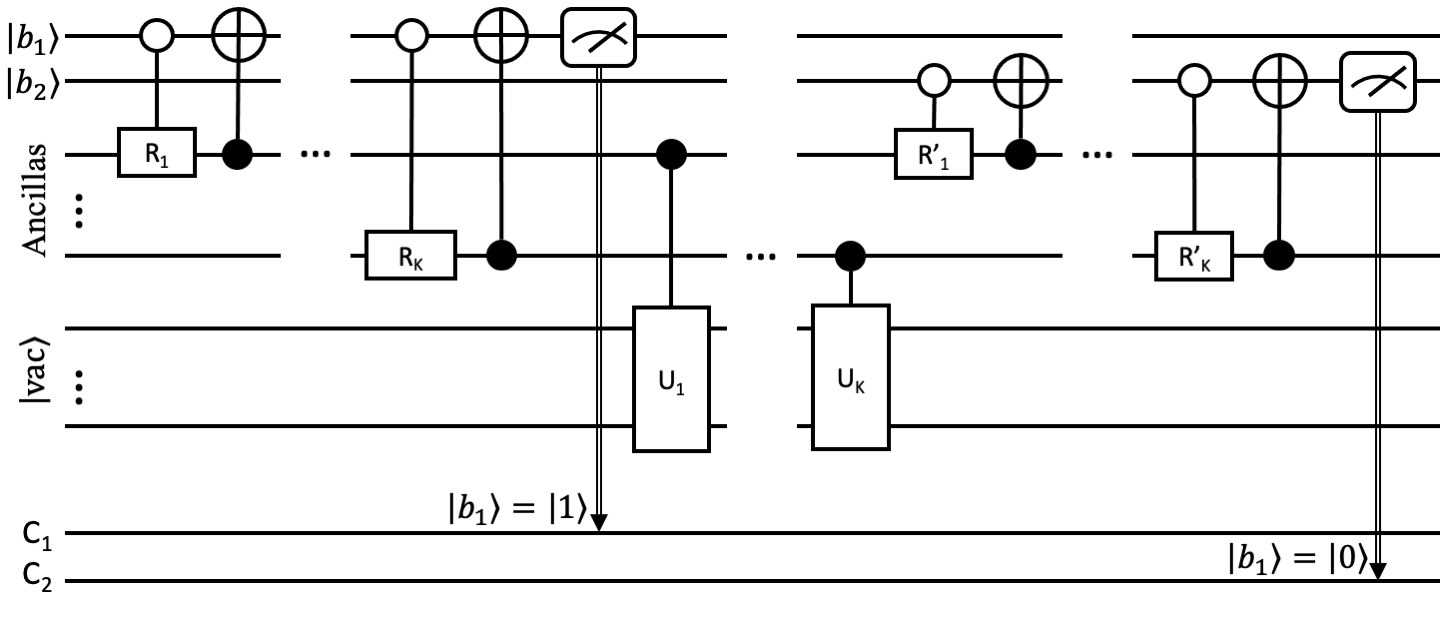}
    \caption{Circuit to prepare the state $\frac{1}{\sqrt{K}}\sum_{k=1}^K c_k |\phi_{b,k}\rangle$ with conditions (\ref{cond}). $R_k$ and $R'_k$ ($k=1,\cdots, K$) gates rotate the single qubit state $|0\rangle$ or $|1\rangle$ to the state $x|0\rangle + y|1\rangle$. The unitary gate $U_k$ ($k=1,\cdots, K$) acting on $|\text{vac}\rangle$ to generate $|\phi_{b,k}\rangle$. $C_{1/2}$ are two classical registers to take the projection of the ancillas $|b_{1/2}\rangle$ on the compuational basis $|0\rangle$ and $|1\rangle$. In the controlled gates, `$\circ$' denotes turning on the conditional operation when the controlled qubit is $|0\rangle$, while `$\bullet$' denotes turning on the operation when the controlled qubit is $|1\rangle$. Only when $|b_1\rangle = |1\rangle$ and $|b_2\rangle = |0\rangle$, the desired state will be prepared.}
    \label{fig: state_prep}
\end{figure}

We can construct a more efficient state preparation by generating the state
\begin{equation}
    \frac{1}{\sqrt{\sum_{\kappa=1}^{K}\abs{c_\kappa}}}\sum_{k=1}^K\sqrt{\abs{c_k}}\ket{k}
\end{equation}
in step \#2 and measure the same state in \#4, while introducing phases of $c_k$ in step \#3. The probability of success then becomes $1/(\sum_{k=1}^{K}\abs{c_k})^2$, and can be boosted to unity with $\mathcal{O}(\sum_{k=1}^{K}\abs{c_k})$ steps of amplitude amplification. Note that this approach has a lower complexity than the previous approach due to the fact that the inequality
\begin{equation}
    \sum_{k=1}^{K}\abs{c_k}\leq\sqrt{K}\sum_{k=1}^K |c_k|^2=\sqrt{K}
\end{equation}
always holds but is not always saturated.

It is possible to further lower the complexity by synthesizing a permutation unitary. Specifically, we first generate $\sum_{k=1}^K c_k |k\rangle$, and our goal is then to implement the transformation
\begin{equation}
    \ket{k}\mapsto\ket{\phi_{b,k}},
\end{equation}
where both $k$ and $\phi_{b,k}$ are encoded in binary. This is a permutation of $K$ basis states. It can be decomposed into cyclic permutations of total length $\mathcal{O}(K)$, and further into $\mathcal{O}(K)$ transpositions (transformations between two basis states with all remaining basis states fixed). A transposition can be represented as a two-level unitary, and can be synthesized using the Gray code as in~\cite[Section 4.5.2]{Nielsen2010}.

\subsection{Capture Ground States}\label{sec:ground}

Once the Hamiltonian and the initial states are encoded, we can start to compute the ground states. Take the boson-fermion Hamiltonian as an example, bounds to the lower portion of the eigenspectrum of $\hat{H}$ can be obtained through many approaches. On the NISQ devices, typical routines include hybrid quantum-classical Variational Quantum algorithm (VQA)~\cite{Peruzzo_2014,mcclean2016theory,romero2018strategies,shen2017quantum,Kandala2017hardware,kandala2018extending,colless2018computation, huggins2020non},
quantum approximate optimization algorithm (QAOA)~\cite{farhi2014quantum},
quantum annealing \cite{RevModPhys.94.015004,albash2018adiabatic},
gaussian boson sampling~\cite{Aaronson2011},
analog quantum simulation~\cite{trabesinger2012quantum,georgescu2014quantum},
iterative quantum assisted eigensolver~\cite{mcardle2019variational,motta2020determining,parrish2019quantum,kyriienko2020quantum}, imaginary time evolution (ITE)~\cite{PhysRevA.99.062304,mcardle2019variational,Yuan2019theoryofvariational,motta2020determining,PRXQuantum.2.010333,yeter2020practical,PRXQuantum.2.010317,PhysRevA.105.012412,liu2021probabilistic,Yeter-Aydeniz_2021,PhysRevResearch.4.033121},
and many others. Particularly, the ITE approach has a long history of being a robust computational approach to solve the ground state of a many-body quantum system. The development and application of ITE approach targeting the ground state wave function and energy dates back to 1970s, when the similar random-walk imaginary-time technique were developed for diffusion Monte-Carlo methods~\cite{davies1980application, anderson1975random, anderson1979quantum, anderson1980quantum}. 

Combining the ITE with the variational expansion, 
there had been another interesting yet less known moment approach proposed by Peeters and 
Devreese \cite{peeters1984upper}, and further analyzed by 
Soldatov \cite{soldatov1995generalized} in the 1990s, which we will refer to as the Peeters-Devreese-Soldatov (PDS) approach. We recently reviewed this method and proposed its potential use for accurate quantum computations~\cite{Kowalski2020, claudino2021improving, Peng2021variationalquantum,peng2022State}. 
Specifically, in this approach the energy functional depends on the moments of the Hamiltonian $\langle \phi | H^n | \phi \rangle =  \langle H^n \rangle$, for some trial state $|\phi \rangle$. For a PDS approximation of order $K$, that is, PDS($K$), one needs to estimate $\langle \hat{H} \rangle,\langle \hat{H}^2 \rangle, \cdots \langle \hat{H}^{2K-1} \rangle$. The moments serve as elements of the matrix $\mathbf{M}$ ($M_{ij} = \langle \hat{H}^{2K - i - j} \rangle$) and vector $\mathbf{Y}$ ($Y_i = \langle \hat{H}^{2K - i} \rangle$), which are related by $\mathbf{MX} =- \mathbf{Y}$. The solution of this linear system comprises the coefficients of the following polynomial
\begin{equation}
P_K(\mathcal{E}) = \mathcal{E}^{K} + \sum_{i=1}^K X_i \mathcal{E}^{K-i}=0,
\end{equation}
whose roots are the bounds to the $K$ lowest eigenvalues in the eigenspectrum of $\hat{H}$.

One requirement for PDS to work is that the trial state $|\phi \rangle$ finds support in the ground state of $\hat{H}$, that is, it has non-zero overlap with the ground state. This is a crucial requirement because variation in the magnitude of the coupling constant $g$ in (\ref{eq:3site_ham}) leads to changes in the relative position of the energy levels corresponding to states in different fermionic number sectors. This change is due to the number of fermions increasing as $g$ becomes larger until all fermionic sites are occupied. In the case of a three-site Holstein model, for $g \lesssim 1.27$, the total number of fermions is 1, and for $g \gtrsim 1.27$ it is 3. Consequently, an attempt to estimate the ground state energy with methods such as PDS in the entire domain of $g$ implies a trial state that is a superposition of states with all allowed particle numbers, which in this case can be as simple as
\begin{eqnarray}
    \label{eq:holstein_trial}
    |\phi\rangle = \frac{1}{\sqrt{2}}\underbrace{(|001\rangle + |111\rangle)}_{\text{fermions}} \otimes \underbrace{|000\rangle}_{\text{bosons}}.
\end{eqnarray}
The results of employing the trial state in (\ref{eq:holstein_trial}) to evaluate the bound to the ground state energy provided by various orders of the PDS energy functional as function of the parameter $g$ are reported in Figure \ref{fig:En_Holstein}.


\begin{figure}
    \centering
    \includegraphics[width=\linewidth]{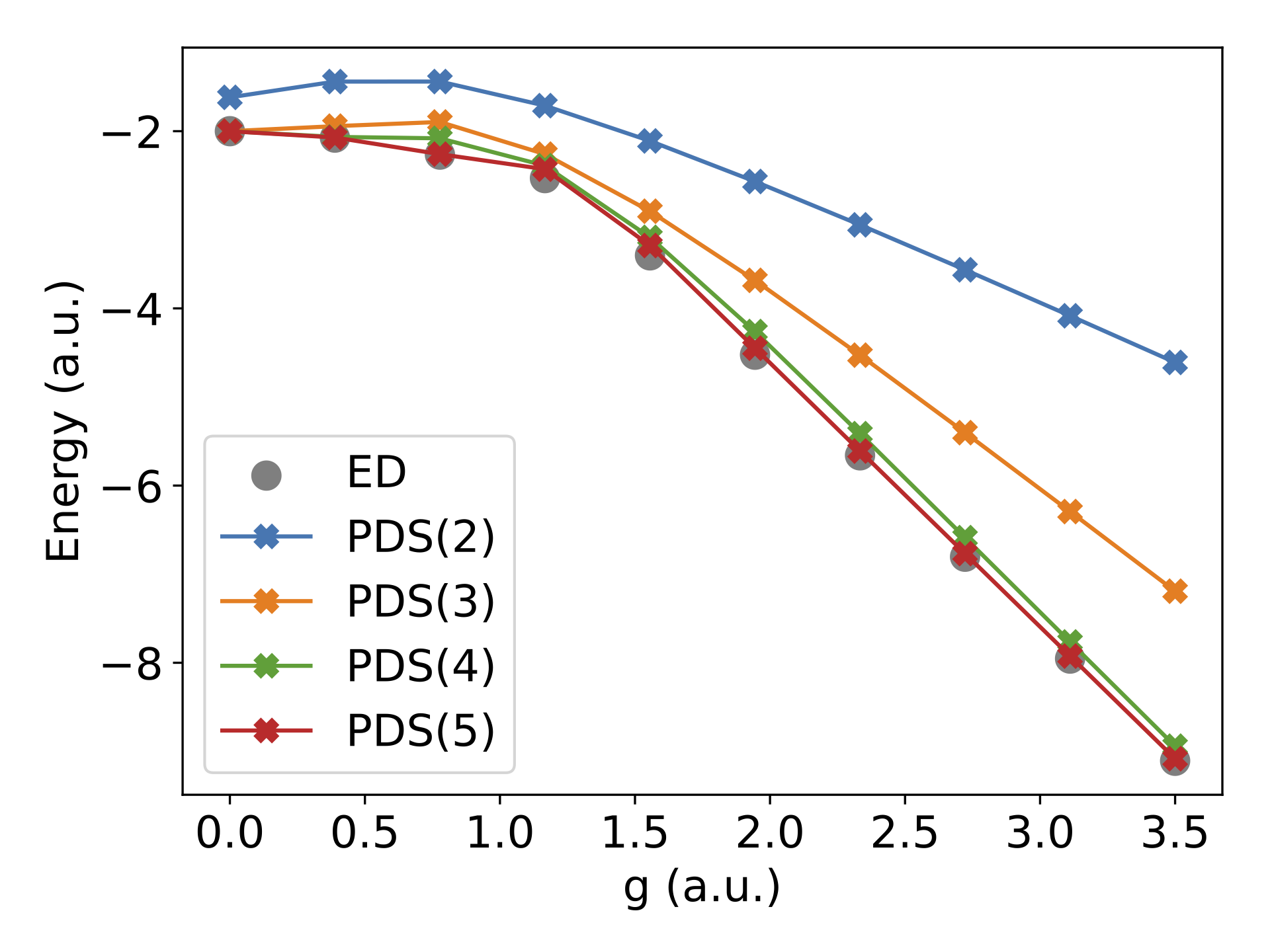}
    
    \caption{The change in the ground state energy from exact diagonalization (ED) and in various orders of the PDS energy functional of a three-site Holstein model as a function of fermion-boson coupling strength.}
    \label{fig:En_Holstein}
\end{figure}

The simple trial state put forth in (\ref{eq:holstein_trial}) is adequate to capture the different particle number sectors that are covered in the domain of the fermion-boson coupling $g$. This is evidenced by Figure \ref{fig:En_Holstein} in demonstrating that the bound to the ground state energy furnished by the PDS energy functional rapidly converges toward the value obtained by exact diagonalization (ED) with increasing order parameter $K$. While PDS(4) displays small, but still noticeable disagreement with exact values, PDS(5) energy estimates are visually indistinguishable from the results from ED. This lends credence to the suitability of the PDS approach to be invaluable tool in studying open quantum systems.

\subsection{Product Formulas}\label{sec:trotter}

Regarding the simulation of bosonic evolution, the idea is to approximately implement the unitary operator $\hat{U}(t) = \exp(-i\hat{H}t)$ where $H$ is the Hamiltonian of system with its representations exemplified in Section \ref{sec: boson_qubit}. For a given Hamiltonian that has a polynomial large number of terms (with respect to $N$), its unitary evolution $\hat{U}(t)$ can be efficiently performed on a quantum computer by applying a polynomial number of single and two-qubit gates.

There are many quantum algorithmic solutions designed for quantum dynamics (we refer the readers to Ref. \citenum{Miessen23_quantum} for a recent perspective). 
\textcolor{black}{One category of quantum algorithms for simulating the evolution $\hat{U}(t)$ is the product formula~\cite{Lloyd1996,BACS05,suzuki1991}, which includes the Lie-Trotter product formula and higher-order Suzuki formulas. The product formula is straightforward and efficient, while also preserving certain symmetries of the dynamics, making it a natural choice for quantum simulations.}
A product formula approximates the target evolution by a product of exponentials of terms in the Hamiltonian.
Take the Trotter decomposition as an example, if $\hat{H}=\hat{K}+\hat{V}$, then
\begin{align}
    \hat{U}(t) = e^{-i\hat{H}t} \left\{ \begin{array}{lc}
    = e^{-i\hat{K}t}e^{-i\hat{V}t}    &  \text{if}~ [\hat{K},\hat{V}] = 0\\
    \approx  [e^{-i\hat{K}t/n}e^{-i\hat{V}t/n}]^n    & \text{if}~ [\hat{K},\hat{V}] \neq 0
    \end{array}. \right.
\end{align}
Specifically, the approximation error of each step is bounded by
\begin{equation}
    \norm{\hat{U}\left(\frac{t}{n}\right)-e^{-i\hat{K}t/n}e^{-i\hat{V}t/n}}\leq\frac{\norm{[\hat{K},\hat{V}]}t^2}{2n^2},
\end{equation}
which implies an overall error of at most
\begin{equation}
    \norm{\hat{U}\left(t\right)-[e^{-i\hat{K}t/n}e^{-i\hat{V}t/n}]^n}\leq\frac{\norm{[\hat{K},\hat{V}]}t^2}{2n}.
\end{equation}
To achieve an accuracy $\epsilon$, it thus suffices to take
\begin{equation}
    n=\left\lceil\frac{\norm{[\hat{K},\hat{V}]}t^2}{2\epsilon}\right\rceil.
\end{equation}
This analysis can be extended to Hamiltonians containing multiple terms, and to other Trotter decompositions with a higher-order accuracy~\cite{CSTWZ21}.

As discussed in Section \ref{sec: boson_qubit}, we can represent bosonic operators using either a unary or a binary encoding.
For the unary encoding, bosonic operators are represented by linear combinations of Pauli operators which can then be split using the Trotter decomposition.
To exponentiate a Pauli operator, take $\exp(-i\frac{\delta}{2} XXYY)$ as an example. Since 
\begin{align}
    HZH = X,~~ S X  S^\dagger = Y, 
\end{align}
with Hadamard gate $H = \frac{1}{\sqrt{2}}\left(\begin{array}{cc} 1 & 1 \\ 1 & -1 \end{array}\right)$ and phase gate $S = \left(\begin{array}{cc} 1 & 0 \\ 0 & i \end{array}\right)$, we then have
\begin{align}
    XXYY &= IISS\cdot XXXX \cdot IIS^\dagger S^\dagger \notag \\
    &= IISS\cdot HHHH \cdot ZZZZ \cdot HHHH \cdot IIS^\dagger S^\dagger
\end{align}
and
\begin{align}
    &\exp(-i\frac{\delta}{2} XXYY)  \\
    &= IISS\cdot HHHH \cdot \exp(-i\frac{\delta}{2} ZZZZ) \cdot HHHH \cdot IIS^\dagger S^\dagger. \notag
\end{align}
The entire circuit representing $\exp(-i\frac{\delta}{2} XXYY)$ is shown in Figure \ref{fig:ladder_op}, where the middle ``CNOT-staircase" circuit represents the operator $\exp(-i\frac{\delta}{2} ZZZZ)$.

\begin{figure}
    \centering
    \includegraphics[width=\linewidth]{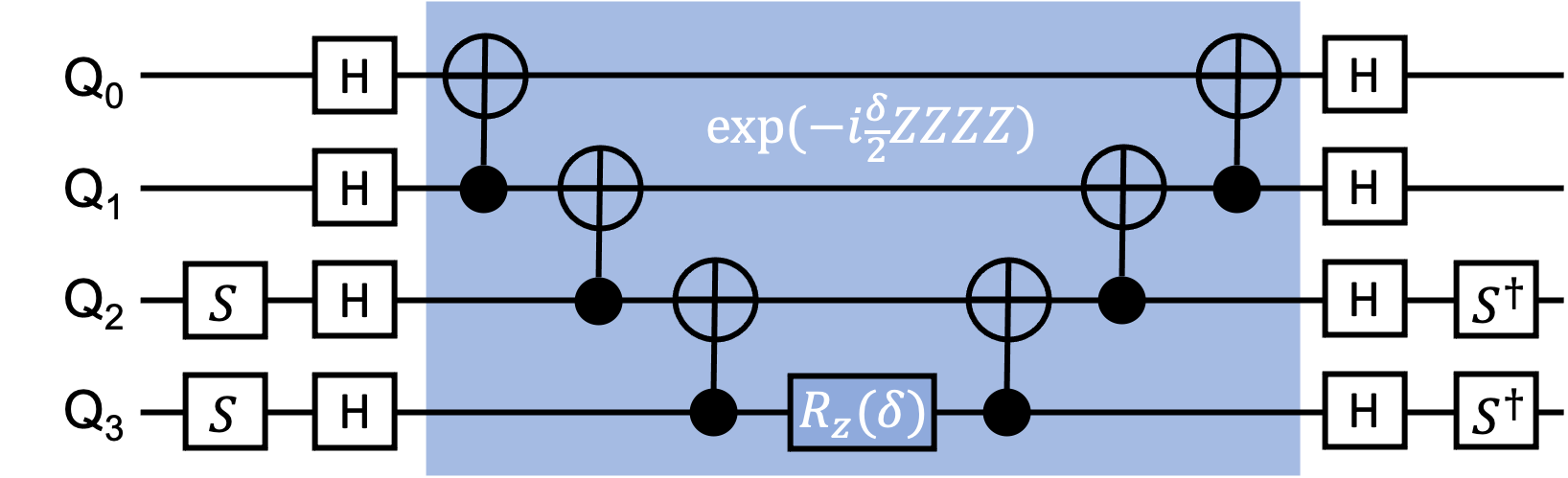}
    \caption{Circuit to generate the operator $\exp(-i\frac{\delta}{2} XXYY)$. The middle ``CNOT-staircase'' (shaded) corresponds to the operator $\exp(-i\frac{\delta}{2} ZZZZ)$.}
    \label{fig:ladder_op}
\end{figure}

For the binary encoding, we can implement bosonic operators by representing them as linear combinations of position and momentum operators, which are diagonalized in the position and momentum basis respectively. When two operators act on different modes, or when operators of the same type act on a single mode, they can be simultaneously diagonalized and the circuit implementation is straightforward. However, the implementation becomes more challenging for products of position and momentum operators on the same mode (such terms arise in squeezing Hamiltonians~\cite{gerry2005introductory}). 
Take $e^{-it (pq + qp)}$ as an example. Since neither $|p\rangle$ nor $|q\rangle$ are eigenvectors of $(pq + qp)$, the diagonalization of $(pq + qp)$ needs to be performed numerically on the classical side followed by encoding the eigenvectors in terms of the boson states~\cite{Macridin18}, i.e.
\begin{align}
    (pq + pq)|\mu\rangle = \epsilon_{\mu} | \mu \rangle,~~
    |\mu\rangle = \sum_{i=0}^{N_b-1} U_{\mu,i} |n_i\rangle.
\end{align}
where the matrix $\mathbf{U}$ with $U_{\mu,i}$ being the element is unitary. 
The complexity of this approach has a polynomial scaling with the bosonic cutoff instead of a logarithmic scaling as in the diagonalizable case (we refer the readers to the Section IV E in Ref. \citenum{Macridin18} for a detailed discussion).
Alternatively, we can think about approximating $e^{-it (pq + qp)}$ using product formulas~\cite{Lloyd1997,Park2017}. Note that
\begin{align}
    pq + qp = \{p,q\}
\end{align}
is essentially an anticommutator, then employing the strategy introduced in Ref. \citenum{Childs2013PF_Commutator}, one can create analogs of $pq$ and $qp$ on an enlarged Hilbert space
\begin{align}
    p' := p \otimes Y,~~ q':= q \otimes X 
\end{align}
where $X$ and $Y$ are Pauli matrices. Then for any quantum state $|\phi\rangle$ prepared in the same Hilbert space as $p$ and $q$, we have
\begin{align}
    &[p',q']|\phi\rangle \otimes |0\rangle = -i \{p,q\}|\phi\rangle \otimes |0\rangle \notag \\
    \Rightarrow &~~ e^{-it\{p,q\}}|\phi\rangle \otimes |0\rangle
    = e^{t[p',q']} |\phi\rangle \otimes |0\rangle.
\end{align}
Then a product formula approximation can be constructed for $e^{t[p',q']}$ with a number of exponentials that scales almost linearly with the evolution time (a similar formula can be used to implement $\hat{H}^n$ for the PDS($K$) approach discussed in Section \ref{sec:ground}). Commutator-type error bounds can also be derived for such product formulas. For example, following Ref. \citenum{gluza2022doublebracket} (Lemma 6), we have
%
\begin{align}
    &\| e^{-itp} e^{-itq} e^{itp} e^{itq} - e^{-t^2[p,q]/2} \| \notag\\
    &~~~~~~~~~~~~\le t^3\left(\| p, [p,q] \| + \|q, [q,p] \|\right).
\end{align}



\subsection{Connection to the Experimental Simulations of Open Quantum Systems}

\textcolor{black}{Quantum information processing is inherently susceptible to environmental noise, a crucial factor in simulating phenomena like non-equilibrium phase transitions~\cite{Torre10}, driven-dissipative phase transitions~\cite{Schindler13, PhysRevLett.123.173601}, and non-Hermitian topological phenomena~\cite{PhysRevX.9.041015} observed in quantum many-body systems. Real-world quantum systems deviate from idealized models, exhibiting non-unitary dynamics due to environmental interactions~\cite{Koch_2016, Nielsen2010}. To simulate the continuous evolution of open quantum systems, such as those found in circuit quantum electrodynamics (cQED) architectures, the Trotterization method has been successfully applied~\cite{PhysRevLett.127.020504}. In cQED, the first two Fock states of a microwave cavity, $|0\rangle$ and $|1\rangle$, represent the data qubit, while a dispersively coupled transmon qubit acts as an ancilla. By tuning the intensity of environmental noise, we can precisely control the open system dynamics, providing valuable insights into the behavior of quantum systems in the absence of noise.}

\textcolor{black}{The generalization of Trotterization to open systems follows the same principles as for closed systems. Higher-order Trotter schemes, such as the second-order Trotter, offer improved precision. For general Liouvillians, $\mathcal{L}_j (j \in {1, ..., m})$, encompassing both coherent and incoherent components, the second-order Trotter achieves an error of $\mathcal{O}(\Delta t_3)$:
\begin{align}
\exp(\sum_{j=1}^m\mathcal{L}_j\Delta t) &= \prod_{j=1}^m \exp(\mathcal{L}_j\Delta t/2)\prod_{j=m}^1 \exp(\mathcal{L}_j\Delta t/2) \notag \\
&~~+ \mathcal{O}(\Delta t^3),
\end{align}
compared to the $\mathcal{O}(\Delta t_2)$ error of the first-order Trotter:
\begin{align}
\exp(\sum_{j=1}^m\mathcal{L}_j\Delta t) = \prod_{j=1}^m \exp(\mathcal{L}_j\Delta t) + \mathcal{O}(\Delta t^2).
\end{align}
This approach can be readily extended to larger systems with multiple qubits or higher-dimensional qudits. Consequently, Trotterization provides a versatile framework for simulating the dynamics of open quantum systems with high dimensionality, offering the unique advantage of emulating quantum systems in environments with tunable noise intensities over a wide range.}

\subsection{Alternative Simulation Algorithms}\label{sec:qubitization}
We now discuss how bosonic Hamiltonians can be simulated using alternative algorithms such as qubitization and quantum signal processing. Specifically, we will focus on demonstrating qubitization-based Hamiltonian simulation. At the end of the section, we will also briefly review the application of quantum signal processing and its generalized version for the same task.

The basic component of this algorithm is a probabilistic encoding of the target operator using unitary operations. For instance, suppose the target operator has the decomposition $H=\sum_{\ell=1}^L\beta_\ell U_\ell$ as a linear combination of unitaries, where the coefficients $\beta_\ell$ are all positive and operators $U_\ell$ are unitaries. Then, we define
\begin{equation}
    \mathrm{PREP}|0\rangle=\frac{1}{\sqrt{\sum_{\ell=1}^L\beta_\ell}}\sum_{\ell=1}^L\sqrt{\beta_\ell}|\ell\rangle,~~
    \mathrm{SEL}=\sum_{\ell=1}^L|\ell\rangle\!\langle\ell|\otimes U_\ell,
\end{equation}
so that
\begin{equation}
    \left(\langle0|\mathrm{PREP}^\dagger\otimes I\right)\mathrm{SEL}\left(\mathrm{PREP}|0\rangle\otimes I\right)=\frac{H}{\sum_{\ell=1}^L\beta_\ell}.
\end{equation}
That is, the target Hamiltonian $H$ is encoded by the unitaries $\mathrm{PREP}$ and $\mathrm{SEL}$ with the normalization factor $1/\sum_\ell\beta_\ell$. With this encoding, one can use the so-called qubitization algorithm~\cite{Low2019hamiltonian} to approximate the time evolution $e^{-itH}$ for an accuracy $\epsilon$ by making
\begin{equation}
    \mathcal{O}\left(\sum_\ell\beta_\ell t+\log\left(\frac{1}{\epsilon}\right)\right)
\end{equation}
queries to $\mathrm{PREP}$ and $\mathrm{SEL}$.
The action of $\mathrm{PREP}$ is to prepare an $L$-dimensional state which takes $\mathcal{O}(L)$ gates~\cite{ShendeBullockMarkov06}. For $\mathrm{SEL}$, we need to cycle through the binary representation of all $\ell=1,2,\ldots,L$, which has cost $\mathcal{O}(L)$~\cite[Appendix G.4]{CMNRS18}.

Besides the linear-combination-of-unitary model, it is also possible to use the qubitization algorithm when the target operator $H$ is sparse, meaning that each row or column of $H$ has at most a constant number of nonzero elements. The underlying idea is to view the target Hamiltonian as the weighted adjacency matrix of some graph and then perform quantum walk~\cite{Childs2010}.

Note that the Hamiltonians introduced in Section \ref{sec: boson_qubit} can all be expressed as linear combinations of elementary products of spin, fermionic, bosonic operators. Assuming encodings to the elementary operators are available, it is straightforward to perform linear combinations and multiplications to encode the target Hamiltonian, and thereby simulating the Hamiltonian via qubitization. Efficient encodings of spin and fermionic operators are known from previous work such as Ref. \citenum{PhysRevX.8.041015}. Here, we focus on the encoding of bosonic Hamiltonians. 

Specifically, we consider the truncated bosonic operator
\begin{align}
    b^\dagger &=
    \begin{bmatrix}
    0 & \cdots & \cdots & \cdots & 0\\
    1 & 0 & \ddots & \ddots & \vdots\\
    0 & \sqrt{2} & \ddots & \ddots & \vdots\\
    \vdots & \ddots & \ddots & \ddots & \vdots\\
    0 & \cdots & \cdots & \sqrt{\Lambda-1} & 0
    \end{bmatrix} \notag \\
    &=\sum_{\lambda=0}^{\Lambda-1}\sqrt{(\lambda+1)\bmod\Lambda}|(\lambda+1)\bmod\Lambda\rangle\!\langle\lambda|.
\end{align}
This operator is sparse and can thus be encoded using quantum walk as discussed above. However, the resulting circuit is complicated and the square root function requires a large amount of arithmetics to implement. In the following we describe an alternative method based on a linear combination of uniatries, which can be easier to implement with a minimum amount of arithmetics.

Suppose we want to implement some nonnegative function $f(\lambda)$ for $\lambda\in[0,\Lambda]$. The core idea behind our implementation is to consider the integral representation~\cite[Appendix C]{babbush2018quantum}
\begin{equation}
    f(\lambda)=\int_{0}^{\|f\|_{\max}}{\rm d}x\ (-1)^{2x>\|f\|_{\max}+f(\lambda)}.
\end{equation}
Here, $\|f\|_{\max}=\max_{\lambda\in[0,\Lambda]}|f(\lambda)|$ is the maximum value of $f$ over the interval $[0,\Lambda]$, and $2x>\|f\|_{\max}+f(\lambda)$ is a Boolean expression that has value $1$ if the condition is true and $0$ otherwise. To prove this equality, note that the integral on the right-hand side simplifies to
\begin{equation}
    (+1)\frac{\norm{f}_{\max}+f(\lambda)}{2}
    +(-1)\left(\norm{f}_{\max}-\frac{\norm{f}_{\max}+f(\lambda)}{2}\right)
\end{equation}
which gives $f(\lambda)$ on the left-hand side.
Applying this integral representation to the bosonic operator, we have
\begin{align}
    b^\dagger&=\int_{0}^{\sqrt{\Lambda-1}}{\rm d}x\sum_{\lambda=0}^{\Lambda-1}(-1)^{2x>\sqrt{\Lambda-1}+\sqrt{(\lambda+1)\bmod\Lambda}} \notag \\
    &~~~~~~~~~~~~~~~~~~~~~~~~~~~~ \times |(\lambda+1)\bmod\Lambda\rangle\!\langle\lambda|.
\end{align}
This suggests a circuit implementation as follows. We first prepare a uniform superposition state $\mathrm{PREP}\ket{0}=\frac{1}{\sqrt{\Xi}}\sum_{\xi=0}^{\Xi-1}\ket{\xi}$ for some large value of $\Xi$ to be determined later. We then implement a cyclic shift on the register $\ket{\lambda}$ modulo $\Lambda$. We now test the inequality
\begin{equation}
    2\frac{\xi}{\Xi}\sqrt{\Lambda-1}>
    \sqrt{\Lambda-1}+\sqrt{(\lambda+1)\bmod\Lambda},
\end{equation}
and use the outcome to flip the minus sign. These two operations together define $\mathrm{SEL}$, and $\left(\langle0|\mathrm{PREP}^\dagger\otimes I\right)\mathrm{SEL}\left(\mathrm{PREP}|0\rangle\otimes I\right)$ encodes the Riemann sum
\begin{align}
    &\frac{1}{\Xi}\sum_{\xi=0}^{\Xi-1}
    \Bigg(\sum_{\lambda=0}^{\Lambda-1}(-1)^{2\frac{\xi}{\Xi}\sqrt{\Lambda-1}>\sqrt{\Lambda-1}+\sqrt{(\lambda+1)\bmod\Lambda}} \notag \\
    &~~~~~~~~~~~~~~~~~~~~\times |(\lambda+1)\bmod\Lambda\rangle\!\langle\lambda|\Bigg)
\end{align}
which approximates the normalized bosonic operator
\begin{equation}
    \frac{b^\dagger}{\sqrt{\Lambda-1}}
\end{equation}
when $\Xi$ is sufficiently large.

We now consider the gate complexity of implementing this encoding. To simplify the discussion, we assume that both $\Lambda$ and $\Xi$ are powers of $2$. Then the preparation of the uniform superposition state takes only $\log(\Xi)$ Hadamard gates. The cyclic shifting on $\ket{\lambda}$ can be realized as a binary addition on $\log(\Lambda)$ bits and thus has complexity $\mathcal{O}(\log(\Lambda))$. For the next step, we can equivalently test the following system of inequalities
\begin{equation}
\begin{aligned}
\begin{cases}
2\xi>\Xi,\\
(2\xi-\Xi)^2(\Lambda-1)>\Xi^2\left((\lambda+1)\bmod\Lambda\right),
\end{cases}
\end{aligned}
\end{equation}
which has a cost of
\begin{equation}
    \mathcal{O}(\log^2(\Xi)\log(\Lambda)).
\end{equation}
This is also the asymptotic gate complexity for the entire encoding.

We now consider choosing $\Xi$ so that the Riemann sum well approximates the integral. For a fixed value of $\lambda$, note that the Riemann sum
\begin{equation}
    \frac{1}{\Xi}\sum_{\xi=0}^{\Xi-1}
    (-1)^{2\frac{\xi}{\Xi}\sqrt{\Lambda-1}>\sqrt{\Lambda-1}+\sqrt{(\lambda+1)\bmod\Lambda}}
\end{equation}
and the integral
\begin{equation}
    \frac{1}{\sqrt{\Lambda-1}}\int_{0}^{\sqrt{\Lambda-1}}{\rm d}x\ (-1)^{2x>\sqrt{\Lambda-1}+\sqrt{(\lambda+1)\bmod\Lambda}}
\end{equation}
differ over an interval of length at most $1/\Xi$. Thus, their difference is bounded by $2/\Xi$. Since $b^\dagger$ is $1$-sparse, the error of approximating $b^\dagger$ is given by the maximum error of approximating its entries. Therefore, we choose $\Xi=\mathcal{O}(1/\delta)$ so that
\begin{equation}
    \norm{\left(\langle0|\mathrm{PREP}^\dagger\otimes I\right)\mathrm{SEL}\left(\mathrm{PREP}|0\rangle\otimes I\right)
    -\frac{b^\dagger}{\sqrt{\Lambda-1}}}
    \leq\delta
\end{equation}
and this encoding can thus be implemented with cost $\mathcal{O}(\log(\Lambda)\log^2(1/\delta))$.

\textcolor{black}{Beyond the bosonic context, general Hamiltonian simulation can also be performed using quantum signal processing (QSP)~\cite{low2017optimal,PRXQuantum.3.040305,PhysRevA.103.042419,Rossi2022multivariable,rossi2023quantumsignalprocessingcontinuous,PhysRevA.108.062413,PRXQuantum.5.020368}. The core idea of QSP is to construct a polynomial approximation of the target unitary operator $U$, such as the time propagator in Hamiltonian simulation, $U=\exp(-i\hat{H}t)$, assuming oracular access to a unitary $U$ that encodes the polynomial. More recently, generalized quantum signal processing (GQSP) has been introduced~\cite{PRXQuantum.5.020368}, offering fewer restrictions and a more economical number of operations. In GQSP, the block encoding of the polynomial of the target function is achieved by interleaving general $SU(2)$ rotation, $R(\theta,\phi,\lambda)$, on an ancillary qubit with a 0-controlled application of the target function, $A = |0\rangle \langle 0|\otimes U + |1\rangle \langle 1 | \otimes \mathbf{I}$. Notably, the application of GQSP in Hamiltonian simulation achieves the same query complexity as qubitization, requiring $\mathcal{O}(t+\log(1/\epsilon)/\log\log(1/\epsilon))$ controlled-$U$ operations and two-qubit gates, while using only a single ancillary qubit.}

We have so far constructed a quantum circuit that encodes the bosonic operator $b^\dagger$. The conjugate transpose of this circuit encodes $b$. On the other hand, the encoding of spin and fermionic operators is well studied in the context of quantum chemistry simulation~\cite{PhysRevX.8.041015}.
Then the qubitization and other algorithms allow us to implement linear combinations of products of these elementary operators, which is sufficient to simulate bosonic models, such as the spin-boson Hamiltonian \eqref{eq:spin_boson_def} and the boson-fermion Hamiltonian \eqref{Holstein}.


\section{Effective Hamiltonian Construction Through Coupled-Cluster Approach}

To further optimize quantum computation by reducing the number of gates and operations, a standard approach involves constructing an effective Hamiltonian for the quantum system. This Hamiltonian accurately encapsulates particle interactions while conserving resources. The Hamiltonian can follow a unitary and/or non-unitary path to transform into an effective form.

The unitary transformation has been extensively discussed in the context of understanding many-body localization phenomena~\cite{Heidbrink02_Renormalization,Kehrein94_flow,basko06_metal,Nandkishore15_many,PhysRevLett.119.075701,PhysRevLett.116.010404,PhysRevB.94.104202}. For completeness, we provide a concise overview of the unitary path in Appendix \ref{App:U_path}. Subsequently, our focus shifts to exploring some coupled cluster (CC) formulations, which are inherently non-unitary, and designing quantum algorithms specifically for bosonic systems, \textcolor{black}{complementing the typical discussion of the unitary coupled cluster theory in the quantum computing context (see Ref.~\cite{D1CS00932J} for a recent perspective)}. It is worth noting that the interplay between unitary and non-unitary features and operations becomes apparent during the Hamiltonian diagonalization and open system-solving processes. Specifically, traditional flow equations and corresponding generators can be generalized to handle non-Hermitian matrices and open quantum systems governed by, for example, Lindbladians (see, e.g. Ref. \citenum{Schmiedinghoff22_Efficient}). On the other hand, recent progress in encoding general non-unitary operators on quantum computers allows for highly accurate quantum simulations of correlated fermionic systems~\cite{PhysRevResearch.4.043172,Childs2012Quantum,brassard2002quantum,berry2015simulating,low2017optimal,Low2019hamiltonian}. Here, we extend the CC downfolding technique to bosonic systems. As will be evident in the ensuing discussion, the unitary analog of traditionally non-unitary CC formulations for bosonic systems can be meticulously designed to ensure the exactness of the ans\"{a}tz, which then guarantees that the effective Hamiltonian can be accurately and efficiently constructed via appropriate quantum algorithms.

The single reference coupled-cluster formalism for a mixture of bosons localized at different sites has been intensively studied in the context of Bose-Einstein condensates (BECs) and trapped bosonic systems~\cite{PhysRevA.73.043609,ALON2006151}. 
Let us focus attention on the system of $N$ identical  bosons 
that can occupy $M$ ``one-particle" states. In such case the dimensionality of the bosonic FCI space is 
\begin{equation}
{\rm dim}_{\rm FCI} = {M+N-1 \choose N}
\label{cc1}
\end{equation}
which grows much faster than analogous dimension of the Fermionic FCI space. 
The CC parametrization of the bosonic ground-state wave function $|\Psi\rangle$ takes analogous form as in the fermionic case
\begin{equation}
|\Psi\rangle = e^T |\phi_0\rangle \;,
\label{cc2}
\end{equation}
where the normalized reference function $|\phi_0\rangle$ for bosons is defined as 
\begin{equation}
|\phi_0\rangle = \frac{1}{\sqrt{N!}} (b_1^{\dagger})^N |\rm{vac}\rangle 
\end{equation}
and $|0\rangle$ denotes physical vacuum. The cluster operator $T$ (in the class of standard CC approximations) is represented as a sum of its many-body components $T_k$
\begin{equation}
T_k = \sum_{a_1,\ldots,a_k = 2}^{N} t_{a_1\ldots a_k} b_{a_1}^{\dagger} \ldots b_{a_k}^{\dagger} (b_1)^{k}
\label{cc3}
\end{equation}
where $[T_k,T_l]=0$. As for the non-Hermitian CC downfolding let us partition one-particle space into the lowest $M_{\rm act}$ active one-particle functions and remaining (inactive). Our goal is to  build an effective representation of the Hamiltonian that furnished the same ground-state energy in the active space the  dimension 
\begin{equation}
{\rm dim}_{\rm act} = {M_{\rm act} +N-1 \choose N}
\label{cc4}
\end{equation}
To this end, let us partition the $T$ operator into internal ($T_{\rm int}$) and external ($T_{\rm ext}$) parts 
\begin{equation}
T = T_{\rm int} + T_{\rm ext}
\label{cc4}
\end{equation}
where $T_{\rm int}$ and $T_{\rm ext}$ produce excitation within and outside of model space, respectively, when acting on the reference function $|\Phi\rangle$. For example, it means that cluster amplitudes defining $T_{\rm int}$ carry active orbital indices only, whereas external amplitudes must include at least one inactive orbital index. 

To derive bosonic variant of CC downfolding we start from the energy-independent form of the CC equations
\begin{equation}
(P+Q) He^T |\Phi\rangle = E (P+Q) e^T |\Phi\rangle \;,
\label{cc5}
\end{equation}
which at the solution is equivalent to the standard, connected, form  bosonic CC equations:
\begin{eqnarray}
Qe^{-T} H e^T |\Phi\rangle &=& 0 \;, \label{cc6} \\
\langle\Phi|e^{-T} H e^T |\Phi\rangle &=& E \;, \label{cc7}
\end{eqnarray}
In Eqs. (\ref{cc5})-(\ref{cc7}), $Q$ is the projection operator onto excited configurations generated by action of $T$ on the reference function and $P$ designates the projection operator onto the reference function. Projecting 
Eq. (\ref{cc5}) onto the active space configurations described by projection operator 
$(P+Q_{\rm int})$ (where $Q_{\rm int}$ is a projection operator onto excited configuration 
in the active space), and assuming that $e^{T_{\rm int}}|\Phi\rangle$ generates FCI-type  expansion in the active space 
(i.e., $T_{\rm int}|\Phi\rangle$ generates all excited active-space configuration)
, in analogy to fermionic case~\cite{bauman2019downfolding,downfolding2020t,bauman2022coupled,doublec2022}, one can show that CC energy can be calculated as eigenvalue of the active-sapce effective Hamiltonian $H^{\rm eff}$:
\begin{equation}
H^{\rm eff} e^{T_{\rm int}}|\Phi\rangle = E e^{T_{\rm int}}|\Phi\rangle \;,
\label{cc8}
\end{equation}
where 
\begin{equation}
H^{\rm eff}= (P+Q_{\rm int}) e^{-T_{\rm ext}} H e^{T_{\rm ext}} (P+Q_{\rm int})\;.
\label{cc9}
\end{equation}
If external cluster amplitudes are known or can be efficiently approximated, then the effective Hamiltonian can be viewed as a reduced-dimensionality representation of bosonic problem. 

As a specific example how the dimensionality of the problem can be compressed let us consider the simplest example where active space contains two orbitals ($M_{\rm act}=2$). In this case, the  general form of the internal $k$-tuply excited cluster operators take a simple  form 
\begin{equation}
T_k = c_{2\ldots 2} (b_2^{\dagger})^k (b_1)^k\;.
\label{cc10}
\end{equation}
For the case when $N=10$, $M_{\rm act}=2$, and $M=10$,
${\rm dim}_{\rm act}=11$ and 
the dimension compression defined as 
${\rm dim}_{\rm FCI}/{\rm dim}_{\rm act}$ amounts to 
$\simeq 10^5$.

Similar to systems defined by interacting fermions, the Hermitian form of downfolding, based on double unitary Coupled Cluster (CC) ansatz, can be extended to bosonic systems. Figure \ref{fig:boson_cc} demonstrates a quantum-classical workflow targeting the ground state and an effective Hamiltonian of a three-site, two-boson model as described by Hamiltonian (\ref{boson_hamiltonian}). In this workflow, a Trotterized double unitary CC with singles and doubles (D-UCCSD) ansatz is employed to ensure exactness (see Appendix \ref{DUCC_boson} for the proof). However, unlike conventional VQE methods using the same ansatz, the excitation operators are partitioned into two subsets. Each subset corresponds to different excitation sub-manifolds and is treated separately.

As illustrated in Figure \ref{fig:boson_cc}b, part of the free parameters corresponding to the sub-manifold $Q_2$ is managed by the regular VQE module (the micro update), and the remaining portion (corresponding to the excitations between site 0 and site 1) is obtained through constructing and diagonalizing an effective Hamiltonian $H_{\rm eff}$ (the macro update). This partition significantly reduces the computational cost of direct optimization of the entire free-parameter space as required in conventional VQE routines. As depicted in Figure \ref{fig:boson_cc}c, the macro update in the proposed workflow for the studied model converges more effectively than the conventional VQE. It is worth noting that the diagonalization of $H_{\rm eff}$ can be done either classically or quantumly, depending on its size. The converged $H_{\rm eff}$ can also be preserved for later use such as quantum dynamics simulations and Green's function calculations.  

\begin{figure*}
    \centering
    \includegraphics[width=\linewidth]{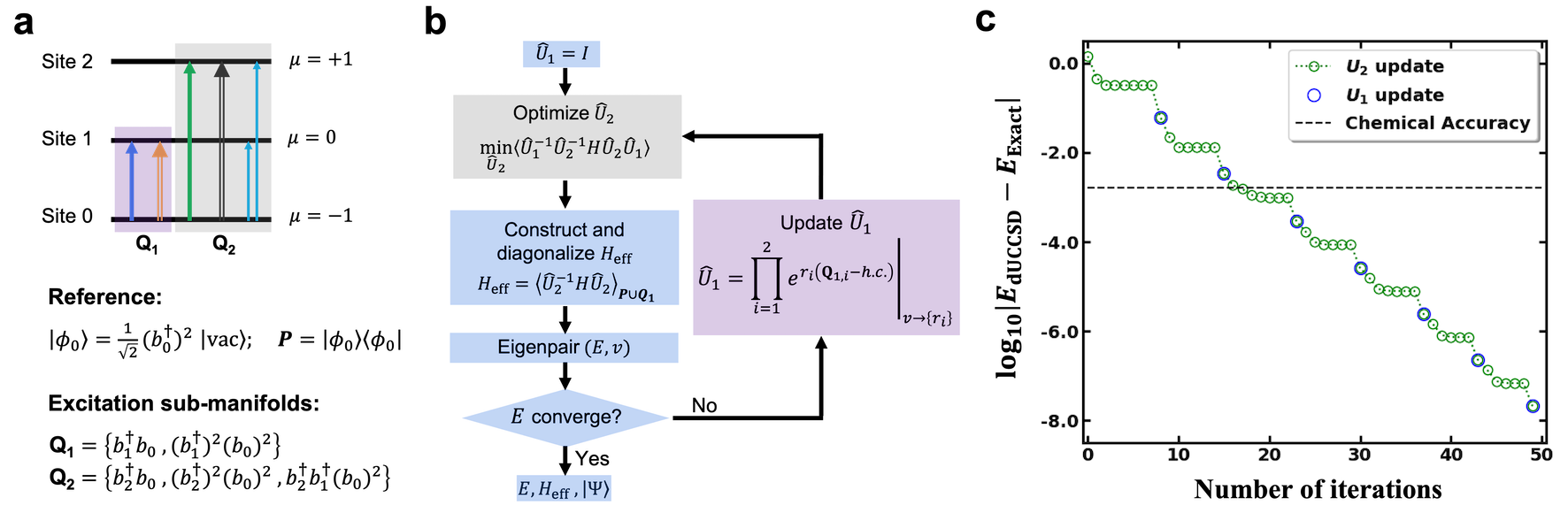}
    \caption{Proposed quantum-classical workflow and its performance for searching the ground state and constructing the effective Hamiltonian for a three-site two-boson model. The model is described by Hamiltonian (\ref{boson_hamiltonian}) with $\mu=\{-1,0,1\}$, $t=V=1$, and $U=0.5$. (\textbf{a})  In a three-site two-boson Bose-Hubbard model, the reference state is chosen to have two bosons located at site 0.  The five excitations of the model, based on this reference, can then be partitioned into two sub-manifolds, sub-manifold $\mathbf{Q}_1$ corresponding to excitations only between site 0 and site 1, and sub-manifold $\mathbf{Q}_2$, which includes all remaining excitations. (\textbf{b}) A disentangled unitary coupled cluster with singles and doubles (dUCCSD) ansatz is employed to prepare a target state within a nested optimization loop, where the outer unitary is held constant during the inner optimization. (\textbf{c}) The absolute difference between $E_{\text{dUCCSD}}$ and the exact energy as a function of the number of iterations employing the nested optimization loop.}
    \label{fig:boson_cc}
\end{figure*}

%
The bosonic variant of the MMCC method can be readily derived using steps similar to those used for the fermionic cases~\cite{bookmmcc,mmcc1,mmcc2,kowalski2005extensive,creom,cu2o2,piecuch2006single,crccbb,crccopen,crccobb,crccrev,kowalski2018regularized,deustua2017converging,deustua2018communication,deustua2019accurate,eriksen2020ground,gururangan2021high}. Briefly, assuming that the bosonic system is described by Hamiltonians $H$, the asymmetric energy functional can be expressed as
\begin{equation}
    E_{\rm MMCC}[\Psi_T]=\frac{\langle\Psi_T|He^{T^{(A)}}|\Phi\rangle}
    {\langle\Psi_T|e^{T^{(A)}}|\Phi\rangle}
    \label{cc11}
\end{equation}
where $|\Psi_T\rangle$ is the so-called trial wave function,  $T^{(A)}$ is an arbitrary approximation (the parent approach) to the exact cluster operator $T$, and $H$ is a many-body  Hamiltonian defined by one- and two-body interactions. When $|\Psi_T\rangle$  is replaced by the exact bosonic ground-state wave function $|\Psi\rangle$, then the value of the B-MMCC functional Eq. (\ref{cc11}) is equal to the exact ground-state energy $E$;
\begin{equation}
    E_{\rm MMCC}[\Psi]=E \;.
    \label{cc12}
\end{equation}
Assuming that low-rank moments are used to calculate cluster amplitudes, i.e.,
\begin{equation}
    Q_A M^{(A)}|\Phi\rangle = Q_A e^{-T^{(A)}}He^{-T^{(A)}}|\Phi\rangle = 0 \;. 
    \label{cc13}
\end{equation}
the many-body form of functional (\ref{cc11}) can be rewritten as 
\begin{equation}
    E_{\rm MMCC}[\Psi_T] =  E^{(A)}+\frac{\langle\Psi|e^{T^{(A)}} Q_R M^{(A)}|\Phi\rangle}
    {\langle\Psi|e^{T^{(A)}}|\Phi\rangle} \;,
    \label{eq6}
\end{equation}
where $E^{(A)}$ is the approximate CC energy and  $Q_R$ is a projection operator onto excited configurations not included in the $T^{(A)}|\Phi\rangle$ expansion. 


\section{Error Analysis of Truncating Bosonic Mode}\label{sec:trunc_err}

As mentioned earlier, bosonic modes have infinite degrees of freedom. Thus Hamiltonians that involve bosonic interactions need to be truncated before they can be simulated on a finite-memory quantum computer. In this section, we discuss how to rigorously bound the truncation error for a class of Hamiltonians with boson-fermion interactions. 
\textcolor{black}{Performing such an error analysis prior to the quantum simulation ensures that a sufficiently large basis is employed to accurately represent the system's state at each time step, while minimizing the number of qubits required.}

Specifically, given a bosonic Hamiltonian $H$, our goal is to find a truncated Hamiltonian $\widetilde{H}$, such that the time evolution $e^{-itH}\approx e^{-it\widetilde{H}}$ is well approximated when acting on a given initial state. The truncated Hamiltonian can then be simulated using a common quantum simulation algorithm such as Trotterization and Qubitization, as discussed in Sections \ref{sec:trotter} and \ref{sec:qubitization} respectively. For concreteness, we take the boson-fermion Hamiltonian \eqref{Holstein} as an example, but a similar analysis applies to the spin-boson Hamiltonian as well. 

Our setting is similar to that of a recent work~\cite{Tong2022provablyaccurate}. However, we have streamlined their analysis and improved a polylogarithmic factor over their bound, while also extending the result to time-dependent Hamiltonians. We will focus on analyzing the asymptotic scaling of the truncation cutoff, but one can keep track of all the constant factors so as to estimate the concrete resources required for a specific simulation task. Interestingly, the truncation bound we derive for the time evolution of bosonic Hamiltonians has also implications to studying their ground state properties~\cite{arealaw22}.

The remainder of this section is organized as follows.
In Section \ref{sec:ham_cond}, we present several technical conditions which are required for our analysis to hold. In Section \ref{sec:state_trunc}, we give a bound on the growth of bosonic number under the ideal Hamiltonian evolution when the initial state is restricted to a low-particle subspace. We then apply this in Section \ref{sec:ham_trunc} to bound the error in the time evolution when the Hamiltonian is truncated to finite-dimensional Hilbert spaces. We discuss generalizations to multiple bosonic modes in Section \ref{sec:multi_boson}, establishing the main result Theorem \ref{thm:multi_boson_trunc}. We show how the analysis can be adapted to Hamiltonians with explicit time dependence in Section \ref{sec:time_dependent}.

\subsection{Technical conditions}\label{sec:ham_cond}
Following Ref.~\citenum{Tong2022provablyaccurate}, we use $\lambda$ to denote the number of bosons in a specific bosonic mode, and write the corresponding state as $|\lambda\rangle$. Then, we define $\Pi_{{S}}$ as the projector onto the subspace of all states with bosonic number belonging to set ${S}$ and let $\overline{\Pi}_{{S}}=I-\Pi_{{S}}$ be the complementary projector. It follows from the completeness requirement that $\Pi_{[0,\infty]}=\sum_{\lambda=0}^\infty\Pi_{\lambda}=I$. However, to implement simulation algorithms on a finite-memory quantum computer, we need to choose a maximum cutoff $\Lambda$, which results in a partial projection $\Pi_{[0,\Lambda]}$ and introduces a truncation error. We will rigorously analyze this error below. 

Fixing a specific bosonic mode, our main goal is to upper bound the leakage of bosonic number under the ideal Hamiltonian evolution when the initial state has at most $\Lambda_0$ particles. 
For technical reasons, we will impose additional conditions on the Hamiltonian. Specifically, we assume that the Hamiltonian can be decomposed as
\begin{equation}
    H=H_w+H_r,
\end{equation}
where
\begin{equation}
\label{eq:ham_cond}
\begin{aligned}
    \Pi_{\lambda}H_w\Pi_{\lambda'}&=0 ~~(\text{if }|\lambda-\lambda'|>1),\\
    \left\|H_w\Pi_{[0,\Lambda]}\right\|&\leq\chi(\Lambda+1)^r,\\
    \left[H_r,\Pi_{\lambda}\right]&=0
\end{aligned}
\end{equation}
for some $\chi>0$ and $0\leq r<1$. In words, the last condition asserts that $H_r$ does not change the number of bosons in that specific mode. While the bosonic number does change under $H_w$, the amount and magnitude of this change are upper bounded by the first two conditions. Note that the conditions \eqref{eq:ham_cond} can be alternatively understood as requirements on the block structure of the Hamiltonian. Indeed, a bosonic Hamiltonian $H=H_w+H_r$ satisfies \eqref{eq:ham_cond} if and only if
\begin{align}
    H_w&=\sum_{\lambda=0}^\infty\left(\Pi_{\lambda+1}H\Pi_{\lambda}+\Pi_{\lambda}H\Pi_{\lambda+1}\right),\notag \\
    H_r&=\sum_{\lambda=0}^\infty\Pi_{\lambda}H\Pi_{\lambda}.
\end{align}

We verify these conditions for the boson-fermion Hamiltonian (\ref{Holstein}).
Without loss of generality, consider the first bosonic mode. Then, we have
\begin{align}
    H_w&=g\omega f_1^\dagger f_1(b_1^\dagger+b_1),\notag \\
    H_r&=- \sum_{\langle i,j \rangle} v f^\dagger_i f_j + \sum_i \omega b^\dagger_i b_i + \sum_{i\neq 1} g \omega f^\dagger_i f_i (b^\dagger_i + b_i). \notag
\end{align}
One can check that $H_r$ does not change the bosonic number in the first mode, whereas $H_w$ changes the number by at most $\pm 1$. Furthermore,
\begin{equation}
    \|H_w\Pi_{[0,\Lambda]}\|\leq g\omega\sqrt{4\Lambda+2}<g\omega2\sqrt{\Lambda+1}.
\end{equation}
This is because
\begin{equation}
    \left(b^\dagger+b\right)^2\leq\left(b^\dagger+b\right)^2+i^2\left(b^\dagger-b\right)^2=4b^\dagger b+2I,
\end{equation}
which implies
\begin{align}
    \|\left(b^\dagger+b\right)\Pi_{[0,\Lambda]}\|&=\sqrt{\left\|\Pi_{[0,\Lambda]}\left(b^\dagger+b\right)^2\Pi_{[0,\Lambda]}\right\|} \notag \\
    &\leq\sqrt{\left\|\Pi_{[0,\Lambda]}\left(4b^\dagger b+2I\right)\Pi_{[0,\Lambda]}\right\|} \notag \\
    &=\sqrt{4\Lambda+2}.
\end{align}


\subsection{State truncation}\label{sec:state_trunc}
To realize the truncation, we will introduce a sequence of time durations $\Delta t_1+\cdots+\Delta t_s=t$ with corresponding bosonic numbers $\Lambda_0<\Lambda_1<\cdots<\Lambda_s$.
Assuming this is done, we then proceed to bound the approximation error
\begin{align}
    e^{-itH}\Pi_{[0,\Lambda_0]}
    \approx & \Pi_{[0,\Lambda_{s}]}e^{-i\Delta t_sH}
    \Pi_{[0,\Lambda_{s-1}]}\cdots \notag \\
    &\Pi_{[0,\Lambda_2]}
    e^{-i\Delta t_2H}\Pi_{[0,\Lambda_1]}
    e^{-i\Delta t_1H}\Pi_{[0,\Lambda_0]}.
\end{align}
Specifically, we telescope the summation to get
\begin{widetext}
\begin{align}
    &\ e^{-itH}\Pi_{[0,\Lambda_0]} - \Pi_{[0,\Lambda_{s}]}e^{-i\Delta t_sH} \Pi_{[0,\Lambda_{s-1}]}\cdots \Pi_{[0,\Lambda_2]}
    e^{-i\Delta t_2H}\Pi_{[0,\Lambda_1]} e^{-i\Delta t_1H}\Pi_{[0,\Lambda_0]} \notag \\
    &= e^{-i(\Delta t_2+\cdots+\Delta t_s)H}\overline{\Pi}_{[0,\Lambda_1]}e^{-i\Delta t_1 H}\Pi_{[0,\Lambda_0]} 
    +e^{-i(\Delta t_3+\cdots+\Delta t_s)H}\overline{\Pi}_{[0,\Lambda_2]}e^{-i\Delta t_2 H}\Pi_{[0,\Lambda_1]}e^{-i\Delta t_1 H}\Pi_{[0,\Lambda_0]} \notag \\
    &~~~~ +\cdots \notag \\
    &~~~~ +\overline{\Pi}_{[0,\Lambda_{s}]}e^{-i\Delta t_sH}
    \Pi_{[0,\Lambda_{s-1}]}\cdots \Pi_{[0,\Lambda_2]} e^{-i\Delta t_2H}\Pi_{[0,\Lambda_1]} e^{-i\Delta t_1H}\Pi_{[0,\Lambda_0]}.
\end{align}
\end{widetext}
This means we can upper bound the long-time leakage by the sum of short-time leakages, and the error adds up at most linearly
\begin{widetext}
\begin{align}
    &\left\|e^{-itH}\Pi_{[0,\Lambda_0]} - \Pi_{[0,\Lambda_{s}]}e^{-i\Delta t_sH} \Pi_{[0,\Lambda_{s-1}]}\cdots \Pi_{[0,\Lambda_2]}
    e^{-i\Delta t_2H}\Pi_{[0,\Lambda_1]} e^{-i\Delta t_1H}\Pi_{[0,\Lambda_0]}\right\| 
    \leq\sum_{j=0}^{s-1}\left\|\overline{\Pi}_{[0,\Lambda_{j+1}]}e^{-i\Delta t_{j+1}H}\Pi_{[0,\Lambda_{j}]}\right\|. \label{eq:error_tot}
\end{align}
\end{widetext}

To bound the leakage $\overline{\Pi}_{[0,\Lambda']}e^{-i\Delta t H}\Pi_{[0,\Lambda]}$ for each short time $\Delta t\geq0$, we will use the interaction picture. Specifically, we have
\begin{equation}
    e^{-i\Delta tH}=e^{-i\Delta t H_r}\mathcal{T}\left\{e^{-i\int_0^{\Delta t}\mathrm{d}\tau\ e^{i\tau H_r}H_we^{-i\tau H_r}}\right\},
\end{equation}
where $\mathcal{T}$ is the time-ordering operator for time-dependent Hamiltonian evolution. Since $H_r$ does not change the bosonic number, this implies
\begin{align}
    &\overline{\Pi}_{[0,\Lambda']}e^{-i\Delta t H}\Pi_{[0,\Lambda]} \notag \\
    &=e^{-i\Delta t H_r}\overline{\Pi}_{[0,\Lambda']} \mathcal{T}\left\{e^{-i\int_0^{\Delta t}\mathrm{d}\tau\ e^{i\tau H_r}H_we^{-i\tau H_r}}\right\}\Pi_{[0,\Lambda]}.
\end{align}
We now rewrite the right-hand side using the Dyson series:
\begin{widetext}
\begin{align}
    &\overline{\Pi}_{[0,\Lambda']}\cdot \mathcal{T}\exp\left(-i\int_0^{\Delta t}\mathrm{d}\tau\ e^{i\tau H_r}H_we^{-i\tau H_r}\right)\cdot\Pi_{[0,\Lambda]} \notag \\
    & = \sum_{k=0}^\infty(-i)^k\int_{0}^{\Delta t}\mathrm{d}\tau_k\cdots\int_{0}^{\tau_3}\mathrm{d}\tau_2\int_{0}^{\tau_2}\mathrm{d}\tau_1 \overline{\Pi}_{[0,\Lambda']}e^{i\tau_k H_r}H_we^{-i\tau_k H_r}\cdots  e^{i\tau_2 H_r}H_we^{-i\tau_2 H_r}e^{i\tau_k H_1}H_we^{-i\tau_k H_1}\Pi_{[0,\Lambda]}  \\
    & = \sum_{k=\Lambda'-\Lambda}^\infty(-i)^k\int_{0}^{\Delta t}\mathrm{d}\tau_k\cdots\int_{0}^{\tau_3}\mathrm{d}\tau_2\int_{0}^{\tau_2}\mathrm{d}\tau_1
    \overline{\Pi}_{[0,\Lambda']}e^{i\tau_k H_r}H_w\Pi_{[0,\Lambda+k-1]}e^{-i\tau_k H_r}\cdots e^{i\tau_2 H_r}H_w\Pi_{[0,{\Lambda+1}]}e^{-i\tau_2 H_r}e^{i\tau_k H_1}H_w\Pi_{[0,\Lambda]}e^{-i\tau_k H_1}, \notag
\end{align}
\end{widetext}
where the last equality follows since $H_r$ does not change the bosonic number and $H_w$ only changes the number by $\pm 1$. Putting these altogether,
\begin{widetext}
\begin{align}
    \left\|\overline{\Pi}_{[0,\Lambda']}e^{-i\Delta t H}\Pi_{[0,\Lambda]}\right\|
    =&\left\|\overline{\Pi}_{[0,\Lambda']}\cdot \mathcal{T}\exp\left(-i\int_0^{\Delta t}\mathrm{d}\tau\ e^{i\tau H_r}H_we^{-i\tau H_r}\right)\cdot\Pi_{[0,\Lambda]}\right\| \notag \\
    \leq&\sum_{k=\Lambda'-\Lambda}^\infty\frac{\Delta t^k}{k!}\|H_w\Pi_{[0,\Lambda+k-1]}\|\cdots\|H_w\Pi_{[0,{\Lambda+1}]}\|\|H_w\Pi_{[0,\Lambda]}\| \notag \\
    \leq&\sum_{k=\Lambda'-\Lambda}^\infty\frac{(\chi\Delta t)^k}{k!}\sqrt{\Lambda+k}\cdots\sqrt{\Lambda+2}\sqrt{\Lambda+1}
    \leq\sum_{k=\Lambda'-\Lambda}^\infty\frac{(\chi\Delta t)^k}{k!}\sqrt{(\Lambda+k)^k} \notag \\
    \leq&\sum_{k=\Lambda'-\Lambda}^\infty\frac{(e\chi\Delta t)^k}{ek^k}2^{\frac{k}{2}-1}\left(\Lambda^{\frac{k}{2}}+k^{\frac{k}{2}}\right)
    =\sum_{k=\Lambda'-\Lambda}^\infty\left(\frac{\sqrt{2}e\chi\Delta t}{k}\right)^k\frac{\Lambda^{\frac{k}{2}}+k^{\frac{k}{2}}}{2e},
\end{align}
\end{widetext}
where we have used the inequality between the $1$- and $k/2$-norm
\begin{align}
    \Lambda+k\leq2^{1-\frac{2}{k}}\left(\Lambda^{\frac{k}{2}}+k^{\frac{k}{2}}\right)^{\frac{2}{k}}
\end{align}
and Stirling's approximation
\begin{equation}
    e\left(\frac{k}{e}\right)^k\leq k!.
\end{equation}

To proceed, we assume that the short evolution time satisfies
\begin{equation}
    0\leq\Delta t\leq\frac{1}{\chi\sqrt{\Lambda}}\leq\frac{1}{\chi}.
\end{equation}
Then, the error bound simplifies to
\begin{align}
    &\left\|\overline{\Pi}_{[0,\Lambda']}e^{-i\Delta t H}\Pi_{[0,\Lambda]}\right\| \notag \\
    &\leq\sum_{k=\Lambda'-\Lambda}^\infty\left(\frac{\sqrt{2}e\chi\Delta t}{k}\right)^k\frac{\Lambda^{\frac{k}{2}}+k^{\frac{k}{2}}}{2e} \notag \\
    &\leq\sum_{k=\Lambda'-\Lambda}^\infty\frac{1}{2e}\left(\frac{\sqrt{2}e}{k}\right)^k+\sum_{k=\Lambda'-\Lambda}^\infty\frac{1}{2e}\left(\frac{\sqrt{2}e}{\sqrt{k}}\right)^k \notag \\
    &\leq\sum_{k=\Lambda'-\Lambda}^\infty\frac{1}{e}\left(\frac{\sqrt{2}e}{\sqrt{\Lambda'-\Lambda}}\right)^k.
\end{align}
Assuming that
\begin{equation}
    \sqrt{\Lambda'-\Lambda}\geq2\sqrt{2}e,
\end{equation}
we have
\begin{align}
    &\left\|\overline{\Pi}_{[0,\Lambda']}e^{-i\Delta t H}\Pi_{[0,\Lambda]}\right\| \notag \\
    &\leq\sum_{k=\Lambda'-\Lambda}^\infty\frac{1}{e}\left(\frac{\sqrt{2}e}{\sqrt{\Lambda'-\Lambda}}\right)^k \notag \\
    &=\frac{1}{e}\left(\frac{\sqrt{2}e}{\sqrt{\Lambda'-\Lambda}}\right)^{\Lambda'-\Lambda}\sum_{k=0}^\infty\left(\frac{\sqrt{2}e}{\sqrt{\Lambda'-\Lambda}}\right)^k \notag \\
    &\leq\frac{1}{e}\left(\frac{\sqrt{2}e}{\sqrt{\Lambda'-\Lambda}}\right)^{\Lambda'-\Lambda}\sum_{k=0}^\infty\frac{1}{2^k} \notag \\
    &\leq\left(\frac{\sqrt{2}e}{\sqrt{\Lambda'-\Lambda}}\right)^{\Lambda'-\Lambda}.
\end{align}
We summarize this bound as follows.

\begin{lemma}[Short-time state truncation]
Given bosonic Hamiltonian $H=H_w+H_r$ satisfying \eqref{eq:ham_cond} with parameter $\chi>0$, we have
\begin{equation}
    \left\|\overline{\Pi}_{[0,\Lambda']}e^{-i\Delta t H}\Pi_{[0,\Lambda]}\right\|
    \leq\left(\frac{\sqrt{2}e}{\sqrt{\Lambda'-\Lambda}}\right)^{\Lambda'-\Lambda}
\end{equation}
for any $0\leq\Delta t\leq1/\chi\sqrt{\Lambda}$ and integers $0\leq\Lambda<\Lambda'$ such that $\Lambda'-\Lambda\geq8e^2$.
\end{lemma}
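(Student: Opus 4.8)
The plan is to move into the interaction picture generated by $H_r$ and then expand the resulting time-ordered exponential as a Dyson series, exploiting the band structure imposed by \eqref{eq:ham_cond}. First I would write
\begin{equation}
e^{-i\Delta t H}=e^{-i\Delta t H_r}\,\mathcal{T}\exp\left(-i\int_0^{\Delta t}\mathrm{d}\tau\ e^{i\tau H_r}H_we^{-i\tau H_r}\right),
\end{equation}
and note that since $[H_r,\Pi_\lambda]=0$ the prefactor $e^{-i\Delta t H_r}$ preserves bosonic number and hence drops out of the operator norm $\|\overline{\Pi}_{[0,\Lambda']}(\cdots)\Pi_{[0,\Lambda]}\|$. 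All of the leakage is therefore carried by the time-ordered exponential of the rotated coupling $e^{i\tau H_r}H_we^{-i\tau H_r}$.

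The structural point is that the rotations $e^{\pm i\tau H_r}$ leave the bosonic number fixed, whereas each $H_w$ shifts it by at most one. Consequently, to connect the initial subspace $[0,\Lambda]$ to the complement $\overline{\Pi}_{[0,\Lambda']}$, at least $\Lambda'-\Lambda$ factors of $H_w$ must act, so the Dyson series effectively starts at order $k=\Lambda'-\Lambda$; moreover a number projector $\Pi_{[0,\Lambda+j]}$ can be inserted after the $j$-th coupling. I would then bound the $k$-th term by placing the operator norm on each factor, using the $r=1/2$ form of \eqref{eq:ham_cond}, namely $\|H_w\Pi_{[0,\Lambda+j]}\|\le\chi\sqrt{\Lambda+j+1}$, together with the nested time integral $\int_0^{\Delta t}\mathrm{d}\tau_k\cdots\int_0^{\tau_2}\mathrm{d}\tau_1=\Delta t^k/k!$.

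The analytic core is to show that the $1/k!$ from the integrals outpaces the growing product $\sqrt{(\Lambda+1)\cdots(\Lambda+k)}\le\sqrt{(\Lambda+k)^k}$. I would combine Stirling's bound $k!\ge e(k/e)^k$ with the elementary inequality $\Lambda+k\le2^{1-2/k}(\Lambda^{k/2}+k^{k/2})^{2/k}$ to recast the $k$-th term as $\left(\sqrt2\,e\chi\Delta t/k\right)^k(\Lambda^{k/2}+k^{k/2})/(2e)$. Imposing the short-time hypothesis $\Delta t\le1/(\chi\sqrt\Lambda)$ yields $\chi\Delta t\sqrt\Lambda\le1$ and $\chi\Delta t\le1$, which collapses the two pieces into $\tfrac1{2e}(\sqrt2\,e/k)^k$ and $\tfrac1{2e}(\sqrt2\,e/\sqrt k)^k$; since $k\ge\Lambda'-\Lambda$, both are at most $\tfrac1{2e}(\sqrt2\,e/\sqrt{\Lambda'-\Lambda})^k$.

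Finally I would sum the geometric tail. Under the stated hypothesis $\Lambda'-\Lambda\ge8e^2$, equivalently $\sqrt{\Lambda'-\Lambda}\ge2\sqrt2\,e$, the ratio $\sqrt2\,e/\sqrt{\Lambda'-\Lambda}$ is at most $1/2$, so $\sum_{k\ge\Lambda'-\Lambda}(\sqrt2\,e/\sqrt{\Lambda'-\Lambda})^k$ equals its leading term times $\sum_{m\ge0}2^{-m}=2$, which after absorbing the $2/e<1$ prefactor gives the claimed bound $(\sqrt2\,e/\sqrt{\Lambda'-\Lambda})^{\Lambda'-\Lambda}$. I expect the main obstacle to be precisely the middle estimate: choosing intermediate inequalities that let the factorial exactly cancel the super-polynomial growth of $\sqrt{(\Lambda+k)^k}$, and arranging the residual $k$-dependence so that it is forced below $1/2$ by the hypothesis on $\Lambda'-\Lambda$ rather than spoiling convergence.
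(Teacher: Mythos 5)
Your proposal is correct and follows essentially the same route as the paper's own proof: interaction picture with respect to $H_r$, Dyson series truncated below order $\Lambda'-\Lambda$ by the band structure, term-wise bounds via Stirling's inequality and the power-mean estimate $\Lambda+k\le2^{1-2/k}(\Lambda^{k/2}+k^{k/2})^{2/k}$, collapse under the short-time hypothesis, and a geometric tail with ratio at most $1/2$ whose $2/e<1$ prefactor is absorbed. All key steps, including the insertion of the projectors $\Pi_{[0,\Lambda+j]}$ and the final absorption of constants, match the paper's argument.
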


To extend this analysis to a long time evolution, we divide the evolution into short steps with durations $\Delta t_1+\cdots+\Delta t_s= t$ and corresponding bosonic numbers $\Lambda_0<\Lambda_1<\cdots<\Lambda_s$. \textcolor{black}{Since our primary goal is to determine $\Lambda_s$, for the simplicity of discussion, we let the bosonic number increases linearly from $0$ to $t$,} i.e.,
\begin{equation}
    \Delta\Lambda_j\equiv\Delta\Lambda ~~ \Rightarrow ~~ \Lambda_j=\Lambda_0+j\Delta\Lambda
\end{equation}
and choose the duration upper bounds
\begin{equation}
    \Delta \tau_j=\frac{1}{\chi\sqrt{\Lambda_{j-1}}}~~\Rightarrow~~ \tau_s=\sum_{j=1}^s\frac{1}{\chi\sqrt{\Lambda_0+(j-1)\Delta\Lambda}}.
\end{equation}
The total time during which this analysis applies is then at least
\begin{equation}
\begin{aligned}
    \tau_s&=\sum_{j=1}^s\frac{1}{\chi\sqrt{\Lambda_0+(j-1)\Delta\Lambda}}\\
    &\geq\int_{0}^s\mathrm{d}x\ \frac{1}{\chi\sqrt{\Lambda_0+x\Delta\Lambda}}\\
    &=\frac{2}{\chi\Delta\Lambda}\left(\sqrt{\Lambda_0+s\Delta\Lambda}-\sqrt{\Lambda_0}\right).
\end{aligned}
\end{equation}
Note that $\lim_{s\rightarrow\infty}\tau_s=\infty$, so we can choose the first integer $s$ such that $\tau_s\geq t$. Explicitly,
\begin{equation}
    s=\left\lceil\frac{1}{\Delta\Lambda}\left(\left(\sqrt{\Lambda_0}+\frac{\chi t\Delta\Lambda}{2}\right)^2-\Lambda_0\right)\right\rceil.
\end{equation}
With these time upper bounds, we simply let
\begin{align}
    \Delta t_1&=\Delta\tau_1, \notag \\
    \Delta t_2&=\Delta\tau_2, \notag \\
    &\vdots \notag \\
    \Delta t_s&=t-\left(\Delta t_1+\cdots+\Delta t_{s-1}\right)\leq \Delta\tau_s. \notag
\end{align}

Applying \eqref{eq:error_tot} and the short-time bound, we have
\begin{widetext}
\begin{align}
    & \left\|e^{-itH}\Pi_{[0,\Lambda_0]} - \Pi_{[0,\Lambda_{s}]}e^{-i\Delta t_sH} \Pi_{[0,\Lambda_{s-1}]}\cdots\Pi_{[0,\Lambda_2]}
    e^{-i\Delta t_2H}\Pi_{[0,\Lambda_1]} e^{-i\Delta t_1H}\Pi_{[0,\Lambda_0]}\right\| \notag \\
    &~~\leq \sum_{j=0}^{s-1}\left\|\overline{\Pi}_{[0,\Lambda_{j+1}]}e^{-i\Delta t_{j+1}H}\Pi_{[0,\Lambda_{j}]}\right\|
    \leq s\left(\frac{\sqrt{2}e}{\sqrt{\Delta\Lambda}}\right)^{\Delta\Lambda}.
\end{align}
\end{widetext}
For asymptotic analysis, we have
$s=\mathcal{O}\left(\sqrt{\Lambda_0}\chi^2t^2\Delta\Lambda\right)$,
which gives
\begin{widetext}
\begin{align}
    \left\|e^{-itH}\Pi_{[0,\Lambda_0]} - \Pi_{[0,\Lambda_{s}]}e^{-i\Delta t_sH} \Pi_{[0,\Lambda_{s-1}]}\cdots\Pi_{[0,\Lambda_2]}
    e^{-i\Delta t_2H}\Pi_{[0,\Lambda_1]} e^{-i\Delta t_1H}\Pi_{[0,\Lambda_0]}\right\|
    = \mathcal{O}\left(\sqrt{\Lambda_0}\chi^2t^2\Delta\Lambda\left(\frac{\sqrt{2}e}{\sqrt{\Delta\Lambda}}\right)^{\Delta\Lambda}\right).
\end{align}
\end{widetext}
We want to choose $\Delta\Lambda$ so that the leakage error is at most $\epsilon$. The scaling of $\Delta\Lambda$ can be understood via the Lambert-W function (see Ref.~\citenum{hoorfar2008inequalities} as well as Lemma \ref{lem:lambert_w} below) as
\begin{equation}
    \Delta\Lambda=\mathcal{O}\left(1+\log\left(\frac{\Lambda_0\chi t}{\epsilon}\right)\right).
\end{equation}
This then gives
\begin{align}
    \sqrt{\Lambda_s}&=\sqrt{\Lambda_0+s\Delta\Lambda}
    =\sqrt{\left(\sqrt{\Lambda_0}+\frac{\chi t\Delta\Lambda}{2}\right)^2+\mathcal{O}\left(\Delta\Lambda\right)} \notag \\
    &=\sqrt{\Lambda_0}+\mathcal{O}\left(\chi t\log\left(\frac{\Lambda_0\chi t}{\epsilon}\right)\right).
\end{align}

\begin{lemma}[Bounding bosonic number by the Lambert-W function]
\label{lem:lambert_w}
    For constant $b>0$ and sufficiently large $a>0$, the function
    \begin{equation}
        f(y):=a\left(\frac{b}{\sqrt{y}}\right)^y
    \end{equation}
    is monotonically decreasing for $y\geq b^2$. Furthermore, for sufficiently small $0<\epsilon=\mathcal{O}(1)$, $f(y)=\epsilon$ has a unique solution which scales like
    \begin{equation}
        y=f^{-1}(\epsilon)=\mathcal{O}\left(1+\frac{\log (a/\epsilon)}{\log\log (a/\epsilon)}\right).
    \end{equation}
\end{lemma}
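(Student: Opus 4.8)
The plan is to pass to the logarithm $g(y) := \log f(y) = \log a + y\log b - \tfrac{1}{2}\,y\log y$, since $f$ and $g$ share the same monotonicity and the equation $f(y)=\epsilon$ is equivalent to $g(y)=\log\epsilon$. Working with $g$ removes the awkward $y$-dependent exponent and reduces everything to a single transcendental equation.

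For the monotonicity claim I would differentiate, obtaining $g'(y) = \log b - \tfrac{1}{2}\log y - \tfrac{1}{2}$, which is strictly negative exactly when $\log y > 2\log b - 1$, i.e. when $y > b^2/e$. Since $b^2 > b^2/e$, the function $g$, and hence $f$, is strictly decreasing on $y \ge b^2$, which is the first assertion. Moreover $f(b^2) = a\,(b/\sqrt{b^2})^{b^2} = a$ and $\lim_{y\to\infty} f(y) = 0$, so whenever $a$ is large and $\epsilon$ small we have $f(b^2) = a > \epsilon > 0$, and the intermediate value theorem together with strict monotonicity furnishes a unique root $y^\star \in [b^2,\infty)$ of $f(y)=\epsilon$.

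For the scaling I would rewrite $g(y^\star)=\log\epsilon$ as $\tfrac{1}{2}\,y^\star\log y^\star - y^\star\log b = L$, where $L := \log(a/\epsilon)$. The dominant balance is $\tfrac{1}{2}\,y^\star\log y^\star \approx L$, the linear correction $-y^\star\log b$ being lower order once $\log y^\star$ exceeds the fixed constant $2\log b$. Substituting $u=\log y^\star$ turns the leading balance $y\log y = 2L$ into $u e^{u} = 2L$, whose solution is the Lambert-W function $u = W(2L)$; using $W(z)e^{W(z)}=z$ this gives the exact relation $y^\star = e^{W(2L)} = 2L/W(2L)$. Inserting the large-argument asymptotics $W(z) = \log z - \log\log z + o(1)$ then yields $y^\star = \Theta\!\left(L/\log L\right) = \mathcal{O}\!\left(1 + \log(a/\epsilon)/\log\log(a/\epsilon)\right)$, the additive $1$ absorbing the regime in which $L$ is merely a constant.

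I expect the final step to be the main obstacle: the heuristic dominant-balance argument must be upgraded into matched upper and lower bounds on $y^\star$, which requires showing that both the linear term $-y^\star\log b$ and the error term in the $W$-asymptotics are genuinely subdominant. The cleanest route is not to argue heuristically but to feed the explicit two-sided inequalities for $W$ from Ref.~\citenum{hoorfar2008inequalities} directly into the exact identity $y^\star = 2L/W(2L)$, which converts the asymptotic relation into rigorous bounds and closes the gap.
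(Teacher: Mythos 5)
Your overall route --- pass to logarithms, reduce to an equation of the form $ue^{u}=\mathrm{const}$, and invoke Lambert-$W$ asymptotics --- is exactly the paper's, and your treatment of monotonicity and uniqueness (the explicit derivative $g'(y)=\log b-\tfrac12\log y-\tfrac12<0$ for $y>b^2/e$, plus $f(b^2)=a$, $f(\infty)=0$, and the intermediate value theorem) is if anything more careful than the paper's one-line justification.

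The gap is where you predicted it, but your proposed repair does not close it as stated. The relation $y^\star=e^{W(2L)}=2L/W(2L)$ is \emph{not} an exact identity for the root of the true equation $\tfrac12 y\log y-y\log b=L$; it is the exact solution of the truncated equation $y\log y=2L$. So feeding two-sided inequalities for $W$ into it still leaves you to prove that dropping $-y^\star\log b$ perturbs the answer by at most a constant factor (this is doable, e.g.\ by sandwiching $\log(\sqrt{y}/b)$ between constant multiples of $\log y$ for all sufficiently large $y$ depending only on $b$, but it is a genuine extra argument). The paper sidesteps the issue entirely with a better change of variables: substitute $x=\log(\sqrt{y}/b)$ rather than $u=\log y$. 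Then $y=b^2e^{2x}$ and the \emph{full} equation $y\log(\sqrt{y}/b)=L$ becomes exactly $2xe^{2x}=\tfrac{2}{b^2}L$, whence $y^\star=2L/W\left(2L/b^2\right)$ holds exactly; since $b$ is a constant, $W(2L/b^2)=\Theta(\log L)$, and the claimed scaling follows from the large-argument asymptotics of $W$ with no dominant-balance step at all. If you adopt that substitution, your proof becomes complete and coincides with the paper's.
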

\begin{proof}
The monotonicity of $f$ follows from the fact that $\sqrt{y}^y>b^y$ for all $y\geq b^2$. Since $f(b^2)=\Theta(a)$ and $f(\infty)=0$, this implies the existence and uniqueness of the solution $y=f^{-1}(\epsilon)$. Then,
\begin{align}
    a\left(\frac{b}{\sqrt{y}}\right)^y=\epsilon
    &\quad\Rightarrow\quad
    \left(\frac{\sqrt{y}}{b}\right)^y=\frac{a}{\epsilon} \notag \\
    &\quad\Rightarrow\quad
    y\log\left(\frac{\sqrt{y}}{b}\right)=\log\left(\frac{a}{\epsilon}\right).
\end{align}
Letting $x=\log(\sqrt{y}/b)$, we have $y=b^2e^{2x}$, which implies
\begin{equation}
    b^2xe^{2x}=\log\left(\frac{a}{\epsilon}\right)
    \quad\Rightarrow\quad
    2xe^{2x}=\frac{2}{b^2}\log\left(\frac{a}{\epsilon}\right).
\end{equation}
The claimed scaling now follows by solving the last equation using the Lambert-W function.
\end{proof}

\begin{corollary}[Long-time state truncation]
Given bosonic Hamiltonian $H=H_w+H_r$ satisfying \eqref{eq:ham_cond} with parameter $\chi>0$, for any $t>0$, $\epsilon>0$ and integer $\Lambda_0>0$, there exist $s$ time durations $\Delta t_1+\cdots+\Delta t_s= t$ and corresponding bosonic numbers $\Lambda_0<\Lambda_1<\cdots<\Lambda_s$, such that
\begin{widetext}
\begin{equation}
    \left\|e^{-itH}\Pi_{[0,\Lambda_0]}
    - \Pi_{[0,\Lambda_{s}]}e^{-i\Delta t_sH}
    \Pi_{[0,\Lambda_{s-1}]}\cdots\Pi_{[0,\Lambda_2]}
    e^{-i\Delta t_2H}\Pi_{[0,\Lambda_1]}
    e^{-i\Delta t_1H}\Pi_{[0,\Lambda_0]}\right\|\leq\epsilon.
\end{equation}
\end{widetext}
The final cutoff $\Lambda_s$ has the asymptotic scaling
\begin{widetext}
\begin{equation}
    \sqrt{\Lambda_s}=\sqrt{\Lambda_0+s\Delta\Lambda}
    =\sqrt{\Lambda_0}+\mathcal{O}\left(\chi t\log\left(\frac{\Lambda_0\chi t}{\epsilon}\right)\right).
\end{equation}
\end{widetext}
\end{corollary}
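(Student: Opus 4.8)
The plan is to assemble the short-time estimate already in hand into a long-time bound and then optimize the slice parameters self-consistently. First I would invoke the telescoping identity behind \eqref{eq:error_tot}, which expresses the difference between $e^{-itH}\Pi_{[0,\Lambda_0]}$ and the sliced, repeatedly-projected evolution as a sum of terms, each carrying exactly one complementary projector $\overline{\Pi}_{[0,\Lambda_{j+1}]}$ applied after a single short step $e^{-i\Delta t_{j+1}H}$ that begins on $\Pi_{[0,\Lambda_j]}$. The triangle inequality then bounds the total leakage by $\sum_{j=0}^{s-1}\|\overline{\Pi}_{[0,\Lambda_{j+1}]}e^{-i\Delta t_{j+1}H}\Pi_{[0,\Lambda_j]}\|$, reducing everything to controlling one short step at a time.

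Next I would make the slicing uniform, setting $\Lambda_j=\Lambda_0+j\Delta\Lambda$ for a single gap $\Delta\Lambda$ to be fixed later, and choose each duration to respect the hypothesis of the short-time truncation Lemma, namely $\Delta t_{j+1}\le1/(\chi\sqrt{\Lambda_j})$. Provided $\Delta\Lambda\ge8e^2$, that Lemma bounds every summand by $(\sqrt{2}e/\sqrt{\Delta\Lambda})^{\Delta\Lambda}$ independently of $j$, so the total leakage is at most $s\,(\sqrt{2}e/\sqrt{\Delta\Lambda})^{\Delta\Lambda}$. The number of slices $s$ is fixed by requiring the accumulated duration to reach $t$; comparing the sum $\sum_j 1/(\chi\sqrt{\Lambda_0+(j-1)\Delta\Lambda})$ with its integral produces the closed-form ceiling for $s$ and the asymptotic estimate $s=\mathcal{O}(\sqrt{\Lambda_0}\chi t+\chi^2t^2\Delta\Lambda)$.

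The crux is then the self-consistent choice of $\Delta\Lambda$: since $s$ grows only polynomially in $\Delta\Lambda$ while the per-slice factor $(\sqrt{2}e/\sqrt{\Delta\Lambda})^{\Delta\Lambda}$ decays super-exponentially, I would cast the target inequality $s(\sqrt{2}e/\sqrt{\Delta\Lambda})^{\Delta\Lambda}\le\epsilon$ in the form $f(\Delta\Lambda)=a(b/\sqrt{\Delta\Lambda})^{\Delta\Lambda}$ with $b=\sqrt{2}e$ and $a$ the polynomial prefactor absorbing $s$ and the constants, and apply Lemma \ref{lem:lambert_w}. This yields $\Delta\Lambda=\mathcal{O}(1+\log(\Lambda_0\chi t/\epsilon))$. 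Substituting back into $\Lambda_s=\Lambda_0+s\Delta\Lambda=(\sqrt{\Lambda_0}+\chi t\Delta\Lambda/2)^2+\mathcal{O}(\Delta\Lambda)$ and taking square roots gives the claimed $\sqrt{\Lambda_s}=\sqrt{\Lambda_0}+\mathcal{O}(\chi t\log(\Lambda_0\chi t/\epsilon))$.

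The main obstacle is precisely this circular dependence: one cannot fix the gap $\Delta\Lambda$ without knowing the slice count $s$, yet $s$ depends on $\Delta\Lambda$. The Lambert-W estimate of Lemma \ref{lem:lambert_w} is exactly what resolves it, certifying that the transcendental equation $s(\Delta\Lambda)(\sqrt{2}e/\sqrt{\Delta\Lambda})^{\Delta\Lambda}=\epsilon$ admits a solution growing only logarithmically in all parameters. A secondary point demanding care is checking that the chosen durations genuinely satisfy $\Delta t_{j+1}\le1/(\chi\sqrt{\Lambda_j})$ at every slice, and that the final, possibly shortened, step $\Delta t_s=t-\sum_{j<s}\Delta t_j$ still obeys this bound so that the short-time Lemma applies uniformly.
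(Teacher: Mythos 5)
Your proposal is correct and follows essentially the same route as the paper: telescoping into per-slice leakages, uniform increments $\Lambda_j=\Lambda_0+j\Delta\Lambda$ with durations capped at $1/(\chi\sqrt{\Lambda_{j-1}})$, an integral comparison to fix $s$, and Lemma \ref{lem:lambert_w} to resolve the self-consistent choice $\Delta\Lambda=\mathcal{O}(1+\log(\Lambda_0\chi t/\epsilon))$ before substituting back into $\Lambda_s=\Lambda_0+s\Delta\Lambda$. Your slice-count estimate $s=\mathcal{O}(\sqrt{\Lambda_0}\chi t+\chi^2t^2\Delta\Lambda)$ is in fact a slightly sharper reading of the ceiling formula than the paper's stated $\mathcal{O}(\sqrt{\Lambda_0}\chi^2t^2\Delta\Lambda)$, but this makes no difference to the final scaling.
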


\subsection{Hamiltonian truncation}\label{sec:ham_trunc}
In the previous subsection, we show that when restricting the initial state to a low-particle subspace, sufficiently large cutoff values $\Lambda_0<\Lambda_1<\cdots<\Lambda_s$ can be chosen so that the following approximation holds with an arbitrarily high accuracy
\begin{align}
    e^{-itH}\Pi_{[0,\Lambda_0]}
    &\approx \Pi_{[0,\Lambda_{s}]}e^{-i\Delta t_sH}
    \Pi_{[0,\Lambda_{s-1}]}\cdots \notag \\
    &~~~~~\Pi_{[0,\Lambda_2]}
    e^{-i\Delta t_2H}\Pi_{[0,\Lambda_1]}
    e^{-i\Delta t_1H}\Pi_{[0,\Lambda_0]}.
\end{align}
We now leverage this result to truncate the Hamiltonian at some cutoff $\widetilde{\Lambda}$ so that
\begin{equation}
    e^{-itH}\Pi_{[0,\Lambda_0]}
    \approx e^{-it\Pi_{[0,\widetilde{\Lambda}]}H\Pi_{[0,\widetilde{\Lambda}]}}\Pi_{[0,\Lambda_0]}.
\end{equation}
After that, we can use a quantum algorithm to simulate $\Pi_{[0,\widetilde{\Lambda}]}H\Pi_{[0,\widetilde{\Lambda}]}$ and the outcome is provably accurate as long as the initial state has particle number at most $\Lambda_0$. For notational convenience, we define the truncated Hamiltonians
\begin{align}
    \widetilde{H} &:=\Pi_{[0,\widetilde{\Lambda}]}H\Pi_{[0,\widetilde{\Lambda}]},\quad \notag \\
    \widetilde{H}_w &:=\Pi_{[0,\widetilde{\Lambda}]}H_w\Pi_{[0,\widetilde{\Lambda}]},\quad \notag \\
    \widetilde{H}_r &:=\Pi_{[0,\widetilde{\Lambda}]}H_r\Pi_{[0,\widetilde{\Lambda}]}.
\end{align}

Our analysis proceeds in a similar way as in Subsection \ref{sec:state_trunc}. Specifically, we first consider the Hamiltonian truncation error for a short-time evolution, i.e.,
\begin{equation}
    \Pi_{[0,\Lambda']}e^{-i\Delta tH}\Pi_{[0,\Lambda]}
    \approx\Pi_{[0,\Lambda']}e^{-i\Delta t\Pi_{[0,\widetilde{\Lambda}]}H\Pi_{[0,\widetilde{\Lambda}]}}\Pi_{[0,\Lambda]}.
\end{equation}
We switch to the interaction picture for both sides of the above equation:
\begin{align}
    &\Pi_{[0,\Lambda']}e^{-i\Delta tH}\Pi_{[0,\Lambda]} \notag \\
    &~~=\Pi_{[0,\Lambda']}e^{-i\Delta t H_r}\mathcal{T}\left\{e^{-i\int_0^{\Delta t}\mathrm{d}\tau\ e^{i\tau H_r}H_we^{-i\tau H_r}}\right\}\Pi_{[0,\Lambda]} \notag \\
    &~~=e^{-i\Delta t\sum_{\lambda=0}^{\Lambda'}\Pi_{\lambda}H\Pi_{\lambda}}\Pi_{[0,\Lambda']}\mathcal{T}\left\{e^{-i\int_0^{\Delta t}\mathrm{d}\tau\ e^{i\tau H_r}H_we^{-i\tau H_r}}\right\}\Pi_{[0,\Lambda]},\notag \\
    &\Pi_{[0,\Lambda']}e^{-i\Delta t\widetilde{H}}\Pi_{[0,\Lambda]} \notag \\
    &~~=\Pi_{[0,\Lambda']}e^{-i\Delta t \widetilde{H}_r}\mathcal{T}\left\{e^{-i\int_0^{\Delta t}\mathrm{d}\tau\ e^{i\tau \widetilde{H}_r}\widetilde{H}_we^{-i\tau \widetilde{H}_r}}\right\}\Pi_{[0,\Lambda]} \notag \\
    &~~=e^{-i\Delta t\sum_{\lambda=0}^{\Lambda'}\Pi_{\lambda}H\Pi_{\lambda}}\Pi_{[0,\Lambda']}\mathcal{T}\left\{e^{-i\int_0^{\Delta t}\mathrm{d}\tau\ e^{i\tau \widetilde{H}_r}\widetilde{H}_we^{-i\tau \widetilde{H}_r}}\right\}\Pi_{[0,\Lambda]},\notag
\end{align}
where the last equality holds if we assume a sufficiently large cutoff
\begin{equation}
    \widetilde{\Lambda}\geq\Lambda'.
\end{equation}
We then expand the time-dependent Hamiltonian evolutions using Dyson series 
\begin{widetext}
\begin{align}
    & \Pi_{[0,\Lambda']}\mathcal{T}\exp\left(-i\int_0^{\Delta t}\mathrm{d}\tau\ e^{i\tau H_r}H_we^{-i\tau H_r}\right)\Pi_{[0,\Lambda]} \notag \\
    &~~~~= \sum_{k=0}^\infty(-i)^k\int_{0}^{\Delta t}\mathrm{d}\tau_k\cdots\int_{0}^{\tau_3}\mathrm{d}\tau_2\int_{0}^{\tau_2}\mathrm{d}\tau_1
    \Pi_{[0,\Lambda']}e^{i\tau_k H_r}H_we^{-i\tau_k H_r}\cdots e^{i\tau_2 H_r}H_we^{-i\tau_2 H_r}e^{i\tau_k H_1}H_we^{-i\tau_k H_1}\Pi_{[0,\Lambda]},\\
    & \Pi_{[0,\Lambda']}\mathcal{T}\exp\left(-i\int_0^{\Delta t}\mathrm{d}\tau\ e^{i\tau \widetilde{H}_r}\widetilde{H}_we^{-i\tau \widetilde{H}_r}\right)\Pi_{[0,\Lambda]} \notag \\
    &~~~~= \sum_{k=0}^\infty(-i)^k\int_{0}^{\Delta t}\mathrm{d}\tau_k\cdots\int_{0}^{\tau_3}\mathrm{d}\tau_2\int_{0}^{\tau_2}\mathrm{d}\tau_1
    \Pi_{[0,\Lambda']}e^{i\tau_k \widetilde{H}_r}\widetilde{H}_we^{-i\tau_k \widetilde{H}_r}\cdots e^{i\tau_2 \widetilde{H}_r}\widetilde{H}_we^{-i\tau_2 \widetilde{H}_r}e^{i\tau_k \widetilde{H}_1}\widetilde{H}_we^{-i\tau_k \widetilde{H}_1}\Pi_{[0,\Lambda]}.
\end{align}
\end{widetext}
Comparing the two expansions, we see that the terms agree at order $k=0,1,\ldots,\Lambda'-\Lambda-1$; hence the error only comes from higher order terms in both summations
\begin{widetext}
\begin{align}
    \left\|\Pi_{[0,\Lambda']}e^{-i\Delta tH}\Pi_{[0,\Lambda]}
    -\Pi_{[0,\Lambda']}e^{-i\Delta t\Pi_{[0,\widetilde{\Lambda}]}H\Pi_{[0,\widetilde{\Lambda}]}}\Pi_{[0,\Lambda]}\right\|
    \leq 2\sum_{k=\Lambda'-\Lambda}^\infty\frac{\Delta t^k}{k!}\|H_w\Pi_{[0,\Lambda+k-1]}\|\cdots\|H_w\Pi_{[0,{\Lambda+1}]}\|\|H_w\Pi_{[0,\Lambda]}\|.
\end{align}
\end{widetext}
Proceeding as in Subsection \ref{sec:state_trunc}, we obtain:
\begin{lemma}[Short-time Hamiltonian truncation]
Given bosonic Hamiltonian $H=H_w+H_r$ satisfying \eqref{eq:ham_cond} with parameter $\chi>0$, we have
\begin{widetext}
\begin{align}
    &\left\|\Pi_{[0,\Lambda']}\left(e^{-i\Delta t H}-e^{-i\Delta t\Pi_{[0,\widetilde{\Lambda}]}H\Pi_{[0,\widetilde{\Lambda}]}}\right)\Pi_{[0,\Lambda]}\right\| 
    \leq2\left(\frac{\sqrt{2}e}{\sqrt{\Lambda'-\Lambda}}\right)^{\Lambda'-\Lambda}
\end{align}
\end{widetext}
for any $0\leq\Delta t\leq1/\chi\sqrt{\Lambda}$ and integers $0\leq\Lambda<\Lambda'\leq\widetilde{\Lambda}$ such that $\Lambda'-\Lambda\geq8e^2$.
\end{lemma}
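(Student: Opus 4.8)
The plan is to reuse, almost verbatim, the estimates already developed for the short-time state truncation in Subsection \ref{sec:state_trunc}, taking as the starting point the bound established immediately above the lemma statement. Writing $\widetilde{H}=\Pi_{[0,\widetilde{\Lambda}]}H\Pi_{[0,\widetilde{\Lambda}]}$, that derivation already gives
\begin{equation}
\left\|\Pi_{[0,\Lambda']}\left(e^{-i\Delta tH}-e^{-i\Delta t\widetilde{H}}\right)\Pi_{[0,\Lambda]}\right\|
\leq 2\sum_{k=\Lambda'-\Lambda}^\infty\frac{\Delta t^k}{k!}\left\|H_w\Pi_{[0,\Lambda+k-1]}\right\|\cdots\left\|H_w\Pi_{[0,\Lambda]}\right\|.
\end{equation}
The entire remaining task is therefore to bound this tail sum, which is structurally identical to the one controlled in the state-truncation proof; the only difference is the harmless prefactor $2$, which will simply propagate to the final estimate.

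First I would control each operator-norm factor using the second line of \eqref{eq:ham_cond} with $r=1/2$, namely $\|H_w\Pi_{[0,\Lambda+j]}\|\leq\chi\sqrt{\Lambda+j+1}$, so the product of $k$ factors is at most $\chi^k\sqrt{(\Lambda+1)(\Lambda+2)\cdots(\Lambda+k)}\leq\chi^k\sqrt{(\Lambda+k)^k}$. Next I would invoke Stirling's lower bound $k!\geq e(k/e)^k$ together with the $1$-to-$(k/2)$-norm inequality $\Lambda+k\leq2^{1-2/k}(\Lambda^{k/2}+k^{k/2})^{2/k}$ to recast the summand as $\left(\sqrt{2}e\chi\Delta t/k\right)^k(\Lambda^{k/2}+k^{k/2})/(2e)$. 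Imposing the short-time hypothesis $\Delta t\leq1/(\chi\sqrt{\Lambda})$ then collapses both pieces: the $\Lambda^{k/2}$ contribution is dominated by $(\sqrt{2}e/k)^k$ and the $k^{k/2}$ contribution by $(\sqrt{2}e/\sqrt{k})^k$, and together they are at most $\sum_{k\geq\Lambda'-\Lambda}\frac{1}{e}\left(\sqrt{2}e/\sqrt{\Lambda'-\Lambda}\right)^k$. Finally the hypothesis $\Lambda'-\Lambda\geq8e^2$, i.e.\ $\sqrt{\Lambda'-\Lambda}\geq2\sqrt{2}e$, forces the geometric ratio below $1/2$, so the tail sums to at most $\left(\sqrt{2}e/\sqrt{\Lambda'-\Lambda}\right)^{\Lambda'-\Lambda}$; carrying through the prefactor $2$ yields the claimed bound.

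The genuinely nontrivial ingredient — dispatched in the derivation just above the lemma rather than in this summation — is the argument that the two Dyson series, one generated by $e^{i\tau H_r}H_we^{-i\tau H_r}$ and the other by $e^{i\tau\widetilde{H}_r}\widetilde{H}_we^{-i\tau\widetilde{H}_r}$, agree through order $k=\Lambda'-\Lambda-1$, so only the high-order tails survive and each contributes the factor $2$. I expect this cancellation to be the main obstacle in a from-scratch proof: it hinges on the cutoff condition $\widetilde{\Lambda}\geq\Lambda'$, which guarantees the free evolutions impose the identical phase $e^{-i\Delta t\sum_{\lambda=0}^{\Lambda'}\Pi_\lambda H\Pi_\lambda}$ on the sandwiched subspace, and on the observation that any product of fewer than $\Lambda'-\Lambda$ factors of $H_w$ acting between $\Pi_{[0,\Lambda]}$ and $\Pi_{[0,\Lambda']}$ populates boson numbers at most $\widetilde{\Lambda}$, where $H_w$ and $\widetilde{H}_w$ coincide. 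Once this is in hand, the remainder is a mechanical transcription of the state-truncation estimate, with the single bookkeeping change of the overall factor $2$.
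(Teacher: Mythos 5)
Your proposal is correct and follows essentially the same route as the paper: the interaction-picture/Dyson-series comparison (with identical unitary phases guaranteed by $\widetilde{\Lambda}\geq\Lambda'$, and term-by-term agreement below order $\Lambda'-\Lambda$ because intermediate boson numbers stay within the truncation range where $H_w$, $H_r$ coincide with $\widetilde{H}_w$, $\widetilde{H}_r$) yields the factor-$2$ tail sum, which you then bound exactly as in the state-truncation subsection via Stirling's bound, the power-mean inequality, the short-time hypothesis, and the geometric series. The only difference is organizational — you present the tail estimate first and the cancellation argument last — so nothing substantive is missing.
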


This bound may be generalized using the triangle inequality to truncate a long-time Hamiltonian evolution as follows:
\begin{corollary}[Long-time Hamiltonian truncation]
Given bosonic Hamiltonian $H=H_w+H_r$ satisfying \eqref{eq:ham_cond} with parameter $\chi>0$, for any $t>0$, $\epsilon>0$ and integer $\Lambda_0>0$, there exist $s$ time durations $\Delta t_1+\cdots+\Delta t_s= t$ and corresponding bosonic numbers $\Lambda_0<\Lambda_1<\cdots<\Lambda_s\leq\widetilde{\Lambda}$, such that
\begin{widetext}
\begin{align}
    &\Big\|\Pi_{[0,\Lambda_{s}]}e^{-i\Delta t_sH} \Pi_{[0,\Lambda_{s-1}]}\cdots\Pi_{[0,\Lambda_2]}
    e^{-i\Delta t_2H}\Pi_{[0,\Lambda_1]} e^{-i\Delta t_1H}\Pi_{[0,\Lambda_0]} \notag \\
    &~~~~~~~~~~~~~~~~~~~~~~~~ - \Pi_{[0,\Lambda_{s}]}e^{-i\Delta t_s\widetilde{H}}
    \Pi_{[0,\Lambda_{s-1}]}\cdots\Pi_{[0,\Lambda_2]} e^{-i\Delta t_2\widetilde{H}}\Pi_{[0,\Lambda_1]}
    e^{-i\Delta t_1\widetilde{H}}\Pi_{[0,\Lambda_0]}\Big\|\leq\epsilon.
\end{align}
\end{widetext}
The final cutoff $\Lambda_s$ has the asymptotic scaling
\begin{align}
    \sqrt{\Lambda_s}&=\sqrt{\Lambda_0+s\Delta\Lambda} \notag \\
    &=\sqrt{\Lambda_0}+\mathcal{O}\left(\chi t\log\left(\frac{\Lambda_0\chi t}{\epsilon}\right)\right).
\end{align}
\end{corollary}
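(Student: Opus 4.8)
The plan is to reduce the long-time statement to the short-time Hamiltonian truncation lemma by a telescoping argument, reusing verbatim the time schedule and cutoff sequence already constructed for the long-time state truncation corollary. Concretely, I would set $\Delta\Lambda_j\equiv\Delta\Lambda$ so that $\Lambda_j=\Lambda_0+j\Delta\Lambda$, take the step durations with upper bounds $\Delta\tau_j=1/(\chi\sqrt{\Lambda_{j-1}})$, and pick $s$ as the first integer for which $\tau_s=\sum_{j=1}^s\Delta\tau_j\geq t$, which yields $s=\mathcal{O}(\sqrt{\Lambda_0}\chi^2t^2\Delta\Lambda)$ and, by minimality of $s$, $0\leq\Delta t_s\leq\Delta\tau_s$. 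I would then set $\widetilde{\Lambda}=\Lambda_s$, so that $\Lambda'=\Lambda_j\leq\widetilde{\Lambda}$ holds at every step, a hypothesis that the short-time lemma requires.

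The core step is a telescoping decomposition of the difference between the two sandwiched products, one built from the factors $e^{-i\Delta t_jH}$ and the other from $e^{-i\Delta t_j\widetilde{H}}$. Writing $A$ and $B$ for these two products, I would interpolate through hybrid products in which the first $j$ steps use $\widetilde{H}$ and the rest use $H$; each consecutive difference isolates the single operator $\Pi_{[0,\Lambda_j]}(e^{-i\Delta t_jH}-e^{-i\Delta t_j\widetilde{H}})\Pi_{[0,\Lambda_{j-1}]}$, with the adjacent projectors already supplied by the product structure. Since $\widetilde{H}=\Pi_{[0,\widetilde{\Lambda}]}H\Pi_{[0,\widetilde{\Lambda}]}$ is Hermitian, both $e^{-i\Delta t_jH}$ and $e^{-i\Delta t_j\widetilde{H}}$ are unitary, and the projectors are contractions, so every flanking factor has operator norm at most one; submultiplicativity then gives
\begin{equation}
\|A-B\|\leq\sum_{j=1}^s\left\|\Pi_{[0,\Lambda_j]}\left(e^{-i\Delta t_jH}-e^{-i\Delta t_j\widetilde{H}}\right)\Pi_{[0,\Lambda_{j-1}]}\right\|.
\end{equation}

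I would then apply the short-time Hamiltonian truncation lemma to each summand. For step $j$ one has $\Lambda'-\Lambda=\Delta\Lambda$ and $0\leq\Delta t_j\leq\Delta\tau_j=1/(\chi\sqrt{\Lambda_{j-1}})$, so all the lemma's hypotheses ($\Delta t_j\leq1/\chi\sqrt{\Lambda_{j-1}}$, $\Lambda'\leq\widetilde{\Lambda}$, and $\Delta\Lambda\geq8e^2$) are satisfied, bounding each term by $2(\sqrt{2}e/\sqrt{\Delta\Lambda})^{\Delta\Lambda}$. Summing over the $s$ steps gives $\|A-B\|\leq2s(\sqrt{2}e/\sqrt{\Delta\Lambda})^{\Delta\Lambda}$, identical in form to the state-truncation bound up to the harmless factor of two. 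Invoking Lemma \ref{lem:lambert_w} exactly as before, I would choose $\Delta\Lambda=\mathcal{O}(1+\log(\Lambda_0\chi t/\epsilon))$ so that this product is at most $\epsilon$, and the resulting final cutoff inherits the scaling $\sqrt{\Lambda_s}=\sqrt{\Lambda_0}+\mathcal{O}(\chi t\log(\Lambda_0\chi t/\epsilon))$.

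The main obstacle here is bookkeeping rather than analysis: making the telescoping rigorous requires checking that each isolated difference is flanked only by contractions and, crucially, that the consistency requirement $\widetilde{\Lambda}\geq\Lambda_j$ holds uniformly in $j$ so that every application of the short-time lemma is legitimate. Once the schedule is pinned down with $\widetilde{\Lambda}=\Lambda_s$ and one notes $0\leq\Delta t_s\leq\Delta\tau_s$, the remaining estimates are a direct repetition of the argument in Subsection \ref{sec:state_trunc}, so no genuinely new analytic input is needed beyond the short-time lemma already in hand.
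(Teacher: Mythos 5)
Your proposal is correct and matches the paper's own route: the paper proves this corollary exactly by telescoping the difference of the two sandwiched products via the triangle inequality, bounding each hybrid difference with the short-time Hamiltonian truncation lemma, and reusing the step schedule $\Delta\tau_j=1/(\chi\sqrt{\Lambda_{j-1}})$, the choice of $s$, and the Lambert-W analysis from the state-truncation subsection verbatim. Your added bookkeeping (unitarity of $e^{-i\Delta t_j\widetilde{H}}$, contractivity of the projectors, and verifying $\Lambda_j\leq\widetilde{\Lambda}=\Lambda_s$ at every step) is precisely the detail the paper leaves implicit, and it is all sound.
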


It remains to revert the state truncation:
\begin{align}
    &\Pi_{[0,\Lambda_{s}]}e^{-i\Delta t_s\widetilde{H}}
    \Pi_{[0,\Lambda_{s-1}]}\cdots \notag \\
    &~~~~~~~~\Pi_{[0,\Lambda_2]}
    e^{-i\Delta t_2\widetilde{H}}\Pi_{[0,\Lambda_1]}
    e^{-i\Delta t_1\widetilde{H}}\Pi_{[0,\Lambda_0]}
    \approx e^{-it\widetilde{H}}\Pi_{[0,\Lambda_0]}, \notag
\end{align}
which proceeds similarly as in Subsection \ref{sec:state_trunc}. Setting the accuracy parameter to be $\epsilon/3$ for each of the following three approximations,
\begin{widetext}
\begin{align}
    e^{-itH}\Pi_{[0,\Lambda_0]} & \stackrel{\epsilon/3}{\approx} \Pi_{[0,\Lambda_{s}]}e^{-i\Delta t_sH} \Pi_{[0,\Lambda_{s-1}]}\cdots\Pi_{[0,\Lambda_2]}
    e^{-i\Delta t_2H}\Pi_{[0,\Lambda_1]} e^{-i\Delta t_1H}\Pi_{[0,\Lambda_0]} \notag \\
    & \stackrel{\epsilon/3}{\approx} \Pi_{[0,\Lambda_{s}]}e^{-i\Delta t_s\widetilde{H}} \Pi_{[0,\Lambda_{s-1}]}\cdots\Pi_{[0,\Lambda_2]}
    e^{-i\Delta t_2\widetilde{H}}\Pi_{[0,\Lambda_1]} e^{-i\Delta t_1\widetilde{H}}\Pi_{[0,\Lambda_0]} \notag \\
    & \stackrel{\epsilon/3}{\approx} e^{-it\widetilde{H}}\Pi_{[0,\Lambda_0]}
\end{align}
\end{widetext}
we finally obtain:
\begin{theorem}[Bosonic Hamiltonian truncation]
\label{thm:boson_trunc}
Given bosonic Hamiltonian $H=H_w+H_r$ satisfying \eqref{eq:ham_cond} with parameter $\chi>0$, for any $t>0$, $\epsilon>0$ and integer $\Lambda_0>0$, there exists an integer $\widetilde{\Lambda}>0$ such that
\begin{equation}
    \left\|e^{-itH}\Pi_{[0,\Lambda_0]}
    - e^{-it\Pi_{[0,\widetilde{\Lambda}]}H\Pi_{[0,\widetilde{\Lambda}]}}\Pi_{[0,\Lambda_0]}\right\|\leq\epsilon.
\end{equation}
The cutoff $\widetilde{\Lambda}$ has the asymptotic scaling
\begin{equation}
    \sqrt{\widetilde{\Lambda}}
    =\sqrt{\Lambda_0}+\mathcal{O}\left(\chi t\log\left(\frac{\Lambda_0\chi t}{\epsilon}\right)\right). \label{eq:truncation_threshold}
\end{equation}
\end{theorem}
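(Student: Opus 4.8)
The plan is to obtain the single-step truncation bound by chaining three approximations, each controlled to accuracy $\epsilon/3$, and then invoking the triangle inequality. The two intermediate objects are the telescoped products of short-time evolutions with projectors inserted between consecutive steps: first built from the full Hamiltonian $H$, and then from the truncated Hamiltonian $\widetilde{H}=\Pi_{[0,\widetilde{\Lambda}]}H\Pi_{[0,\widetilde{\Lambda}]}$. Writing $\mathcal{P}[A]=\Pi_{[0,\Lambda_s]}e^{-i\Delta t_sA}\Pi_{[0,\Lambda_{s-1}]}\cdots\Pi_{[0,\Lambda_1]}e^{-i\Delta t_1A}\Pi_{[0,\Lambda_0]}$ for the telescoped product over the schedule $\{\Delta t_j,\Lambda_j\}$ fixed in the state-truncation analysis, the chain is
\begin{equation}
    e^{-itH}\Pi_{[0,\Lambda_0]}
    \;\stackrel{\epsilon/3}{\approx}\;\mathcal{P}[H]
    \;\stackrel{\epsilon/3}{\approx}\;\mathcal{P}[\widetilde{H}]
    \;\stackrel{\epsilon/3}{\approx}\; e^{-it\widetilde{H}}\Pi_{[0,\Lambda_0]}.
\end{equation}

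The first approximation is exactly the long-time state-truncation corollary of Subsection~\ref{sec:state_trunc} applied to $H$. The second is the long-time Hamiltonian-truncation corollary of Subsection~\ref{sec:ham_trunc}, which replaces each $e^{-i\Delta t_j H}$ by $e^{-i\Delta t_j\widetilde{H}}$ at the cost of a Dyson-series mismatch that only begins at order $\Lambda_j-\Lambda_{j-1}$, provided $\widetilde{\Lambda}\geq\Lambda_s$. The third approximation reverts the state truncation: it is again the long-time state-truncation argument, but now run for $\widetilde{H}$ rather than $H$ and read right-to-left, collapsing $\mathcal{P}[\widetilde{H}]$ back to the single exponential $e^{-it\widetilde{H}}\Pi_{[0,\Lambda_0]}$.

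The main obstacle, and the step deserving the most care, is justifying that this third approximation is legitimate, i.e. that $\widetilde{H}$ itself satisfies the structural conditions~\eqref{eq:ham_cond} with the same parameter $\chi$, so the short-time leakage lemma applies verbatim. I would verify this by writing $\widetilde{H}=\widetilde{H}_w+\widetilde{H}_r$ with $\widetilde{H}_w=\Pi_{[0,\widetilde{\Lambda}]}H_w\Pi_{[0,\widetilde{\Lambda}]}$ and $\widetilde{H}_r=\Pi_{[0,\widetilde{\Lambda}]}H_r\Pi_{[0,\widetilde{\Lambda}]}$: the nearest-neighbour band structure $\Pi_\lambda\widetilde{H}_w\Pi_{\lambda'}=0$ for $|\lambda-\lambda'|>1$ is inherited because $\Pi_\lambda$ commutes with $\Pi_{[0,\widetilde{\Lambda}]}$ and the two projectors agree on the relevant blocks; the commutation $[\widetilde{H}_r,\Pi_\lambda]=0$ follows for the same reason; and the norm bound $\|\widetilde{H}_w\Pi_{[0,\Lambda]}\|\leq\chi(\Lambda+1)^r$ survives because the extra projector $\Pi_{[0,\widetilde{\Lambda}]}$ is a contraction and can only decrease the operator norm. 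Granting this, the short-time leakage bound for $\widetilde{H}$ is identical to that for $H$, so the third error is controlled exactly as the first.

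Finally, I would collect the cutoff scaling. Each step tolerates the same schedule and forces $\Lambda_s\leq\widetilde{\Lambda}$, so the minimal admissible choice is $\widetilde{\Lambda}=\Lambda_s$. Replacing $\epsilon$ by $\epsilon/3$ only shifts the argument of the logarithm by a constant, which is absorbed into the $\mathcal{O}$; hence the increment $\Delta\Lambda=\mathcal{O}(1+\log(\Lambda_0\chi t/\epsilon))$ supplied by Lemma~\ref{lem:lambert_w} is unchanged, and the final cutoff inherits the stated scaling $\sqrt{\widetilde{\Lambda}}=\sqrt{\Lambda_0}+\mathcal{O}(\chi t\log(\Lambda_0\chi t/\epsilon))$ directly from the long-time state-truncation corollary.
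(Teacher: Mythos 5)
Your proposal is correct and follows essentially the same route as the paper: the same three-way $\epsilon/3$ splitting into state truncation, Hamiltonian truncation of the projected product, and reversal of the state truncation for $\widetilde{H}$, with the cutoff scaling read off from the long-time state-truncation corollary. Your explicit verification that $\widetilde{H}=\widetilde{H}_w+\widetilde{H}_r$ inherits the conditions \eqref{eq:ham_cond} with the same $\chi$ is a detail the paper only states implicitly (it appears later, in the multi-mode section), so this is a welcome but not divergent addition.
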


\subsection{Multiple bosonic modes}\label{sec:multi_boson}
In our above analysis, we have fixed a specific bosonic mode and bounded the growth of occupation number under the time evolution. We now discuss how this result can be generalized to Hamiltonians with $N$ bosonic modes.

Specifically, we use $\Pi_{{S}}^{(j)}$ to denote the projector onto states of the $j$th mode with occupation number from set ${S}$. These projectors commute with each other for different bosonic modes. Then, the initial state subspace and the Hamiltonian truncation are respectively determined by the projectors
\begin{equation}
    \Pi_{[0,\Lambda_0]}^{(all)}:=\prod_{j=1}^N\Pi_{[0,\Lambda_0]}^{(j)},~~
    \Pi_{[0,\widetilde{\Lambda}]}^{(all)}:=\prod_{j=1}^N\Pi_{[0,\widetilde{\Lambda}]}^{(j)}.
\end{equation}
Our goal is to analyze the scaling of the maximum cutoff $\widetilde{\Lambda}$ for all bosonic modes such that
\begin{equation}
    e^{-itH}\Pi_{[0,\Lambda_0]}^{(all)}\approx e^{-it\Pi_{[0,\widetilde{\Lambda}]}^{(all)}H\Pi_{[0,\widetilde{\Lambda}]}^{(all)}}\Pi_{[0,\Lambda_0]}^{(all)}.
\end{equation}
This can be achieved by setting the accuracy to be $\epsilon/N$ for each of the following approximations
\begin{align}
    e^{-itH}\Pi_{[0,\Lambda_0]}^{(all)}
    &\stackrel{\epsilon/N}{\approx}e^{-it\Pi_{[0,\widetilde{\Lambda}]}^{(1)}H\Pi_{[0,\widetilde{\Lambda}]}^{(1)}}\Pi_{[0,\Lambda_0]}^{(all)} \notag \\
    &\stackrel{\epsilon/N}{\approx}e^{-it\Pi_{[0,\widetilde{\Lambda}]}^{(2)}\Pi_{[0,\widetilde{\Lambda}]}^{(1)}H\Pi_{[0,\widetilde{\Lambda}]}^{(2)}\Pi_{[0,\widetilde{\Lambda}]}^{(1)}}\Pi_{[0,\Lambda_0]}^{(all)} \notag \\
    &\stackrel{\epsilon/N}{\approx}\cdots \notag \\
    &\stackrel{\epsilon/N}{\approx}e^{-it\Pi_{[0,\widetilde{\Lambda}]}^{(all)}H\Pi_{[0,\widetilde{\Lambda}]}^{(all)}}\Pi_{[0,\Lambda_0]}^{(all)}.
\end{align}
Note that any intermediate truncated Hamiltonian
\begin{equation}
    \prod_{u=1}^j\Pi_{[0,\widetilde{\Lambda}]}^{(u)}H\prod_{u=1}^j\Pi_{[0,\widetilde{\Lambda}]}^{(u)}
\end{equation}
is a bosonic model that satisfies conditions \eqref{eq:ham_cond} with the same parameter $\chi$ as in the original Hamiltonian. It thus follows from Theorem \ref{thm:boson_trunc} that the maximum cutoff has the scaling
\begin{equation}
    \sqrt{\widetilde{\Lambda}}
    =\sqrt{\Lambda_0}+\mathcal{O}\left(\chi t\log\left(\frac{N\Lambda_0\chi t}{\epsilon}\right)\right). \label{eq:truncation_threshold_N}
\end{equation}

\begin{theorem}[$N$-mode bosonic Hamiltonian truncation]
\label{thm:multi_boson_trunc}
Given an $N$-mode bosonic Hamiltonian $H$, suppose that $H=H_w+H_r$ satisfying \eqref{eq:ham_cond} with parameter $\chi>0$ for all the bosonic modes. For any $t>0$, $\epsilon>0$ and integer $\Lambda_0>0$, there exist an integer $\widetilde{\Lambda}>0$ such that
\begin{equation}
    \left\|e^{-itH}\Pi_{[0,\Lambda_0]}^{(all)}
    - e^{-it\Pi_{[0,\widetilde{\Lambda}]}^{(all)}H\Pi_{[0,\widetilde{\Lambda}]}^{(all)}}\Pi_{[0,\Lambda_0]}^{(all)}\right\|\leq\epsilon.
\end{equation}
The cutoff $\widetilde{\Lambda}$ has the asymptotic scaling
\begin{equation}
    \sqrt{\widetilde{\Lambda}}
    =\sqrt{\Lambda_0}+\mathcal{O}\left(\chi t\log\left(\frac{N\Lambda_0\chi t}{\epsilon}\right)\right). \label{eq:multi_truncation_threshold}
\end{equation}
\end{theorem}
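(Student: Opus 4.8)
The plan is to bootstrap the single-mode result, Theorem~\ref{thm:boson_trunc}, into the $N$-mode statement by a hybrid argument that truncates one mode at a time. First I would introduce the chain of partially truncated Hamiltonians $H^{(0)}:=H$ and $H^{(j)}:=\prod_{u=1}^{j}\Pi_{[0,\widetilde{\Lambda}]}^{(u)}\,H\,\prod_{u=1}^{j}\Pi_{[0,\widetilde{\Lambda}]}^{(u)}$ for $j=1,\dots,N$, so that $H^{(N)}=\Pi_{[0,\widetilde{\Lambda}]}^{(all)}H\Pi_{[0,\widetilde{\Lambda}]}^{(all)}$. Because the mode projectors commute, one has $H^{(j)}=\Pi_{[0,\widetilde{\Lambda}]}^{(j)}H^{(j-1)}\Pi_{[0,\widetilde{\Lambda}]}^{(j)}$, i.e.\ the passage from $H^{(j-1)}$ to $H^{(j)}$ is precisely a single-mode truncation acting on mode $j$. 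The target error then telescopes, and by the triangle inequality it is at most $\sum_{j=1}^{N}\bigl\|e^{-itH^{(j-1)}}\Pi_{[0,\Lambda_0]}^{(all)}-e^{-itH^{(j)}}\Pi_{[0,\Lambda_0]}^{(all)}\bigr\|$, each summand being the error of truncating exactly one additional mode.

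Second, I would reduce each summand to a literal instance of Theorem~\ref{thm:boson_trunc}. Writing $\Pi_{[0,\Lambda_0]}^{(all)}=\Pi_{[0,\Lambda_0]}^{(j)}R_j$ with $R_j:=\prod_{u\neq j}\Pi_{[0,\Lambda_0]}^{(u)}$ a contraction commuting with the mode-$j$ projector, the factor $R_j$ pulls out on the right and drops the full low-particle projector down to the single-mode projector $\Pi_{[0,\Lambda_0]}^{(j)}$, so each summand is bounded by the single-mode truncation error for mode $j$ with initial cutoff $\Lambda_0$. To invoke Theorem~\ref{thm:boson_trunc} here I must verify that $H^{(j-1)}$ still obeys the structural conditions \eqref{eq:ham_cond} for mode $j$ with the \emph{same} parameter $\chi$. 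This is exactly where the commutativity $[\Pi_{[0,\widetilde{\Lambda}]}^{(u)},\Pi_{\lambda}^{(j)}]=0$ for $u\neq j$ is used: conjugating by the earlier projectors commutes through the mode-$j$ block decomposition $H_w^{(j)}=\sum_{\lambda}(\Pi_{\lambda+1}^{(j)}H\Pi_{\lambda}^{(j)}+\Pi_{\lambda}^{(j)}H\Pi_{\lambda+1}^{(j)})$ and $H_r^{(j)}=\sum_{\lambda}\Pi_{\lambda}^{(j)}H\Pi_{\lambda}^{(j)}$, so the near-diagonal block form and the relation $[H_r^{(j)},\Pi_{\lambda}^{(j)}]=0$ are preserved, while the norm bound $\|H_w^{(j)}\Pi_{[0,\Lambda]}^{(j)}\|\le\chi(\Lambda+1)^r$ can only be tightened by the extra contractive projections.

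Third, I would allocate a budget $\epsilon/N$ to each of the $N$ links. Theorem~\ref{thm:boson_trunc} at accuracy $\epsilon/N$ then controls every summand and, after summing, yields total error $\epsilon$; since the same cutoff $\widetilde{\Lambda}$ works for every mode, the scaling is inherited directly from the single-mode theorem with $\epsilon\mapsto\epsilon/N$, giving $\sqrt{\widetilde{\Lambda}}=\sqrt{\Lambda_0}+\mathcal{O}\bigl(\chi t\log(\Lambda_0\chi t/(\epsilon/N))\bigr)=\sqrt{\Lambda_0}+\mathcal{O}\bigl(\chi t\log(N\Lambda_0\chi t/\epsilon)\bigr)$, where the additional factor of $N$ inside the logarithm is precisely the cost of the error rescaling.

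The hard part is the structural claim of the second step. The telescoping, the projector factorization, and the error summation are all routine, and the scaling is immediate from Theorem~\ref{thm:boson_trunc}; but the whole hybrid scheme is valid only if every intermediate Hamiltonian genuinely satisfies \eqref{eq:ham_cond} for the next mode with an unchanged $\chi$, so that the per-mode scaling can be reused without accumulation. I expect the cleanest route is to argue this preservation at the level of the explicit block characterization together with the cross-mode commutativity of the projectors, rather than trying to track $H_w^{(j)}$ and $H_r^{(j)}$ through the truncation directly.
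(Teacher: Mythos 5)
Your proposal is correct and follows essentially the same route as the paper: truncate one mode at a time through the chain of partially truncated Hamiltonians, observe that each intermediate Hamiltonian still satisfies the conditions of \eqref{eq:ham_cond} for the next mode with the unchanged parameter $\chi$ (by cross-mode commutativity of the projectors and contractivity), apply Theorem~\ref{thm:boson_trunc} with budget $\epsilon/N$ per link, and absorb the rescaling as the factor $N$ inside the logarithm. Your write-up is in fact slightly more explicit than the paper's sketch — notably the factorization $\Pi_{[0,\Lambda_0]}^{(all)}=\Pi_{[0,\Lambda_0]}^{(j)}R_j$ that reduces each telescoping summand to a literal single-mode instance — but the underlying argument is identical.
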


In Figure \ref{fig: truncation_threshold} we plot the truncation threshold $\tilde{\Lambda}(t)$ required to ensure the time propagation error due to the Hamiltonian truncation is below a predefined error $\epsilon$ for the Holsten model (\ref{Holstein}), which is a special case of the Hubbard-Holstein model without on-site interaction. We assume the initial state is a tensor product between the fermionic ground state and a quantum state of the bosonic modes that has at most $\Lambda_0$ = 1 particles in each mode. We compare the scaling of $\tilde{\Lambda}(t)$ in this work with the ones obtained in previous works,\cite{Tong2022provablyaccurate} 
observing a lower truncation threshold when the system size becomes larger or when the precision requirement is higher.


\begin{figure*}
    \centering
    \includegraphics[width=\linewidth]{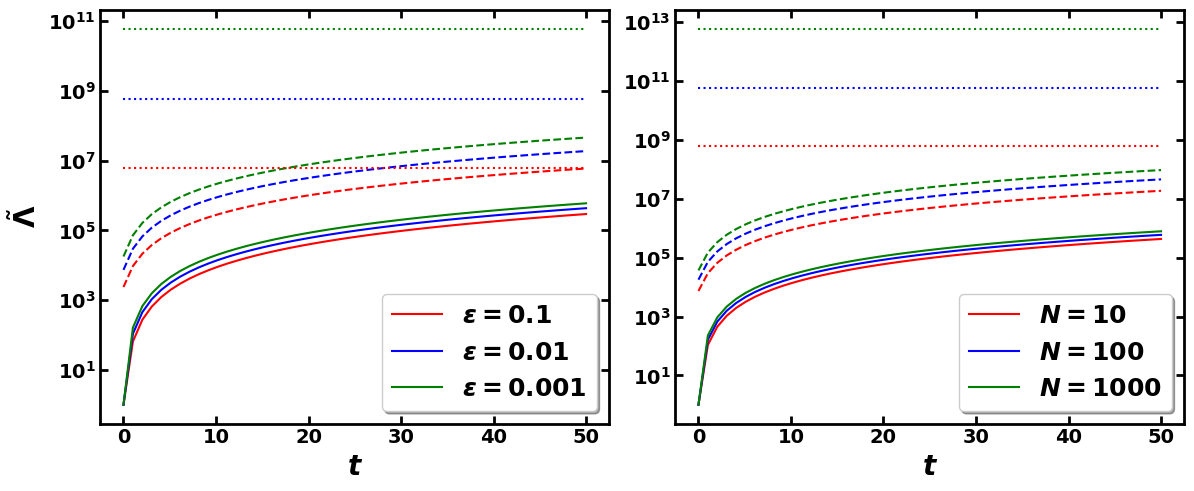}
    \caption{The truncation threshold $\tilde{\Lambda}(t)$ required to keep the error below different $\epsilon$ (left, $N$ is fixed to 100) for the time propagator of boson-fermion Hamiltonian of different size (right, $\epsilon$ is fixed to 0.001). The horizontal dotted lines show the energy-based truncation threshold, i.e., $\tilde{\Lambda} = \mathcal{O}\left( N(N+\omega E)/\omega^2\epsilon^2 \right)$ with $E \sim \mathcal{O}(N)$ (see Appendix K.2 in Ref. \citenum{Tong2022provablyaccurate} for detailed derivation). The dashed curves are the truncation threshold using the upper bound reported in Ref. \citenum{Tong2022provablyaccurate}. $\tilde{\Lambda}(t)$ in this work (solid curves) are obtained employing (\ref{eq:truncation_threshold_N}) for the Holstein model (\ref{Holstein}) with $\Lambda_0=1$ and $g\omega=1$.}
    \label{fig: truncation_threshold}
\end{figure*}

\subsection{Time-dependent Hamiltonians}\label{sec:time_dependent}
We now extend our error analysis to dynamics generated by time-dependent boson-fermion Hamiltonians $H(\tau)$. The resulting evolution
\begin{equation}
    \frac{d}{d\tau}U(\tau)=-iU(\tau)\quad (0\leq\tau\leq t),~~
    U(0)=I
\end{equation}
is described by a time-ordered exponential, which can be further expressed as a Dyson series
\begin{widetext}
\begin{align}
    U(t)&=\mathcal{T}\left\{e^{-i\int_0^td\tau\ H(\tau)}\right\} \notag \\
    &=\sum_{k=0}^\infty(-i)^k\int_0^td\tau_k\cdots\int_0^{\tau_3}d\tau_2\int_0^{\tau_2}d\tau_1\ H(\tau_k)\cdots H(\tau_2)H(\tau_1).
\end{align}
\end{widetext}

We first fix a specific bosonic mode and let the initial state have at most $\Lambda_0$ particles. Like in the above discussion, we assume that the Hamiltonian can be decomposed as
\begin{equation}
    H(\tau)=H_w(\tau)+H_r(\tau),
\end{equation}
where
\begin{align}
    \Pi_{\lambda}H_w(\tau)\Pi_{\lambda'}&=0,~~(\text{if }|\lambda-\lambda'|>1), \notag \\
    \left\|H_w(\tau)\Pi_{[0,\Lambda]}\right\|&\leq\chi(\tau)(\Lambda+1)^r, \notag \\
    \left[H_r(\tau),\Pi_{\lambda}\right]&=0,
\end{align}
for some $\chi(\tau)>0$ and $0\leq r<1$. These conditions are similar to the ones for time-independent Hamiltonians \eqref{eq:ham_cond}, except we require them to hold for all instantaneous times $\tau$.

Our analysis for the time-dependent case will be similar to that for the time-independent case, so we will only highlight the key steps. For the state truncation, the error still adds up at most linearly, except we need to modify the error bound to include the time dependence 
\begin{widetext}
\begin{align}
    &\bigg\|\mathcal{T}\left\{e^{-i\int_0^td\tau\ H(\tau)}\right\}\Pi_{[0,\Lambda_0]}
     - \Pi_{[0,\Lambda_{s}]}\mathcal{T}\left\{e^{-i\int_{t_{s-1}}^{t_{s}}d\tau\ H(\tau)}\right\}
    \Pi_{[0,\Lambda_{s-1}]}\cdots\Pi_{[0,\Lambda_2]}
    e^{-i\int_{t_{1}}^{t_{2}}d\tau\ H(\tau)}\Pi_{[0,\Lambda_1]}
    e^{-i\int_{t_{0}}^{t_{1}}d\tau\ H(\tau)}\Pi_{[0,\Lambda_0]}\bigg\| \notag \\
    &\leq\sum_{j=0}^{s-1}\left\|\overline{\Pi}_{[0,\Lambda_{j+1}]}\mathcal{T}\left\{e^{-i\int_{t_{j}}^{t_{j+1}}d\tau\ H(\tau)}\right\}\Pi_{[0,\Lambda_{j}]}\right\|,
\end{align}
\end{widetext}
where
\begin{equation}
    0=t_0\leq t_1\leq t_2\leq\ldots\leq t_{s-1}\leq t_s=t.
\end{equation}

For each short time leakage, we may without loss of generality shift the time interval to $0\leq\tau\leq\Delta t$. Then we have the interaction picture evolution~\cite[Lemma A.2]{CSTWZ21}
\begin{widetext}
\begin{align}
    \mathcal{T}\left\{e^{-i\int_0^{\Delta t}d\tau\ H(\tau)}\right\}
    &=\mathcal{T}\left\{e^{-i\int_0^{\Delta t}d\tau\ H_r(\tau)}\right\} 
    \cdot\mathcal{T}\left\{e^{
    -i\int_0^{\Delta t}d\tau\ \mathcal{T}\left\{e^{i\int_0^\tau ds H_r(s)}\right\}\cdot
    H_w(\tau)\cdot \mathcal{T}\left\{e^{-i\int_0^\tau ds H_r(s)}\right\} 
    }\right\}.
\end{align}
\end{widetext}
Note that the evolution generated by $H_r(\tau)$ necessarily preserves the bosonic number
\begin{equation}
    \left[\mathcal{T}\exp\left(-i\int_0^{\Delta t}d\tau\ H_r(\tau)\right),\Pi_\lambda\right]=0,
\end{equation}
which follows by applying the condition $[H_r(\tau),\Pi_\lambda]=0$ in the Dyson series. Thus, we have
\begin{widetext}
\begin{align}
    &\left\|\overline{\Pi}_{[0,\Lambda']}\mathcal{T}\exp\left(-i\int_{0}^{\Delta t}d\tau\ H(\tau)\right)\Pi_{[0,\Lambda]}\right\| \notag \\
    &~~\leq\sum_{k=\Lambda'-\Lambda}^\infty \int_{0}^{\Delta t}\mathrm{d}\tau_k\cdots\int_{0}^{\tau_3}\mathrm{d}\tau_2\int_{0}^{\tau_2}\mathrm{d}\tau_1
    \|H_w(\tau_k)\Pi_{[0,\Lambda+k-1]}\|\cdots\|H_w(\tau_2)\Pi_{[0,{\Lambda+1}]}\|\|H_w(\tau_1)\Pi_{[0,\Lambda]}\| \notag \\
    &~~\leq\sum_{k=\Lambda'-\Lambda}^\infty\frac{(\int_0^{\Delta t}d\tau\ \chi(\tau))^k}{k!}\sqrt{\Lambda+k}\cdots\sqrt{\Lambda+2}\sqrt{\Lambda+1}.
\end{align}
\end{widetext}
We now proceed as in the time-independent case to obtain the following leakage bound
\begin{equation}
    \left\|\overline{\Pi}_{[0,\Lambda']}\mathcal{T}\left\{e^{-i\int_{0}^{\Delta t}d\tau\ H(\tau)}\right\}\Pi_{[0,\Lambda]}\right\|
    \leq\left(\frac{\sqrt{2}e}{\sqrt{\Lambda'-\Lambda}}\right)^{\Lambda'-\Lambda}
\end{equation}
for any $0\leq\int_0^{\Delta t}d\tau\ \chi(\tau)\leq1/\sqrt{\Lambda}$ and integers $0\leq\Lambda<\Lambda'$ such that $\Lambda'-\Lambda\geq8e^2$.

To extend this analysis to a long time evolution, we divide the evolution into short steps with durations $\Delta t_1+\cdots+\Delta t_s= t$ and corresponding bosonic numbers $\Lambda_0<\Lambda_1<\cdots<\Lambda_s$. We let the bosonic number increase linearly, i.e.,
\begin{equation}
    \Delta\Lambda_j\equiv\Delta\Lambda~~\Rightarrow~~\Lambda_j=\Lambda_0+j\Delta\Lambda
\end{equation}
and choose the duration upper bounds
\begin{equation}
    \int_{\tau_{j-1}}^{\tau_j}d\tau\ \chi(\tau)=\frac{1}{\sqrt{\Lambda_{j-1}}},
\end{equation}
which implies
\begin{align}
    \int_{0}^{\tau_s}d\tau\ \chi(\tau)&=\sum_{j=1}^s\frac{1}{\sqrt{\Lambda_0+(j-1)\Delta\Lambda}} \notag \\
    &\geq\frac{2}{\Delta\Lambda}\left(\sqrt{\Lambda_0+s\Delta\Lambda}-\sqrt{\Lambda_0}\right).
\end{align}
Note that $\lim_{s\rightarrow\infty}\int_{0}^{\tau_s}d\tau\ \chi(\tau)=\infty$, so we can choose the first integer $s$ such that $\int_{0}^{\tau_s}d\tau\ \chi(\tau)\geq \int_{0}^{t}d\tau\ \chi(\tau)$. Explicitly,
\begin{equation}
    s=\left\lceil\frac{1}{\Delta\Lambda}\left(\left(\sqrt{\Lambda_0}+\frac{\int_{0}^{t}d\tau\ \chi(\tau)\Delta\Lambda}{2}\right)^2-\Lambda_0\right)\right\rceil.
\end{equation}
With these time upper bounds, we simply let
\begin{align}
    \Delta t_1&=\Delta\tau_1,\ \notag \\
    \Delta t_2&=\Delta\tau_2,\ \ldots,\ \notag \\
    \Delta t_s&=t-\left(\Delta t_1+\cdots+\Delta t_{s-1}\right)\leq \Delta\tau_s.
\end{align}
The remaining analysis proceeds in a similar way as in the time-independent case. This yields a cutoff value
\begin{equation}
    \sqrt{\widetilde{\Lambda}}
    =\sqrt{\Lambda_0}+\mathcal{O}\left(\int_{0}^{t}{\rm d}\tau\ \chi(\tau)\log\left(\frac{N\Lambda_0\int_{0}^{t}d\tau\ \chi(\tau)}{\epsilon}\right)\right).
\end{equation}
for $N\geq 1$ bosonic modes.

\section{Conclusion}

In this tutorial review, we have elucidated and analyzed the methodologies and techniques employed for the quantum simulation of various types of boson-related model Hamiltonians, from qubit mapping to state preparation and evolution. We have discussed the construction of the effective Hamiltonian of bosonic quantum systems within the coupled cluster context through unitary or non-unitary flows and the design of corresponding hybrid quantum algorithms potentially suitable for future larger-scale applications.

A critical aspect of practical quantum simulation involves truncating the bosonic mode. Hence, we have devoted special attention to the error analysis of such truncations. We detail the mathematical derivations, their results, and the implications these have on the techniques used in quantum simulations. Notably, our bound applies to time-dependent Hamiltonians and is more stringent than a recent bound from Ref.~\citenum{Tong2022provablyaccurate}, aiding in the development and evaluation of novel algorithms and approximate approaches.

Looking ahead, the challenge and importance of simulating open quantum systems accurately and efficiently will only increase, especially for systems where fermions interact with bosonic modes. Customized quantum simulation and effective Hamiltonian techniques for boson-related quantum systems, guided by associated error analysis, can serve as valuable resources in the field. They inspire further studies to advance our understanding and capabilities in quantum simulations.

\section{Acknowledgments}
B.P., D.C. and K.K. are supported by  the ``Embedding QC into Many-body Frameworks for Strongly Correlated Molecular and Materials Systems'' project, which is funded by the U.S. Department of Energy (DOE), Office of Science, Office of Basic Energy Sciences, the Division of Chemical Sciences, Geosciences, and Biosciences. This work was supported by the Quantum Science Center, a National Quantum Information Science Research Center of the DOE.
KK also acknowledges the support from the Center for MAny-Body Methods, Spectroscopies, and Dynamics for Molecular POLaritonic Systems (MAPOL)
under FWP 79715, which is funded as part of the Computational Chemical Sciences (CCS) program by the U.S. Department of Energy, Office of Science, Office of Basic Energy Sciences, Division of Chemical Sciences, Geosciences and Biosciences at Pacific Northwest National Laboratory (PNNL).
The Pacific Northwest National Laboratory is operated by Battelle for the DOE under Contract DE-AC05-76RL01830.

\appendix
\section{Quantum Simulation of Spin-Boson Model}\label{app_a}

We compute the time evolution of a spin-boson model according to a Lindblad master equation that accounts for experimental imperfections causing heating and dephasing of the motional mode~\cite{An_2015}
\begin{align}
    \frac{\partial \hat{\rho}(t)}{\partial t} 
    = &-i[\hat{H},\hat{\rho}(t)] 
    + \Gamma\bigg[ 2 \hat{n} \hat{\rho}(t)\hat{n} - \{\hat{n}\hat{n},\hat{\rho}(t)\} \bigg] \notag \\
    &+ \gamma\bigg[ 2 b \hat{\rho}(t) b^\dagger - \{b b^\dagger ,\hat{\rho}(t)\} \notag \\
    &~~~~~~~~ + 2 b^\dagger \hat{\rho}(t) b - \{ \hat{n},\hat{\rho}(t)\}  \bigg] \label{eq:Lindblad}
\end{align}
where $\Gamma$ and $\gamma$ are dephasing parameter and heating rate, respectively, and $\{\cdot,\cdot\}$ is the anticommutator, i.e.,
\begin{align}
    \{A,B\} = AB + BA.
\end{align}
The classical simulation can be done using, for example, a fourth order Runge-Kutta method with a relatively small time step (e.g. $10^{-3}$) to achieve the numerical stability of the simulation in the studied time regime.
To perform the quantum simulation of Eq. (\ref{eq:Lindblad}), we first transform (\ref{eq:Lindblad}) to its vectorized form. This can be done through Choi-Jamio{\l}kowski isomorphism (also called channel-state duality)~\cite{CHOI1975285,JAMIOLKOWSKI1972275,PhysRevA.87.022310},i.e.,
\begin{align}
    |i\rangle\langle j| = |j \rangle\otimes|i\rangle,
\end{align}
which from a matrix point of view is same as transforming the matrix to a vector through stacking its columns, e.g.
\begin{align}
    M = \left[\begin{array}{cc}
        m_{11} & m_{12} \\
        m_{21} & d_{22}
    \end{array}\right] \Rightarrow 
    \vec{M} = \left[ \begin{array}{c}
      m_{11} \\ m_{21} \\ m_{12} \\ m_{22}
    \end{array}\right].
\end{align}
There are two useful properties of this vectorization
\begin{align}
    \left\{ \begin{array}{l}
     \Tr{A^\dagger B} = \vec{A}^\dagger \vec{B} \\
     M = ABC \Rightarrow \vec{M} = (C^T\otimes A) \vec{B}
    \end{array}
    \right. ~~,
\end{align}
from which one can rewrite the operator products in (\ref{eq:Lindblad}), e.g.
\begin{align}
    \hat{H}\hat{\rho} = \hat{H}\hat{\rho} I 
    &\Rightarrow 
    (I\otimes \hat{H})\vec{\rho} \notag \\
    \hat{\rho}\hat{H} = I\hat{\rho}\hat{H} 
    &\Rightarrow 
    (\hat{H}^T \otimes I)\vec{\rho} \notag \\
    b^\dagger \hat{\rho} b 
    &\Rightarrow 
    (b^T \otimes b^\dagger) \vec{\rho} \notag \\
    b \hat{\rho} b^\dagger 
    &\Rightarrow 
    (b^\ast \otimes b) \vec{\rho} \notag \\
    \vdots& \notag 
\end{align}
Here, $\ast$, $T$, and $\dagger$ denote the complex conjugate, transpose, and adjoint operators, respectively. Then (\ref{eq:Lindblad}) can be re-expressed as 
\begin{align}
    \frac{\partial \vec{\rho}(t)}{\partial t} = \hat{\mathcal{L}} \vec{\rho}(t) ~~\Rightarrow~~  \vec{\rho}(t) =  e^{\tilde{\mathcal{L}}t} \vec{\rho}(0) 
\end{align}
with the Lindbladian $\hat{\mathcal{L}}$ defined as
\begin{align}
    \hat{\mathcal{L}} &= -i I\otimes \hat{H} + i\hat{H}^T\otimes I \notag \\
    &~~~~ + \Gamma\bigg[ 2 \hat{n}^T\otimes \hat{n} - I\otimes (\hat{n}\hat{n}) - (\hat{n}\hat{n})^T \otimes I \bigg] \notag \\
    &~~~~ + \gamma\bigg[ 2 b^\ast \otimes b - I\otimes (bb^\dagger) - (b^\ast b^T) \otimes I \notag \\
    &~~~~~~~~~~~~ + 2 b^T \otimes b^\dagger - I\otimes \hat{n} - \hat{n}^T \otimes I \bigg].
\end{align}
The Lindblian above is not time dependent and generally nonunitary. Simple Lindbladian can be directly diagonalized through a unitary transform $U$
\begin{align}
    e^{\tilde{\mathcal{L}}t} = e^{U \tilde{\mathcal{L}}_D U^\dagger t} = U e^{\tilde{\mathcal{L}}_D t} U^\dagger.
\end{align}
For more general Hamiltonians (including Dirac Hamiltonian for relativistic quantum simulations) and optical potentials, the nonunitary propagator $e^{\tilde{\mathcal{L}}t}$ can be expressed as a linear combination of unitary (LCU) operators $e^{\tilde{\mathcal{L}}t} = \sum_i c_i \hat{U}_i$,
and encode the linear combination of unitary propagates on quantum computers. One way to find the unitary basis is to express $e^{\tilde{\mathcal{L}}t}$ as the sum of a Hermitian $\mathcal{A}$ and an anti-Hermitian $\mathcal{B}$ operators, and approximate each of them using first-order Taylor expansion~\cite{Davids2021prl}
\begin{align}
    e^{\tilde{\mathcal{L}}t} = \mathcal{A} + \mathcal{B} 
\end{align}
with
\begin{align}
\begin{array}{l}
    \mathcal{A} = \frac{1}{2} ( e^{\tilde{\mathcal{L}}t} + e^{\tilde{\mathcal{L}^\dagger}t}) = (ie^{-i\epsilon\mathcal{A}}-ie^{i\epsilon\mathcal{A}})/2\epsilon + \mathcal{O}(\epsilon^2) \\
    \mathcal{B} = \frac{1}{2} ( e^{\tilde{\mathcal{L}}t} - e^{\tilde{\mathcal{L}^\dagger}t}) = (e^{\epsilon\mathcal{B}} -e^{-\epsilon\mathcal{B}})/2\epsilon + \mathcal{O}(\epsilon^2)
    \end{array} .
\end{align}
The four unitaries can be implemented separately or together on a dilated space. However, implementing each unitary and its associated Trotterized approximation may lead to a deep circuit structure, which warrants exploration of the optimal circuit structure through analytical and numerical means.

\section{Unitary Flow for Many-Body-Localization}\label{App:U_path}

We can define a unitary path $\hat{U}(s)$ depending on a parameter $s$ such that
for the Hamiltonian $\hat{H}$ of a given quantum system the following unitary transformation
\begin{align}
    \hat{H}'(s) = \hat{U}(s) \hat{H} \hat{U}^\dagger(s) \label{eq: upath}
\end{align}
can gradually reduce the band of $\hat{H}$, and repeating unitary transformations $n$ times
\begin{align}
    \hat{H}_D = \hat{U}_n \cdots \hat{U}_1 \hat{H} \hat{U}^\dagger_1 \cdots \hat{U}^\dagger_n \label{eq: uflow}
\end{align}
would give the diagonal form of the Hamiltonian $\hat{H}_D$ (in analogous to the Jacobi eigenvalue 
algorithm where a series of Given rotations are performed). The goal of seeking for an optimal $s$
to gradually reduce the band of $\hat{H}$ can be viewed as a optimization problem for which 
one would usually need to interrogate the corresponding gradient of $\hat{H}'$ with respect to $s$,
\begin{align}
    \frac{\partial \hat{H}'}{\partial s} = [\mathcal{G}(s), \hat{H}'(s)], \label{eq: grad}
\end{align}
with the anti-Hermitian generator $\mathcal{G}(s)$ defined as
\begin{align}
    \mathcal{G}(s)= \frac{\partial \hat{U}}{ \partial s}\hat{U}^\dagger. \label{eq: generator}
\end{align}
Eq. (\ref{eq: grad}) is also recognized as the most general form of a unitary flow on the Hamiltonian~\cite{bartlett2003flow}.

Among several unitary paths or their generators in this section, Wegner Generator~\cite{Deift_1983,Wegner1994} is one of the simplest, and is defined as
\begin{align}
    \mathcal{G}(s) = [\hat{H}'_D(s), \hat{H}'(s)], \label{eq: wegner}
\end{align}
with its matrix elements
\begin{align}
    \mathcal{G}_{ij} = \hat{H}'_{ij}(d_i - d_j) = \left\{ \begin{array}{cc}
      0   &  \text{if $i=j$}\\ h_{ij}(d_i-d_j)   & \text{if $i\neq j$}
    \end{array} \right. .
\end{align}
Here $d_i$'s and $h_{ij}$'s are diagonal and off-diagonal entries of $\hat{H}'$, respectively.
Note that both $d_i$'s and $h_{ij}$'s are $s$-dependent, their $s$-derivatives can be derived 
by plugging the Wegner generator (\ref{eq: wegner}) to the the unitary flow (\ref{eq: uflow}),
\begin{align}
    \left( \frac{\partial \hat{H}'}{\partial s} \right)_{ij} 
    &= \sum_{k} \mathcal{G}_{ik} \hat{H}'_{kj} -  \hat{H}'_{ik} \mathcal{G}_{kj} \notag \\
    &= \sum_{k} h_{ik} \hat{H}'_{kj} (d_i - d_k) - \hat{H}'_{ik} h_{kj}(d_k - d_j) \\
    &= -h_{ij} (d_i-d_j)^2 + \sum_{k\neq i,j} h_{ik} h_{kj} (d_i + d_j - 2d_k) \notag 
\end{align}
in which the diagonal elements reduce to 
\begin{align}
    \left( \frac{\partial \hat{H}'}{\partial s} \right)_{ii} &= \frac{\partial d_i}{\partial s} = 2\sum_{k} |h_{ik}|^2 (d_i-d_k)     
\end{align}
To demonstrate how the amplitudes of the $d_i$'s or $h_{ij}$'s evolve along the Wegner unitary flow, 
one can look at the derivatives of $d_i^2$'s or those of $|h_{ij}|^2$. The reason why we can only look
at one of these is because their sum is constantly zero due to the fact that 
\begin{align}
    &\frac{\partial}{\partial s}{\rm Tr}(\hat{H}'^2) \notag \\
    &= \frac{\partial}{\partial s}\sum_i \hat{H}_{ii}^{\prime2} = \frac{\partial}{\partial s}\sum_{i,k} \hat{H}'_{ik}\hat{H}'_{ki} 
     = \frac{\partial}{\partial s}\left( \sum_i d_i^2 + \sum_{i,j\neq i} |h_{ij}|^2 \right) \notag \\
    &= \frac{\partial}{\partial s}{\rm Tr}(\hat{U}_n\cdots\hat{U}_1\hat{H}^2\hat{U}^\dagger_1\cdots\hat{U}^\dagger_n) \notag \\
    &= \frac{\partial}{\partial s}{\rm Tr}(\hat{H}^2\hat{U}^\dagger_1\cdots\hat{U}^\dagger_n\hat{U}_n\cdots\hat{U}_1) \notag \\
    &= \frac{\partial}{\partial s}{\rm Tr}(\hat{H}^2) = 0,
\end{align}
where we utilize ${\rm Tr}(\mathbf{AB}) = {\rm Tr}(\mathbf{BA})$ and $\hat{U}^\dagger_i\hat{U}_i = \mathbf{1}$.
Since
\begin{align}
    \frac{\partial}{\partial s}\sum_i d_i^2 
    &= 2 \sum_i d_i \frac{\partial d_i}{\partial s} = 2 \sum_{i,k} |h_{ik}|^2 (2d_i^2 - 2d_id_k) \notag \\
    &= 2 \underbrace{\sum_{i,k} |h_{ik}|^2 (d_i^2 + d_k^2 - 2d_id_k)}_{\text{\scriptsize indices $i$ and $k$ are interchangable here}} \notag \\
    &= 2 \sum_{i,k} |h_{ik}|^2 (d_i - d_k)^2 \ge 0 \notag \\
\Rightarrow  \frac{\partial}{\partial s} \sum_{i,j\neq i} |h_{ij}|^2 &\le 0,
\end{align}
the amplitudes of the diagonal (off-diagonal) elements of $\hat{H}$ will monotonically increase (decrease) as the flow evolves.

It's worth mentioning that for certain types of Hamiltonians, e.g. number-conserving quadratic Hamiltonian 
\begin{align}
    \hat{H} = \sum_{ij} H_{ij} a_i^\dagger a_j
\end{align}
where the matrix $H$ must be Hermitian for the operator $\hat{H}$ to also be Hermitian, and $a^\dagger_i$ and $a_i$ can be either fermionic or bosonic creation and annihilation operators, the Hermitian matrix $H$ can be diagonalized by a unitary transformation $U$ with eigenvalues $E=\{\epsilon_1,\cdots,\epsilon_n\}$. This transformation also applies to the creation and annihilation operators for the diagonalization of the operator $\hat{H}$, i.e.
\begin{align}
    &U^\dagger H U = E,~~a_i^\dagger = \sum_m \alpha_m^\dagger (U^\dagger)_{mj}, ~~ a_i = \sum_l U_{im} \alpha_m, \notag \\
    \Rightarrow ~~&
    \hat{H} = \sum_{ij} \sum_{mn} \alpha_m^\dagger U^\dagger_{mi} H_{ij} U_{jn} \alpha_n = \sum_n \epsilon_n \alpha_n^\dagger \alpha_n.
\end{align}
One method to identify such a unitary transformation $U$ is the well-known Fourier-Bogoliiubov transformation method. To see how it works, we can first take the fermionic Hamiltonian of XY model (with an external field in $z$ direction) as an example. The Hamiltonian reads
\begin{align}
    \hat{H} = -J \sum_j \left[
    \frac{1+\gamma}{2}X_j X_{j+1} + \frac{1-\gamma}{2}Y_j Y_{j+1} + \lambda Z_j
    \right]
\end{align}
whose fermionic form is 
\begin{align}
    \hat{H} &= - J \sum_j \left[
    f_{j+1}^\dagger f_j + f_j^\dagger f_{j+1} + \gamma(f_{j+1}f_j + f_j^\dagger f_{j+1}^\dagger) \right. \\
    &~~~~\left.- 2\lambda f_j^\dagger f_j + \lambda \right] \notag
\end{align}
with $J>0$, $\gamma$, and $\lambda$ being scalars. Note that one can also use hard-core bosonic operator to get the same form of the Hamiltonian. 
Employing the \textbf{Fourier transform} of the (boson or fermion) creation/annihilation operators
\begin{align}
    f_{j}^\dagger &= \frac{1}{\sqrt{N}} \sum_{k} e^{-ikja} f_k^\dagger, ~~~~ 
    f_{j} = \frac{1}{\sqrt{N}} \sum_{k} e^{+ikja} f_k 
\end{align}
with $a=\frac{2\pi}{N}$ and
\begin{align}
    k=\left\{\begin{array}{ll}
    -\frac{N-1}{2},\cdots,-1,0,1,\cdots,\frac{N-1}{2}, &~~\text{if $N$ is odd} \\
    -\frac{N}{2},\cdots,-1,0,1,\cdots,\frac{N}{2}-1, &~~\text{if $N$ is even}
    \end{array} \right. , \notag 
\end{align}
and utilizing the relation 
\begin{align}
    N^{-1} \sum_{j=1}^N e^{i(k+k')ja} = \delta_{k,-k'},
\end{align}
we can equivalently rewrite the Hamiltonian in the $k$-space, i.e.
\begin{align}
    \hat{H} = - J \sum_k &\left[
    2(\cos(ka)-\lambda) f_k^\dagger f_k + \gamma (f_{k} f_{-k} + f_k^\dagger f_{-k}^\dagger) e^{ika} \right. \notag \\
    & ~~+ \lambda  \Big] \notag \\
    = - J \sum_k &\left[
    (\cos(ka)-\lambda) (f_k^\dagger f_k + f_{-k}^\dagger f_{-k} ) \right. \notag \\
    & ~~+ i\gamma \sin(ka) (f_{k} f_{-k} + f_k^\dagger f_{-k}^\dagger) + \lambda
    \Big]. \label{eq: FT_Ham}
\end{align}
where the second line is due to the fact that $k$ is distributed symmetrically around zero, e.g., the second term can be rewritten as
\begin{align}
    &\sum_k (f_{k} f_{-k} + f_k^\dagger f_{-k}^\dagger) e^{ika} \notag \\
    &~~= \sum_k (f_{k} f_{-k} + f_{-k}^\dagger f_{k}^\dagger) e^{-ika} = -\sum_k (f_{k} f_{-k} + f_k^\dagger f_{-k}^\dagger) e^{-ika} \notag \\
    &~~= \frac{1}{2} \sum_k (f_{k} f_{-k} + f_k^\dagger f_{-k}^\dagger) (e^{ika} - e^{-ika}) \notag \\
    &~~= i\sum_k \sin(ka) (f_{k} f_{-k} + f_k^\dagger f_{-k}^\dagger)
\end{align}
Note that the above coupling between $k$ and $-k$ can be removed through the \textbf{Bogoliubov transformation}, another common tool often used to diagonalize fermonic and/or bosonic Hamiltonians. To see that, we can define
\begin{align}
    g^\dagger_k &= u_k f^\dagger_k + iv_k f_{-k}, ~~~~
    g^\dagger_{-k} = u_k f^\dagger_{-k} - iv_k f_{k}, \notag \\
    g_k &= u_k f_k -i v_k f^\dagger_{-k}, ~~~~
    g_{-k} = u_k f_{-k} +i v_k f^\dagger_{k},
\end{align}
with $u_k = \cos \theta_k$, $v_k = \sin \theta_k (\theta_k \in \mathbf{R})$, and $u_k^2 + v_k^2 = 1$.
It's then easy to show that
\begin{align}
    f_k^\dagger &= u_k g_k^\dagger - iv_k g_{-k}, ~~~~
    f_{-k}^\dagger = u_k g_{-k}^\dagger +iv_k g_{k},  \notag \\
    f_k &= u_k g_k +i v_k g_{-k}^\dagger, ~~~~
    f_{-k} = u_k g_{-k} -i v_k g_{k}^\dagger. \label{eq: bogoliubov}
\end{align}
Plugging (\ref{eq: bogoliubov}) to (\ref{eq: FT_Ham}) and denoting
\begin{align}
    \Delta_k = \gamma\sin(ka),~~ \epsilon_k = \lambda-\cos(ka),~~E_k = \sqrt{\Delta_k^2+ \epsilon_k^2}
\end{align}
we then have 
\begin{align}
    \hat{H} = J \sum_k& \left\{
    \left[ \epsilon_k (u_k^2 - v_k^2) + 2 \Delta_k u_k v_k \right] (g_k^\dagger g_k + g_{-k}^\dagger g_{-k}) \right. \notag \\
    &~~ + 2v_k^2\epsilon_k - 2u_kv_k\Delta_k \notag \\
    &~~ +\left[ 2i \epsilon_k u_k v_k - i \Delta_k (u_k^2 - v_k^2) \right] (g_k^\dagger g_{-k}^\dagger + g_{k}g_{-k}) \notag \\
    &~~ - \lambda \Big\}.
\end{align}
The full diagonalization then requires the $(g_k^\dagger g_{-k}^\dagger + g_{-k}g_{k})$ term in the above equation to disappear which means $u_k$ and $v_k$ need to be chosen such that
\begin{align}
    &2i \epsilon_k u_k v_k - i \Delta_k (u_k^2 - v_k^2) = 0, \notag \\
    \Rightarrow ~~&
    \tan 2\theta_k = \frac{2u_k v_k}{u_k^2 - v_k^2} = \frac{\Delta_k}{\epsilon_k},\notag \\ 
     & u_k^2 - v_k^2 = \pm \epsilon_k/E_k, ~~
     u_kv_k = \pm \Delta_k/2E_k. \notag
\end{align}
By choosing the signs of $u_k^2 - v_k^2$ and $u_k v_k$, the Hamiltonian is simplified to
\begin{align}
    \hat{H} &= J \sum_k \left( 2E_k g_k^\dagger g_k + 2v_k^2\epsilon_k - 2u_kv_k\Delta_k - \lambda  \right) \notag \\
    &= J \sum_k \left( 2E_k g_k^\dagger g_k + \epsilon_k - E_k - \lambda  \right) \notag \\
    &= J \sum_k \left( 2E_k g_k^\dagger g_k + \cos (ka) - E_k \right) \notag \\
    &= J \sum_k 2E_k  \left( g_k^\dagger g_k  - \frac{1}{2} \right).
\end{align}
where we also utilize the fact $\sum_k \cos(ka) = 0$.

\noindent Bogoliubov transformation can be also used for bosonic system, though the unitary matrix becomes different. Take the diagonalization of the following two-site Hamiltonian as an example 
\begin{align}
    \hat{H} &= \epsilon (c_1^\dagger c_1 + c_2^\dagger c_2) + \lambda (c_1^\dagger c_2^\dagger + c_2 c_1) \notag \\
    &= \frac{1}{2} \left( \begin{array}{cccc}
    c_1^\dagger & c_2 & c_2^\dagger & c_1
    \end{array} \right) H \left( \begin{array}{cc}
    c_1 \\ c_2^\dagger \\ c_2 \\ c_1^\dagger
    \end{array} \right) + \epsilon 
\end{align}
where $\epsilon,\lambda \in \mathbf{R}$ and $H$ is a block-diagonal matrix
\begin{align}
    H = \left( \begin{array}{cccc}
    \epsilon & \lambda   & 0 & 0 \\
    \lambda  & -\epsilon & 0 & 0 \\
    0 & 0 & \epsilon & -\lambda \\ 
    0 & 0 & -\lambda & -\epsilon 
    \end{array}\right).
\end{align}
If $c^\dagger$ and $c$ are fermionic creation and annihilation operators, the fermionic Bogoliubov transformation is 
\begin{align}
    \left( \begin{array}{c}
    c_1 \\ c_2^\dagger \\ c_2 \\ c_1^\dagger
    \end{array} \right) = 
    U_f \left( \begin{array}{c}
    d_1 \\ d_2^\dagger \\ d_2 \\ d_1^\dagger
    \end{array} \right),~~
    U_f = \left( \begin{array}{cccc}
     u & v & 0 &  0 \\ 
    -v & u & 0 &  0 \\
     0 & 0 & u & -v \\
     0 & 0 & v &  u \\
    \end{array} \right)
\end{align}
with the typical choices of $u$  and $v$ being
\begin{align}
     u = \cos \theta,~~ v = \sin \theta,~~\text{s.t.}~~ u^2+v^2 = 1.
\end{align}
When $\tan 2\theta = -\lambda/\epsilon$, $U_f$ diagonalizes $H$ with eigenvalue $\tilde{\epsilon} = \pm \sqrt{\epsilon^2+\lambda^2}$, and the Hamiltonian has the form
\begin{align}
    \hat{H} = \tilde{\epsilon}(d_1^\dagger d_1 + d_2^\dagger d_2) - \epsilon + \tilde{\epsilon}.
\end{align}
If $c^\dagger$ and $c$ are bosonic creation and annihilation operators, then the bonic Bogoliubov transformation is 
\begin{align}
    \left( \begin{array}{c}
    c_1 \\ c_2^\dagger \\ c_2 \\ c_1^\dagger
    \end{array} \right) = 
    U_b \left( \begin{array}{c}
    d_1 \\ d_2^\dagger \\ d_2 \\ d_1^\dagger
    \end{array} \right),~~
    U_b = \left( \begin{array}{cccc}
     u & v & 0 &  0 \\ 
     v & u & 0 &  0 \\
     0 & 0 & u &  v \\
     0 & 0 & v &  u \\
    \end{array} \right)
\end{align}
with the typical choices of $u$  and $v$ being
\begin{align}
     u = \cosh \theta,~~ v = \sinh \theta,~~\text{s.t.}~~ u^2-v^2 = 1,
\end{align}
When $\tanh 2\theta = -\lambda/\epsilon$, $U_b$ diagonalizes $H$ with eigenvalue $\tilde{\epsilon} = \sqrt{\epsilon^2-\lambda^2}$ implying that $|\epsilon| > |\lambda|$ (otherwise the Hamiltonian would be representing a system at an unstable equilibrium point).

%

\section{Trotterized D-UCCSD Ans\"{a}tz for Three-Level Two-Boson Model System}\label{DUCC_boson}

For a three-level two-boson model for which the excitation manifold is partitioned as depicted in Figure \ref{fig:boson_cc}a, the direct D-UCCSD ans\"{a}tz can be written as
\begin{align}
    |\phi\rangle = e^{\sigma_{\#2}}e^{\sigma_{\#1}}|\phi_0\rangle \label{App2:ducc}
\end{align}
with the reference $|\phi_0\rangle$ being denoted by $|200\rangle$ indicating the boson occupation at each level and
\begin{align}
    \sigma_{\#1} &= r_1 \left[ (b_1^\dagger)^2 (b_0)^2 - (b_0^\dagger)^2 (b_1)^2 \right] + r_2 \left( b_1^\dagger b_0 - b_0^\dagger b_1 \right) \notag \\
    \sigma_{\#2} &= s_1 \left[ (b_2^\dagger)^2 (b_0)^2 - (b_0^\dagger)^2 (b_2)^2 \right] + s_2  \left[ b_2^\dagger b_1^\dagger (b_0)^2 \right. \notag \\
    &~~~~ \left. - (b_0^\dagger)^2 b_1 b_2  \right] + s_3 \left( b_2^\dagger b_0 - b_0^\dagger b_2 \right).
\end{align}

Nevertheless, it is challenging to direct implement the unitaries $e^{\sigma_{\#1}}$ and $e^{\sigma_{\#2}}$. In practice, to ease the implementation, we typically employ the Trotterized forms of (\ref{App2:ducc}) with certain ordering to ensure the exactness. For example, consider the following ans\"{a}tz
\begin{align}
    |\phi\rangle = \prod_{i=3}^1 e^{\sigma_{\#2}^i}\prod_{j=2}^1 e^{\sigma_{\#1}^j}|\phi_0\rangle \label{App2:ducc}
\end{align}
where
\begin{align}
    \sigma_{\#1}^1 &= r_1 \left[ (b_1^\dagger)^2 (b_0)^2 - (b_0^\dagger)^2 (b_1)^2 \right], \notag \\
    \sigma_{\#1}^2 &= r_2 \left( b_1^\dagger b_0 - b_0^\dagger b_1 \right), \notag \\
    \sigma_{\#2}^1 &= s_1 \left[ (b_2^\dagger)^2 (b_0)^2 - (b_0^\dagger)^2 (b_2)^2 \right], \notag \\
    \sigma_{\#2}^2 &= s_2  \left[ b_2^\dagger b_1^\dagger (b_0)^2 - (b_0^\dagger)^2 b_1 b_2  \right], \notag \\ 
    \sigma_{\#2}^3 &= s_3 \left( b_2^\dagger b_0 - b_0^\dagger b_2 \right).
\end{align}
Now the question is whether such a representation can represent an arbitrary state $|\Psi\rangle$ for the three-level two-boson model. To see that, we need to check (\textbf{a}) if the above ans\"{a}tz can generate all the possible configurations of the model, and (\textbf{b}) if $\{r_1,r_2,s_1,s_2,s_3\}$ can be determined for $|\Psi\rangle$. 

Regarding (\textbf{a}), realizing that the each exponential operator in (\ref{App2:ducc}) is a Givens rotation, and utilizing the following algebraic relations
\begin{align}
    e^{\sigma_{\#1}^1}|200\rangle &= c_{2r_1} |200\rangle + s_{2r_1}|020\rangle, \label{App2:algebra1} \\
    e^{\sigma_{\#1}^2}|200\rangle &= c_{r_2}^2 |200\rangle + \sqrt{2}s_{r_2}c_{r_2}|110\rangle + s_{r_1}^2 |020\rangle, \\
    e^{\sigma_{\#1}^2}|020\rangle &= s_{r_2}^2 |200\rangle - \sqrt{2}s_{r_2}c_{r_2}|110\rangle + c_{r_2}^2 |020\rangle, \\
    e^{\sigma_{\#2}^1}|200\rangle &= c_{2s_1} |200\rangle + s_{2s_1}|002\rangle, \\
    e^{\sigma_{\#2}^1}|110\rangle &= |110\rangle, \\
    e^{\sigma_{\#2}^1}|020\rangle &= |020\rangle, \\
    e^{\sigma_{\#2}^2}|200\rangle &= c_{\sqrt{2}s_2} |200\rangle + s_{\sqrt{2}s_2}|011\rangle, \\
    e^{\sigma_{\#2}^2}|002\rangle &= |002\rangle, \\
    e^{\sigma_{\#2}^2}|110\rangle &= |110\rangle, \\
    e^{\sigma_{\#2}^2}|020\rangle &= |020\rangle, \\
    e^{\sigma_{\#2}^3}|200\rangle &= c_{s_3}^2 |200\rangle + \sqrt{2}s_{s_3}c_{s_3}|101\rangle + s_{s_3}^2 |002\rangle, \\
    e^{\sigma_{\#2}^3}|002\rangle &= s_{s_3}^2 |200\rangle - \sqrt{2}s_{s_3}c_{s_3}|101\rangle + c_{s_3}^2 |002\rangle, \\
    e^{\sigma_{\#2}^3}|110\rangle &= c_{s_3} |110\rangle + s_{s_3}|011\rangle, \\
    e^{\sigma_{\#2}^3}|011\rangle &= c_{s_3} |011\rangle - s_{s_3}|110\rangle, \label{App2:algebra2}
\end{align}
where $c_p$ and $s_p$ denote $\cos(p)$ and $\sin(p)$, respectively. It is straightforward to see the all the possible configurations of the model can be mapped out through the consecutive Givens rotations (see Figure \ref{fig:conf}).

\begin{figure}
    \centering
    \includegraphics[width=\linewidth]{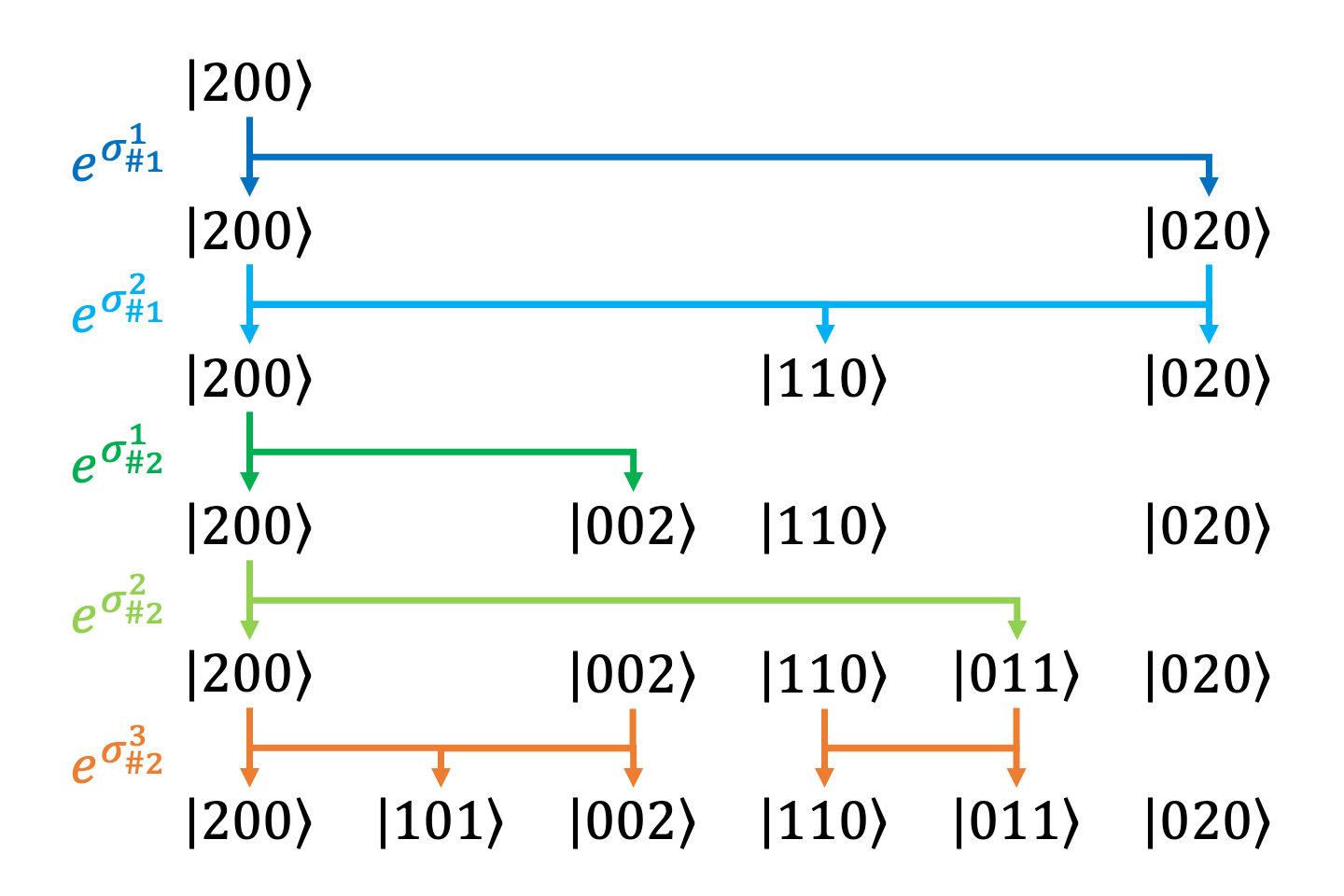}
    \caption{The expansion of the configuration space when sequentially applying the Givens rotation in ans\"{a}tz (\ref{App2:ducc}) to the reference state $|200\rangle$.}
    \label{fig:conf}
\end{figure}

Regarding (\textbf{b}), an arbitrary state $|\Psi\rangle$ can be defined as a linear combination of all six configurations
\begin{align}
    |\Psi\rangle &= d_1 |200\rangle + d_2 |101\rangle + d_3 |002\rangle \notag \\
    &~~~~ + d_4 |110\rangle + d_5 | 011 \rangle + d_6 | 020 \rangle.
\end{align}
with arbitrary $\{d_i\}$ ($i=1,\cdots,6$) satisfying $\sum_{i=1}^6 d_i^2 = 1$. To determine $\{r_1,r_2,s_1,s_2,s_3\}$, we can utilize the reverse flow of Figure \ref{fig:conf} (i.e. from $|\Psi\rangle$ to $|\phi\rangle$), in particular the fact that only one configuration disappears after each reverse Givens rotation, to solve the rotations one by one. We can take the first two step as examples. The first step applies $e^{-\sigma_{\#2}^3}$ on $|\Psi\rangle$ to remove $|101\rangle$. Since the only relevant equations are 
\begin{align}
    d_1 \cdot e^{-\sigma_{\#2}^3}|200\rangle &= d_1 \cdot ( c_{s_3}^2 |200\rangle - \sqrt{2}s_{s_3}c_{s_3}|101\rangle \notag \\
    &~~~~ + s_{s_3}^2 |002\rangle ), \notag \\
    d_2 \cdot e^{-\sigma_{\#2}^3}|101\rangle &= d_2 \cdot ( \sqrt{2}s_{s_3}c_{s_3} |200\rangle + c_{2s_3}|101\rangle \notag \\
    &~~~~ - \sqrt{2}s_{s_3}c_{s_3} |002\rangle ), \notag \\
    d_3 \cdot e^{-\sigma_{\#2}^3}|002\rangle &= d_3 \cdot ( s_{s_3}^2 |200\rangle + \sqrt{2}s_{s_3}c_{s_3}|101\rangle \notag \\
    &~~~~ + c_{s_3}^2 |002\rangle ), 
\end{align}
we can then write
\begin{align}
    &c_{2s_3}d_2 - \sqrt{2}s_{s_3}c_{s_3} (d_1 - d_3) = 0,
\end{align}
from which if $d_2\neq 0$, we can choose one non-singular solution to be
\begin{align}
    s_3 = \tan^{-1} \left( \frac{-\sqrt{2}(d_1-d_3)\pm\sqrt{\Delta}}{2d_2} \right) 
\end{align}
with $\Delta = 2(d_1-d_3)^2 + 4d_2^2 \ge 0$. Then $d_1,d_3,d_4,d_5$ will be updated to $d'_1,d'_3,d'_4,d'_5$ ($d_6$ stays the same) before proceeding to the second step. In the second step, the operation $e^{-\sigma_{\#2}^2}$ removes the configuration $|011\rangle$. Since the only relevant equations are 
\begin{align}
    d'_1 \cdot e^{-\sigma_{\#2}^2}|200\rangle &= d'_1 \cdot ( c_{\sqrt{2}s_2} |200\rangle - s_{\sqrt{2}s_2}|011\rangle ), \notag \\
    d'_5 \cdot e^{-\sigma_{\#2}^2}|011\rangle &= d'_5 \cdot ( s_{\sqrt{2}s_2} |200\rangle + c_{\sqrt{2}s_2}|011\rangle ), 
\end{align} 
from which the parameter $s_2$ can be chosen to be
\begin{align}
    s_2 = \frac{1}{\sqrt{2}}\tan^{-1} \left( \frac{d'_5}{d'_1} \right).
\end{align}
Following the same procedure we can also find the solutions for the remaining parameters. Therefore, both (\textbf{a}) and (\textbf{b}) are satisfied meaning the ans\"{a}tz (\ref{App2:ducc}) is able to represent arbitrary state of the model.

Note that the procedure described above for addressing (\textbf{b}) is similar to the one that is utilized to prove the exactness of the disentangled UCC ans\"{a}tze for fermionic systems (see Ref. \citenum{ExactUCC_19}). Although the explicit Givens rotations, e.g. (\ref{App2:algebra1})-(\ref{App2:algebra2}), are different between fermionic and bosonic systems, there are some common effects after applying a series single and/or double Givens rotations on a fermionic or bosonic state. For example, with certain ordering of these rotations, one can successively generate/eliminate configurations. Regarding the higher order rotations dealing with more bosons, given the fact that the high order excitation can be rewritten as a series of nested commutators of one-body and two-body excitations,\cite{ExactUCC_19} high order Givens rotations can be approximated as the products of one-body and two-body Givens rotations to an arbitrary accuracy. Therefore, one can  generalize the above procedure to generate a Trotterized D-UCCSD ans\"{a}tz to represent any state for a general bosonic system to an arbitrary accuracy.

\bibliography{refs}
\end{document}